\def\>{\rangle}
\def\<{\langle}
\newcommand\numberthis{\addtocounter{equation}{1}\tag{\theequation}}
\DeclareMathOperator{\Tr}{Tr}
\DeclareMathOperator{\Id}{\mathds{1}}
\DeclareMathOperator*{\lprod}{\overleftarrow{\prod}}
\begin{document}

\title{Hardware-efficient ansatz without barren plateaus in any depth}

\author{Chae-Yeun Park}
\affiliation{Xanadu, Toronto, ON, M5G 2C8, Canada}

\author{Minhyeok Kang}
\affiliation{Department of Chemistry, Sungkyunkwan University, Suwon 16419, Korea}
\affiliation{SKKU Advanced Institute of Nanotechnology (SAINT), Sungkyunkwan University, Suwon 16419, Korea}
\affiliation{Institute of Quantum Biophysics, Sungkyunkwan University, Suwon 16419, Korea}

\author{Joonsuk Huh}
\affiliation{Xanadu, Toronto, ON, M5G 2C8, Canada}
\affiliation{Department of Chemistry, Sungkyunkwan University, Suwon 16419, Korea}
\affiliation{SKKU Advanced Institute of Nanotechnology (SAINT), Sungkyunkwan University, Suwon 16419, Korea}
\affiliation{Institute of Quantum Biophysics, Sungkyunkwan University, Suwon 16419, Korea}

\date{\today}

\newtheorem{proposition}{Proposition}
\newtheorem{example}{Example}
\newtheorem{lemma}{Lemma}
\newtheorem{theorem}{Theorem}
\newtheorem{observation}{Observation}
\newtheorem{corollary}{Corollary}
\newtheorem{conjecture}{Conjecture}
\newtheorem*{remark}{Remark}

\begin{abstract}
Variational quantum circuits have recently gained much interest due to their relevance in real-world applications, such as combinatorial optimizations, quantum simulations, and modeling a probability distribution.
Despite their huge potential, the practical usefulness of those circuits beyond tens of qubits is largely questioned.
One of the major problems is the so-called \textit{barren plateaus} phenomenon.
Quantum circuits with a random structure often have a flat cost-function landscape and thus cannot be trained efficiently.
In this paper, we propose two novel parameter conditions in which the hardware-efficient ansatz (HEA) is free from barren plateaus for arbitrary circuit depths.
In the first condition, the HEA approximates to a time-evolution operator generated by a local Hamiltonian. 
Utilizing a recent result by [Park and Killoran, Quantum \textbf{8}, 1239 (2024)], we prove a constant lower bound of gradient magnitudes in any depth both for local and global observables.
On the other hand, the HEA is within the many-body localized (MBL) phase in the second parameter condition.
We argue that the HEA in this phase has a large gradient component for a local observable using a phenomenological model for the MBL system.
By initializing the parameters of the HEA using these conditions, we show that our findings offer better overall performance in solving many-body Hamiltonians.
Our results indicate that barren plateaus are not an issue when initial parameters are smartly chosen, and other factors, such as local minima or the expressivity of the circuit, are more crucial.
\end{abstract}

\maketitle

By combining huge neural networks and parameter optimization techniques, machine learning has achieved great successes in diverse tasks such as image classification~\cite{krizhevsky2017imagenet}, defeating human level in playing games~\cite{silver2016mastering}, natural language processing~\cite{brown2020language}, and predicting protein structures~\cite{jumper2021highly}. 
Inspired by those successes, the same principle has been applied to constructing quantum algorithms.
Variational quantum algorithms (VQAs)~\cite{cerezo2021variational} (including quantum machine learning~\cite{schuld2015introduction}), which optimize parameters of quantum circuits instead of classical neural networks, have emerged as a new method for solving real-world problems.

Despite their promises, the practical usefulness of the VQAs is largely questioned.
One of the main problems is the trainability of quantum circuits.
When parameters are randomly sampled, quantum circuits often have flat cost-function landscapes and cannot be efficiently trained.
This problem, dubbed barren plateaus~\cite{mcclean2018barren}, is expected to prevail among sufficiently expressive quantum circuit ans\"{a}tze~\cite{holmes2022connecting}.
Thus, a deep understanding of barren plateaus is essential for devising an efficient variational algorithm.

For this purpose, a number of studies have suggested quantum circuit ans\"{a}tze without barren plateaus~\cite{pesah2021absence,larocca2022diagnosing,liu2022presence,martin2023barren,barthel2023absence,zhang2023absence}.
However, less is known about how expressive these ans\"{a}tze are. 
Most such circuits are even classically simulable~\cite{cerezo2023does}, and implementing them on quantum hardware is not straightforward~\cite{skolik2023equivariant,east2023all,schatzki2023theoretical}, either.

An alternative (and intuitive) solution is initializing circuits' parameters to provide large gradients.
Indeed, finding good initial parameters is one of the most effective solutions to the vanishing gradient problem in classical neural networks~\cite{glorot2010understanding,he2015delving}.
Likewise, studies have shown that a quantum model can also have large initial gradients when parameters are initialized smartly~\cite{grant2019initialization,jain2021graph,mele2022avoiding,sack2022avoiding,zhang2022gaussian,wang2023trainability,rudolph2023synergistic,park2024hamiltonian}.
However, most of the suggested initialization methods cannot be easily applied to large and deep circuits as they rely on heuristics developed from small circuits~\cite{jain2021graph,mele2022avoiding} or the proven lower bounds of gradient magnitudes are still too small for a deep circuit~\cite{zhang2022gaussian,wang2023trainability}.

In this paper, we propose two novel parameter conditions such that the hardware efficient ansatz (HEA)~\cite{peruzzo2014variational} has large gradients.
Since the HEA utilizes a natural entangling gate provided by the hardware, it is the most suitable quantum circuit ansatz for noisy quantum devices~\cite{preskill2018quantum}. Interestingly, the HEA with a Clifford entangling gate is also preferable to a fault-tolerant quantum computer, as Clifford and single-qubit gates can be implemented with logical qubits rather easily than a parameterized entangling gate~\cite{fujii2015quantum}.
Still, as it is not problem-tailored, the HEA is often expected to be more prone to barren plateaus~\cite{wiersema2020exploring,cerezo2021cost}.
Existing solutions for suppressing barren plateaus in the HEA indeed have certain limitations.
For example, Ref.~\cite{zhang2022gaussian} showed that barren plateaus do not exist for a local observable when parameters are sampled from a Gaussian distribution with a small variance.
However, the authors only considered the HEA with the one-dimensional connectivity, and their lower bound of gradient magnitudes still decays exponentially with the number of qubits for a global observable.

In contrast, the parameter regimes we present here give \textit{constant} gradient magnitudes regardless of circuit depth or the geometry of a circuit.
Our first condition is based on Park and Killoran~\cite{park2024hamiltonian}, which obtained a condition for large gradients when all circuit gates are parameterized.
By removing non-parameterized entangling gates from the HEA using circuit identities, we prove that the same condition applies to the HEA with both local and global observables.
Our second parameter regime is built upon a recent finding~\cite{shtanko2023uncovering} that interprets the HEA to a many-body localized (MBL) system.
We utilize a phenomenological theory of MBL systems~\cite{huse2014phenomenology} to show that the HEA in the MBL phase does not have barren plateaus for a local observable.
Our second parameter condition may have an advantage over the first one as it allows large initial parameters for the diagonal gates.


\begin{figure}
    \centering
    \includegraphics[width=0.98\linewidth]{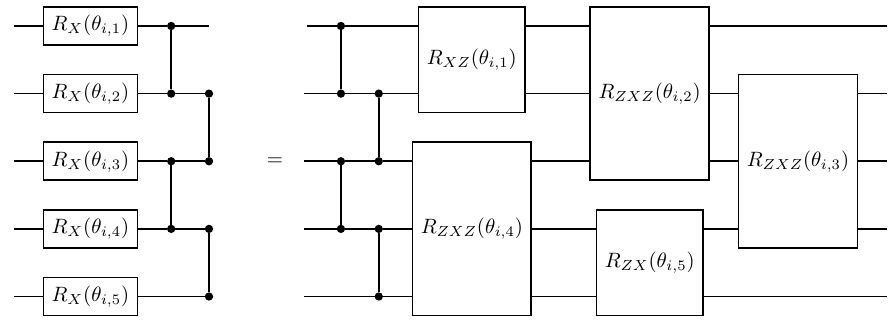}
    \caption{Circuit identity used for removing CZ gates from the HEA. Using the property that the CZ gate is a Clifford gate, we can move CZ gates in each block to the beginning of the block. }
    \label{fig:circ_identity}
\end{figure}

\textit{Hardware efficient ansatz}.--
We consider the HEA for a system with $N$ qubits defined on a finite-dimensional lattice.
Let us define a vector of all parameters $\pmb{\theta}=\{\theta_{i,j}\}$. Then, the output state of the HEA is given by
\begin{align}
    \ket{\psi(\pmb{\theta})} = V(\pmb{\theta}_{p,:}) \cdots V(\pmb{\theta}_{1,:}) \ket{\psi_0},
\end{align}
where $\pmb{\theta}_{i,:} = \{\theta_{i,1},\cdots, \theta_{i,2N}\}$ is a subvector of $\pmb{\theta}$ and $p$ is a parameter determining the total depth of the circuit.
In addition, $V(\pmb{\theta}_{i,:})$ is a unitary operator defined as
\begin{align}
    V(\pmb{\theta}_{i,:}) = \prod_{\braket{j, j'} \in E} W_{j,j'}\prod_{j=1}^N e^{-i Z_j \theta_{i,j+N}/2} \prod_{j=1}^N e^{-i X_j \theta_{i,j}/2}, \label{eq:hea_block}
\end{align}
where $E$ is the set of edges in the lattice, $W_{j,j'}$ is a two-qubit gate between sites $j$ and $j'$, and $\{X_j, Y_j, Z_j\}$ are Pauli operators acting on the $j$-th qubit.
Throughout the paper, we mainly consider $W=\mathrm{CZ} = \mathrm{diag}(1, 1, 1, -1)$, which is a natural entangling gate for major quantum computing platforms~\cite{cirac1995quantum,jaksch2000fast,chow2011simple,figgatt2019parallel,kim2023evidence,evered2023high}.
Still, our arguments can be extended to the HEA with other mutually commuting Clifford entangling gates.

In the VQAs, a cost function is typically given by $C(\pmb{\theta}) = \braket{\psi(\pmb{\theta})|O|\psi(\pmb{\theta})}$ where $O$ is an observable.
A parameterized circuit has barren plateaus with respect to a given parameter distribution $\mathcal{D}$ if $\mathbb{E}_{\pmb{\theta}\sim\mathcal{D}}[(\partial_{ij}C)^2]$ is exponentially small with $N$ for all $i,j$, where $\partial_{ij}C:=\partial C/\partial \theta_{i,j}$.
For completely random parameters, i.e., each $\theta_{i,j}$ is sampled from $\mathcal{U}_{[-\pi, \pi]}$, the HEA has barren plateaus when one of the following conditions is satisfied: (1) $p=\mathrm{poly}(N)$ and $O$ is acting on a constant number of qubits, (2) $p = \Omega(1)$ and $O$ is acting on $\Theta(N)$ sites~\cite{cerezo2021cost}.

\textit{Large gradients with small parameters}.--
Assuming that at least one of the gradient components is constant when all parameters are zero, we prove that a parameter constraint exists such that the gradient can be bounded below by a constant.

\begin{theorem} \label{thm:large_gradients_constraints}
    Let $C(\pmb{\theta}) = \braket{\psi(\pmb{\theta})|O|\psi(\pmb{\theta})}$ be the cost function where $O$ is either a Pauli string or $k$-local Hamiltonian.
    Suppose that there exist $n,m$ such that $|\partial_{n,m} C |_{\pmb{\theta}=0} = \Omega(1)$.
    Then, there exists a constant $\gamma > 0$ such that $|\partial_{n,m} C| =\Omega(1)$ when $0 \leq \theta_{i,j} \leq \gamma / (pN)$ is satisfied for all $i$ and $j$.
\end{theorem}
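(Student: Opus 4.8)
The plan is to use the large gradient at the origin as an anchor and propagate it across the entire box $\{0\le \theta_{i,j}\le \gamma/(pN)\}$ by a first-order continuity estimate. Write $g(\pmb{\theta}):=\partial_{n,m}C(\pmb{\theta})$, so the hypothesis reads $|g(0)|\ge c$ for some $N$- and $p$-independent constant $c>0$. Integrating along the straight segment from the origin to $\pmb{\theta}$ (which stays inside the convex box), the fundamental theorem of calculus gives
\begin{align}
 g(\pmb{\theta})-g(0)=\int_0^1\sum_{(k,l)}\theta_{k,l}\,\partial_{k,l}\partial_{n,m}C(t\pmb{\theta})\,dt ,
\end{align}
and H\"{o}lder's inequality yields $|g(\pmb{\theta})-g(0)|\le \|\pmb{\theta}\|_1\,B$, where $B:=\max_{(k,l)}\sup_{\pmb{\theta}'}|\partial_{k,l}\partial_{n,m}C(\pmb{\theta}')|$ bounds the Hessian entries. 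The crucial bookkeeping is that the circuit carries exactly $2Np$ parameters, each at most $\gamma/(pN)$, so $\|\pmb{\theta}\|_1\le 2\gamma$; the factor $1/(pN)$ in the constraint is precisely what cancels the parameter count. Hence if $B=O(1)$ we may set $\gamma=c/(4B)$ and conclude $|g(\pmb{\theta})|\ge c-2\gamma B=c/2=\Omega(1)$, which is the claim.

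Everything therefore reduces to a uniform bound $B=O(1)$ on the second derivatives. Here I would use that each parameter enters through a gate $e^{-iG_a\theta_a/2}$ with $G_a\in\{X_j,Z_j\}$ a single-qubit Pauli of unit norm, while the $\mathrm{CZ}$ gates are parameter-free. Writing $C=\braket{\psi_0|U(\pmb{\theta})^\dagger O\,U(\pmb{\theta})|\psi_0}$ and differentiating the Heisenberg operator $U^\dagger O U$, every $\partial_a$ inserts a conjugated commutator $\tfrac{i}{2}[G_a,\cdot]$. Since $\|[G_a,A]\|\le 2\|A\|$ and conjugation by unitaries preserves the operator norm, each differentiation multiplies the norm by at most $\tfrac12\cdot 2=1$, independently of how many gates the circuit contains. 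Consequently
\begin{align}
 |\partial_{k,l}\partial_{n,m}C|\le \bigl\|\partial_{k,l}\partial_{n,m}(U^\dagger O U)\bigr\|\le \|O\| ,
\end{align}
a bound manifestly independent of the depth $p$ --- this is the mechanism behind ``any depth.'' For a Pauli string $\|O\|=1$, so $B\le 1$, $\gamma=\Theta(1)$, and the conclusion follows immediately.

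The main obstacle is the $k$-local Hamiltonian $O=\sum_\alpha c_\alpha P_\alpha$. The commutator estimate still applies term by term and remains depth-independent, but it only delivers $B\le\|O\|\le\sum_\alpha|c_\alpha|$, which can grow with $N$ for an extensive Hamiltonian and would force $\gamma=c/(4\|O\|)$ to shrink. Two points must be checked. First, that the constant $\gamma$ in the statement is genuinely $N$-independent: this holds whenever $\|O\|=O(1)$ (a bounded-norm or normalized observable, $\sum_\alpha|c_\alpha|=O(1)$), in which case the argument closes exactly as for a Pauli string. Second, even when $\|O\|=\mathrm{poly}(N)$ one still obtains a non-vanishing gradient $|\partial_{n,m}C|=\Omega(1)$ on the correspondingly smaller window with $\gamma=\Theta(1/\|O\|)$, so barren plateaus are avoided in the strict sense (no exponential decay) regardless. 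Pinning down the sharp, simultaneously depth- and width-independent constant for the extensive $k$-local case --- ideally by showing that only an $O(1)$-weight combination of the terms $P_\alpha$ contributes to the fixed second derivative $\partial_{k,l}\partial_{n,m}C$, rather than bounding each term separately --- is the step I expect to require the most care.
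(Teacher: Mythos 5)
Your argument for the Pauli-string case is correct, and it is genuinely different from --- and considerably more elementary than --- the paper's proof. The paper never bounds a Hessian: it first removes the CZ layers with a Clifford circuit identity (turning the HEA into a circuit whose every gate is parameterized, with generators $X_j$ or $X_j\prod_{l\in\mathcal{N}(j)}Z_l$), groups each layer's commuting generators into a layer Hamiltonian, approximates the two half-circuits flanking the differentiated gate by time-independent evolutions $e^{-iH_At_A}$, $e^{-iH_Bt_B}$ via a truncated Floquet--Magnus expansion, and finally applies a ``speed-limit'' lemma controlling how fast $\partial_{n,m}C$ can drift from its value at $\pmb{\theta}=0$. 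Your mean-value/Hessian route instead exploits the specific feature that every HEA generator is a unit-norm single-qubit Pauli: each differentiation inserts $\tfrac{i}{2}[G_a,\cdot]$ inside unitary conjugations, so every mixed second derivative is bounded by $\Vert O\Vert$ uniformly in $p$ and $N$, while the $\ell_1$ radius of the box is $2Np\cdot\gamma/(pN)=2\gamma$. For $\Vert O\Vert=O(1)$ this closes the proof with $\gamma=c/(4\Vert O\Vert)$, with no Floquet--Magnus error analysis, no restriction to commuting or Clifford entanglers, and no lattice assumption at all. For that case your proof is a real simplification.

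The gap is exactly where you place it, and it is not cosmetic: the theorem also covers $k$-local Hamiltonians, which in the paper's reading are extensive, $\Vert O\Vert=\Theta(N)$, so your bound $B\le\Vert O\Vert$ forces $\gamma=\Theta(1/N)$ and only yields the weaker window $\theta_{i,j}\lesssim 1/(pN^2)$; that does not establish the stated theorem. Your proposed repair --- that only an $O(1)$-weight combination of the terms of $O=\sum_\alpha O_\alpha$ contributes to a fixed second derivative --- is the right intuition but is the entire difficulty: at a generic point inside the box the conjugations spread each $O_\alpha$, a crude replacement of the circuit pieces by their zero-parameter Cliffords costs $O(\gamma)$ per term, hence $O(\gamma N)$ in total, and showing the true sum is $O(1)$ requires a locality (Lieb--Robinson or cluster-expansion type) argument tracking how contributions decay with the distance between $\mathrm{supp}(O_\alpha)$ and the differentiated gates. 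This is precisely what the paper's machinery supplies in different packaging: in its speed-limit lemma (Lemma A.1) the extensive factor $K=\max\{\Vert H_B\Vert,\Vert[H_A,O]\Vert\}=\Theta(N)$ is always paired with a local factor $Q=\max\{\Vert[G^{(n,m)},O]\Vert,\Vert G^{(n,m)}\Vert\}=\Theta(1)$, using the identity $[H_A,e^{iH_At}Oe^{-iH_At}]=e^{iH_At}[H_A,O]e^{-iH_At}$ so that a commutator with a local operator is taken before any norm is paid; arranging for such time-independent generators $H_A,H_B$ is exactly what the Floquet--Magnus step is for. So keep your argument as a clean standalone proof for Pauli strings and bounded-norm observables, but for extensive $k$-local $O$ you must either carry out the locality analysis you sketch or fall back on the paper's decomposition.
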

See Appendix~\ref{app:small_param_constant_grad} for a proof. 
Theorem~\ref{thm:large_gradients_constraints} is based on Ref.~\cite{park2024hamiltonian}, which proved the existence of a parameter condition such that a circuit has a constant gradient component when all gates are parameterized.
We use a circuit identity given in Fig.~\ref{fig:circ_identity} to translate the HEA into a circuit without non-parameterized entangling gates.
The circuit identity enables us to move the CZ layer in the $2i$-th block to the beginning of the block, which cancels the CZ layer in the $2i-1$-th block.
The resulting circuit only has parameterized gates (for even $p$) generated by, at most, $k$-local operators (acting on at most $k$ nearby sites), where $k$ is determined by the connectivity of the original circuit.
This procedure recovers a setup used in Ref.~\cite{park2024hamiltonian}.
The same argument also works for odd $p$, which remains a single layer of CZ gates acting on the initial state (see Appendix~\ref{app:small_param_constant_grad}).
In addition, we note that one can easily find a product state $\ket{\psi_0}$ satisfying $|\partial_{n,m} C |_{\pmb{\theta}=0} = \Omega(1)$ when $O$ is a Pauli string.

Theorem~\ref{thm:large_gradients_constraints} can be compared to the main result of Ref.~\cite{zhang2022gaussian}, which proved that the magnitudes of the gradient could be lower bounded by $\Theta[(pS)^{-S}]$ when the parameters are drawn from $\mathcal{N}(0, [1/\sqrt{4pS}]^2)$ (in our notation).
Here, $S$ is the weight of the observable $O$, which counts the number of qubits that $O$ acts non-trivially (e.g., $X_1Z_5Z_7$ has $S=3$).
Three major differences are listed as follows.
First, the parameters are smaller for Theorem~\ref{thm:large_gradients_constraints}, $\Theta[1/(pN)]$ versus $\Theta(1/\sqrt{pN})$, but the lower bound is much bigger, $\Theta(1)$ versus $\Theta[(pS)^{-S}]$.
Second, Ref.~\cite{zhang2022gaussian} only considered the 1D HEA, whereas our theorem applies to any finite-dimensional lattices, including the heavy-hexagon lattice upon which IBM's recent quantum processors are implemented~\cite{heavy-hex-lattice}.
Finally, Theorem~\ref{thm:large_gradients_constraints} allows additional gates applied to the initial state, such as data-encoding gates widely used in quantum machine learning setup, as long as the circuit has large gradients when all trainable parameters are zero (see Appendix~\ref{app:machine_learning}).

\textit{Floquet-MBL initialization for a large gradient}.--
One of the potential limitations of the previous parameter condition is that the parameters are too small when applied to a circuit with a large depth.
In this case, the parameters between each instance are nearly the same, and an advantage of the randomness in initial parameters~\cite{wessels1992avoiding} may be lost.

Our second parameter condition overcomes this problem by allowing the parameters for the RZ gates to be random in $\mathcal{U}_{[-\pi, \pi]}$.
Formally, the parameter condition is written as follows:
\begin{gather}
    \vartheta_i = \theta_{i,j} \text{ for all } 1 \leq j \leq N \text{ and } 0 \leq \vartheta_i \leq \vartheta_{c} \text{ for all } i, \nonumber \\
    \quad \theta_{i,j} \sim \mathcal{U}_{[-\pi, \pi]} \text{ for all } N+1 \leq j  \leq 2N, \label{eq:mbl_cond}
\end{gather}
where $\vartheta_{c}$ is the critical point between the chaotic and the MBL phases. For the 1D HEA, we find $0.13 \lesssim \vartheta_{c} \lesssim 0.16$ (see Appendix~\ref{app:floquet-mbl}).

Our circuit is in the MBL phase when the parameters satisfy this condition.
A phenomenological theory of the MBL~\cite{huse2014phenomenology} suggests that one can find a Hamiltonian $H_{\rm MBL}$ such that
\begin{align}
    V(\pmb{\theta}_{k,:}) \cdots V(\pmb{\theta}_{1,:}) = e^{-i H_{\rm MBL} kT},
\end{align}
for any $k \geq 1$ and a constant $T$, where $H_{\rm MBL}$ can be written as
\begin{align}
    H_{\rm MBL} &= \sum_{i=1}^N J_i \tau_{i}^z +  \sum_{i \neq j} J_{ij} \tau_{i}^z \tau_{j}^z \nonumber \\
    & \qquad +\sum_{\text{all distinct }i,j,k } J_{ijk} \tau_{i}^z \tau_{j}^z \tau_{k}^z +\cdots. \label{eq:mbl_phenom}
\end{align}
Here, $\tau_{i}^z$ is a local integral of motion, which has a finite overlap with $Z_i$.

Let us consider the gradient component for $\theta_{p,1}$ when $O=Y_1$ and $\ket{\psi_0} = \ket{0^N}$ is used. From the definition of the cost function, we obtain
\begin{align}
    \frac{\partial C}{\partial \theta_{p,1}} = \frac{i}{2} \braket{0^N | U_{[1:p-1]}^\dagger [X_1, \widetilde{Y_1}] U_{[1:p-1]} | 0^N},
\end{align}
where $U_{[1:p-1]}=V(\pmb{\theta}_{p-1,:})\cdots V(\pmb{\theta}_{1,:})$ is a subcircuit of the HEA and $\widetilde{Y_1} := V(\pmb{\theta}_{p,:})^\dagger Y_1 V(\pmb{\theta}_{p,:})$.
After some steps, the following expression is obtained:
\begin{align*}
    [X_1, \widetilde{Y_1}] &= 2i \cos(\theta_{p,N+1}) [\cos(\vartheta_p) Z_1 + \sin(\vartheta_p) Y_1] \\
    & \quad \times \prod_{j \in \mathcal{N}(1)} [\cos(\vartheta_p) Z_i + \sin(\vartheta_p) Y_i], \numberthis
\end{align*}
where $\mathcal{N}(i)$ is a set of all neighbors of $i$ in a given lattice (see Appendix~\ref{app:grad_scaling_mbl} for details).

\begin{figure}[t]
    \centering
    \includegraphics[width=0.85\linewidth]{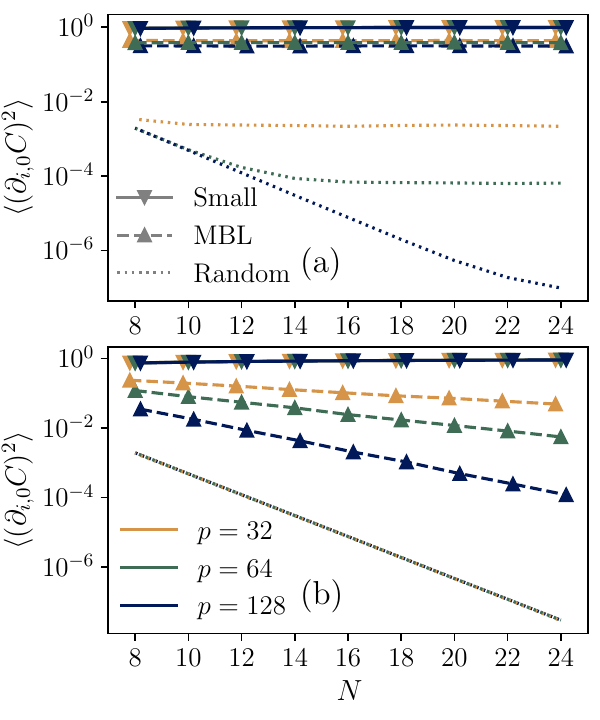}
    \caption{Averaged squared gradients as functions of $N$ for $p \in [32, 64, 128]$. Observables (a) $O=Y_1$ and (b) $O=Y_1 \prod_{j =2}^N Z_j$ are used. Each data point presents the averaged gradient components for the RX gate acting on the first qubit, $\sum_{i=1}^p (\partial_{i,0}C)^2/p$.
    For each parameter initialization scheme, results are averaged over $2^{10}$ randomly sampled parameters.
    For the Small initialization, the gradient magnitudes do not decay with $N$ regardless of the observable.
    On the other hand, the MBL initialization shows $\Theta(1)$ gradient magnitudes when a local observable is used, whereas they decay exponentially for a global observable.
    }
    \label{fig:grads_compare_1d}
\end{figure}

In summary, the gradient is expressed as the sum of multi-point correlation functions.
From the fact that $Z_i$ has a finite overlap with $\tau^i_z$, and the correlation functions involving Pauli-$Y$ operators are relatively small in the MBL systems~\cite{serbyn2014quantum}, we obtain
\begin{align}
    \frac{\partial C}{\partial \theta_{p,N+1}} \approx - \cos(\theta_{p,N+1}) \bigl[ A^2\cos(\vartheta_p) \bigr]^{1+|\mathcal{N}(1)|}
\end{align}
for sufficiently large $p$ (see Appendix~\ref{app:grad_scaling_mbl} for details). 
Here, $A=\Tr[\tau^i_z Z_i]/2^N$ quantifies the overlap between two operators and is independent of $N$ and $p$.
Hence, we obtain $\mathbb{E}_{\pmb{\theta}}[(\partial C/\partial \theta_{p,N+1})^2] \approx \bigl[ A^2\cos(\vartheta_p) \bigr]^{1+|\mathcal{N}(1)|}/2$.
This implies that the HEA in the MBL phase does not have barren plateaus in any depth for this observable.
One can also repeat the same calculation for other observables.
For a global observable $O=Y_1\prod_{j=2}^N Z_j$, we obtain $\mathbb{E}_{\pmb{\theta}}[(\partial C/\partial \theta_{p,N+1})^2] \approx \bigl[ A^2\cos(\vartheta_p) \bigr]^{N-|\mathcal{N}(1)|}/2$, which decays exponentially with $N$ (see Appendix~\ref{app:grad_scaling_mbl} for details).

There is a subtlety in applying our argument here to the HEA in a higher dimensional lattice.
Recent studies have claimed that the MBL phase does not exist in the thermodynamic limit when the dimension is larger than one (see, e.g., Ref.~\cite{thiery2018many}).
However, as one can still observe a signature of the MBL transition for a finite-size system~\cite{wahl2019signatures},
we expect that our MBL parameter condition would work even in a higher dimensional HEA for system sizes tractable to intermediate-scale quantum computers.

\textit{Numerical simulations}.--
We numerically compare the magnitudes of the gradients when parameters are randomly sampled from the following distributions. (1) \textbf{Small}: All parameters are drawn from $\mathcal{U}_{[0, \pi/(pN)]}$, (2) \textbf{MBL}: Parameters follow Eq.~\eqref{eq:mbl_cond} with $\vartheta_i \in \mathcal{U}_{[0, 0.1]}$, and (3) \textbf{Random}: All parameters are completely random, i.e., $\theta_{i,j} \sim \mathcal{U}_{[0, 2\pi]}$ for all $i,j$.

We use two observables $O=Y_1$ and $O=Y_1 \prod_{j=2}^N Z_j$, and the initial state given by $\ket{\psi_0}=\ket{0^N}$.
Simple computation gives that $\partial_{i,0}C|_{\pmb{\theta} = 0} = -1$ for both the observables regardless of $i$.
From Theorem~\ref{thm:large_gradients_constraints}, we expect that our first parameter scheme (Small) could give large gradients~\footnote{The value of $\gamma$ obtained from the proof of Theorem~\ref{thm:large_gradients_constraints} can be smaller than $\pi$. However, since many inequalities we have used for the proof are not tight, we regard Theorem~\ref{thm:large_gradients_constraints} as a rough estimation of the order of such a parameter condition. The exact scaling of the parameter condition can be further investigated numerically. See also Appendix~\ref{app:additional_numerical_results}.}.
On the other hand, the MBL parameter scheme will give a constant magnitude of $\partial_{p,0}C$ for $O=Y_1$, whereas the magnitude would decay exponentially with $N$ for sufficiently large $p$ when $O=Y_1 \prod_{j=1}^N Z_j$ is used.

The scaling behaviors of gradients for $O=Y_1$ from these schemes are plotted in Fig.~\ref{fig:grads_compare_1d}(a).
We observe that gradient magnitudes do not decay with $N$ when parameters are initialized following the small or MBL scheme, which is consistent with our theoretical investigations.
On the other hand, when parameters are completely random, gradient magnitudes decay exponentially with $N$ for small $N$, and they saturate after $N=N_0(p)$.
This is consistent with previous observations in Refs.~\cite{mcclean2018barren,cerezo2021cost,park2024hamiltonian}.

We also plot results for $O=Y_1 \prod_{j=1}^N Z_j$ in Fig.~\ref{fig:grads_compare_1d}(b), which shows that the magnitudes of the gradients are constant for the small parameter scheme, whereas they decay exponentially with $N$ for the MBL distribution.
These also agree with our theoretical expectations.
In addition, the random parameter scheme gives exponential decay of the gradient magnitudes regardless of $p$.
This is also consistent with Ref.~\cite{cerezo2021cost}, which proved that barren plateaus appear in $\Omega(1)$ depth for a global observable. 
Lastly, while both the MBL and random parameter schemes show exponentially decaying gradients, the decaying rate is much lower for the MBL scheme.
This suggests that the MBL parameter scheme may still give practical advantages even for a global observable when $p$ and $N$ are not too large.

In Appendix~\ref{app:additional_numerical_results}, we present additional numerical results for the 2D HEA. We also compare our parameter conditions and the random Gaussian initialization suggested in Ref.~\cite{zhang2023absence}.

\begin{figure}
    \centering
    \includegraphics[width=0.85\linewidth]{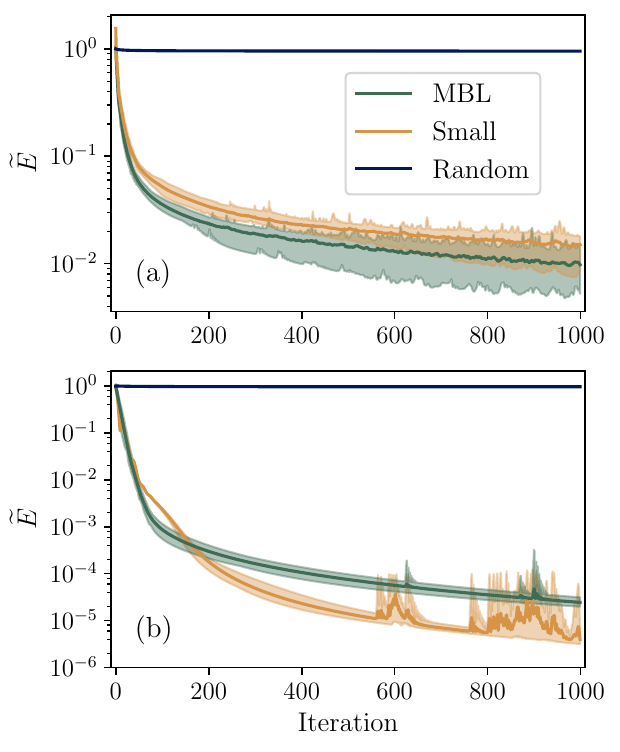}
    \caption{Normalized energies $\widetilde{E} = (\braket{H_{1,2}} - E_{\rm GS})/|E_{\rm GS}|$ as functions of optimization steps for (a) the Heisenberg model ($H_1$) and (b) the cluster model ($H_2$) with external fields. The HEA with $N=20$ and $p=256$ is used. We optimize the parameters using Adam~\cite{kingma2014adam} with learning rates (a) $\eta = 0.005$ and (b) $\eta = 0.001$, which are chosen from hyperparameter optimizations. For each initialization scheme, we run $16$ independent VQE instances. Solid curves show the averaged values for each step, while the shaded regions indicate the range between the worst and best-performing instances.}
    \label{fig:vqe_learning_curves}
\end{figure}

\textit{Solving quantum many-body Hamiltonians}.--
We now solve the ground state problem of two Hamiltonians by simulating variational quantum eigensolvers (VQEs).
We consider the one-dimensional Heisenberg and cluster models with external fields given by
\begin{align}
    H_1 &= \sum_{i=1}^{N-1} \Bigl[ X_i X_{i+1} + Y_i Y_{i+1} + Z_i Z_{i+1} \Bigr] + h_1 \sum_{i=1}^N Z_i \\
    H_2 &= -\sum_{i=2}^{N-2} Z_{i-1} X_{i} Z_{i+1}  - X_1 Z_2 - Z_{N-1} X_N - h_2 \sum_{i=1}^N Z_i,
\end{align}
with the strength of external fields $h_1=h_2=1$.

We use the cost functions given by the expectation value of the Hamiltonians for the output states of circuits (i.e., $C=\braket{H_{1,2}}$) and optimize the parameters using Adam~\cite{kingma2014adam} with exactly computed gradients.
We choose $\ket{\psi_0} = \ket{y;+}^{\otimes N}$ as the initial state of the circuit since the gradient components for RX gates are $\Omega(1)$ when all parameters are zero with this choice, i.e., $|\partial_{n,m}C|_{\pmb{\theta}=0} = \Omega(1)$ for all $n\in \{1, \cdots, N\}$ and $1 \leq m \leq N$.

For the 1D HEA with $N=20$ and $p=256$, the learning curves are plotted in Fig.~\ref{fig:vqe_learning_curves}. We choose such relatively large $p$ to ensure that gradients are sufficiently small when parameters are completely random.
The results show that the circuits initialized following our parameter schemes, Small and MBL, provide much better convergence than Random.
However, the best initialization methods depend on the Hamiltonians: The MBL initialization outperforms the Small initialization for $H_1$, but the opposite holds for $H_2$.
We also observe that the HEA finds the nearly exact ground state for $H_2$ (with $\widetilde{E}\approx 3 \times 10^{-6}$), whereas the results for $H_1$ are relatively poor.
These differences are from the expressivity of the HEA for the target problems.
The ground state of $H_1$ lies within the subspace $J_z=0$ where $J_z=\sum_{i=1}^N Z_i$ is the total spin operator.
However, the HEA cannot capture this symmetry, and the output state always overlaps with subspaces with other $J_z$ values.
On the other hand, the HEA is a natural ansatz~\cite{wecker2015progress,hadfield2019quantum,park2021efficient} for $H_2$, which is from our circuit identity (Fig.~\ref{fig:circ_identity}).
These results indicate that our parameter initialization sufficiently avoids barren plateaus, and the expressivity of the circuit or local minima can be more critical problems determining the performance of quantum variational algorithms.

In Appendix~\ref{app:machine_learning}, we present additional results solving a machine-learning problem using the HEA. We show that our parameter initialization schemes, Small and MBL, offer better performance also for a binary classification task, a typical supervised learning problem.

\textit{Conclusion and outlook}.--
We found two novel parameter conditions where the hardware-efficient ansatz (HEA) does not have barren plateaus.
In contrast to other known conditions, our ones provide a constant gradient regardless of circuit depth.

In addition to the practical advantage of initializing the HEA, our second parameter condition can be a counterexample to a recent claim~\cite{cerezo2023does} that all barren plateaus free ans\"{a}tze are classically simulable~\footnote{In contrast, the HEA within the first parameter condition is classically simulable up to an inverse polynomial additive error using algorithms developed in Refs.~\cite{bravyi2021classical,coble2022quasi}. However, we expect that one can extend Theorem~\ref{thm:large_gradients_constraints} to an arbitrary graph (instead of a $D$-dimensional lattice), which makes those algorithms cannot be applied.}.
This is because there is no known classically efficient algorithm for simulating the MLB system for an exponentially long time, and our argument based on the phenomenological theory of the MBL systems~\cite{huse2014phenomenology} guarantees a gradient component with a constant magnitude even at an exponentially long time.
Still, rigorous proof of this argument should be addressed in future work as it requires careful complexity analysis.

Our MBL parameter condition also raises an interesting question on the role of entanglement in barren plateaus. It is often assumed that entanglement volume-law implies the onset of barren plateaus~\cite{marrero2021entanglement,sack2022avoiding,leone2022practical}.
However, our results suggest that it is not always the case as a long-time-evolved state in the MBL system follows the volume-law of entanglement~\cite{bardarson2012unbounded}, whereas our argument guarantees a constant magnitude gradient component at any time.

\textit{Note added.--}
In the days prior to the submission of this manuscript, a preprint~\cite{shi2024avoiding} proposing an initialization scheme for the HEA that provides $\Theta(1)$ gradient norm is uploaded. However, their bound is only proved for the HEA with the one-dimensional connectivity.

\textit{Acknowledgements}.--
CYP thanks Kyunghyun Baek for the initial discussion and Joseph Bowles, Nathan Killoran, Korbinian Kottmann, and Maria Schuld for helpful comments.
This research used resources of the National Energy Research Scientific Computing Center, a DOE Office of Science User Facility supported by the Office of Science of the U.S. Department of Energy under Contract No. DE-AC02-05CH11231 using NERSC award NERSC DDR-ERCAP0025705.
This work was partly supported by the Basic Science Research Program through the National Research Foundation of Korea (NRF), funded by the Ministry of Education, Science and Technology (NRF-2022M3H3A106307411, NRF-2023M3K5A1094805, and NRF-2023M3K5A109481311) and Institute for Information \& communications Technology Promotion (IITP) grant funded by the Korea government(MSIP) (No. 2019-0-00003, Research and Development of Core technologies for Programming, Running, Implementing and Validating of Fault-Tolerant Quantum Computing System). 
JH acknowledges Xanadu for hosting his sabbatical year visit.
Numerical simulations were performed using \textsc{PennyLane}~\cite{bergholm2018pennylane} software package with \textsc{Lightning}~\cite{Lightning} plugin.

\medskip
\bibliography{references.bib}

\begin{thebibliography}{74}%
\makeatletter
\providecommand \@ifxundefined [1]{%
 \@ifx{#1\undefined}
}%
\providecommand \@ifnum [1]{%
 \ifnum #1\expandafter \@firstoftwo
 \else \expandafter \@secondoftwo
 \fi
}%
\providecommand \@ifx [1]{%
 \ifx #1\expandafter \@firstoftwo
 \else \expandafter \@secondoftwo
 \fi
}%
\providecommand \natexlab [1]{#1}%
\providecommand \enquote  [1]{``#1''}%
\providecommand \bibnamefont  [1]{#1}%
\providecommand \bibfnamefont [1]{#1}%
\providecommand \citenamefont [1]{#1}%
\providecommand \href@noop [0]{\@secondoftwo}%
\providecommand \href [0]{\begingroup \@sanitize@url \@href}%
\providecommand \@href[1]{\@@startlink{#1}\@@href}%
\providecommand \@@href[1]{\endgroup#1\@@endlink}%
\providecommand \@sanitize@url [0]{\catcode `\\12\catcode `\$12\catcode
  `\&12\catcode `\#12\catcode `\^12\catcode `\_12\catcode `\%12\relax}%
\providecommand \@@startlink[1]{}%
\providecommand \@@endlink[0]{}%
\providecommand \url  [0]{\begingroup\@sanitize@url \@url }%
\providecommand \@url [1]{\endgroup\@href {#1}{\urlprefix }}%
\providecommand \urlprefix  [0]{URL }%
\providecommand \Eprint [0]{\href }%
\providecommand \doibase [0]{https://doi.org/}%
\providecommand \selectlanguage [0]{\@gobble}%
\providecommand \bibinfo  [0]{\@secondoftwo}%
\providecommand \bibfield  [0]{\@secondoftwo}%
\providecommand \translation [1]{[#1]}%
\providecommand \BibitemOpen [0]{}%
\providecommand \bibitemStop [0]{}%
\providecommand \bibitemNoStop [0]{.\EOS\space}%
\providecommand \EOS [0]{\spacefactor3000\relax}%
\providecommand \BibitemShut  [1]{\csname bibitem#1\endcsname}%
\let\auto@bib@innerbib\@empty
\bibitem [{\citenamefont {Krizhevsky}\ \emph {et~al.}(2017)\citenamefont
  {Krizhevsky}, \citenamefont {Sutskever},\ and\ \citenamefont
  {Hinton}}]{krizhevsky2017imagenet}%
  \BibitemOpen
  \bibfield  {author} {\bibinfo {author} {\bibfnamefont {A.}~\bibnamefont
  {Krizhevsky}}, \bibinfo {author} {\bibfnamefont {I.}~\bibnamefont
  {Sutskever}},\ and\ \bibinfo {author} {\bibfnamefont {G.~E.}\ \bibnamefont
  {Hinton}},\ }\bibfield  {title} {\bibinfo {title} {Imagenet classification
  with deep convolutional neural networks},\ }\href@noop {} {\bibfield
  {journal} {\bibinfo  {journal} {Communications of the ACM}\ }\textbf
  {\bibinfo {volume} {60}},\ \bibinfo {pages} {84} (\bibinfo {year}
  {2017})}\BibitemShut {NoStop}%
\bibitem [{\citenamefont {Silver}\ \emph {et~al.}(2016)\citenamefont {Silver},
  \citenamefont {Huang}, \citenamefont {Maddison}, \citenamefont {Guez},
  \citenamefont {Sifre}, \citenamefont {Van Den~Driessche}, \citenamefont
  {Schrittwieser}, \citenamefont {Antonoglou}, \citenamefont {Panneershelvam},
  \citenamefont {Lanctot} \emph {et~al.}}]{silver2016mastering}%
  \BibitemOpen
  \bibfield  {author} {\bibinfo {author} {\bibfnamefont {D.}~\bibnamefont
  {Silver}}, \bibinfo {author} {\bibfnamefont {A.}~\bibnamefont {Huang}},
  \bibinfo {author} {\bibfnamefont {C.~J.}\ \bibnamefont {Maddison}}, \bibinfo
  {author} {\bibfnamefont {A.}~\bibnamefont {Guez}}, \bibinfo {author}
  {\bibfnamefont {L.}~\bibnamefont {Sifre}}, \bibinfo {author} {\bibfnamefont
  {G.}~\bibnamefont {Van Den~Driessche}}, \bibinfo {author} {\bibfnamefont
  {J.}~\bibnamefont {Schrittwieser}}, \bibinfo {author} {\bibfnamefont
  {I.}~\bibnamefont {Antonoglou}}, \bibinfo {author} {\bibfnamefont
  {V.}~\bibnamefont {Panneershelvam}}, \bibinfo {author} {\bibfnamefont
  {M.}~\bibnamefont {Lanctot}}, \emph {et~al.},\ }\bibfield  {title} {\bibinfo
  {title} {Mastering the game of go with deep neural networks and tree
  search},\ }\href@noop {} {\bibfield  {journal} {\bibinfo  {journal} {Nature}\
  }\textbf {\bibinfo {volume} {529}},\ \bibinfo {pages} {484} (\bibinfo {year}
  {2016})}\BibitemShut {NoStop}%
\bibitem [{\citenamefont {Brown}\ \emph {et~al.}(2020)\citenamefont {Brown},
  \citenamefont {Mann}, \citenamefont {Ryder}, \citenamefont {Subbiah},
  \citenamefont {Kaplan}, \citenamefont {Dhariwal}, \citenamefont
  {Neelakantan}, \citenamefont {Shyam}, \citenamefont {Sastry}, \citenamefont
  {Askell} \emph {et~al.}}]{brown2020language}%
  \BibitemOpen
  \bibfield  {author} {\bibinfo {author} {\bibfnamefont {T.}~\bibnamefont
  {Brown}}, \bibinfo {author} {\bibfnamefont {B.}~\bibnamefont {Mann}},
  \bibinfo {author} {\bibfnamefont {N.}~\bibnamefont {Ryder}}, \bibinfo
  {author} {\bibfnamefont {M.}~\bibnamefont {Subbiah}}, \bibinfo {author}
  {\bibfnamefont {J.~D.}\ \bibnamefont {Kaplan}}, \bibinfo {author}
  {\bibfnamefont {P.}~\bibnamefont {Dhariwal}}, \bibinfo {author}
  {\bibfnamefont {A.}~\bibnamefont {Neelakantan}}, \bibinfo {author}
  {\bibfnamefont {P.}~\bibnamefont {Shyam}}, \bibinfo {author} {\bibfnamefont
  {G.}~\bibnamefont {Sastry}}, \bibinfo {author} {\bibfnamefont
  {A.}~\bibnamefont {Askell}}, \emph {et~al.},\ }\bibfield  {title} {\bibinfo
  {title} {Language models are few-shot learners},\ }in\ \href@noop {} {\emph
  {\bibinfo {booktitle} {Advances in neural information processing systems}}},\
  Vol.~\bibinfo {volume} {33}\ (\bibinfo {year} {2020})\ pp.\ \bibinfo {pages}
  {1877--1901}\BibitemShut {NoStop}%
\bibitem [{\citenamefont {Jumper}\ \emph {et~al.}(2021)\citenamefont {Jumper},
  \citenamefont {Evans}, \citenamefont {Pritzel}, \citenamefont {Green},
  \citenamefont {Figurnov}, \citenamefont {Ronneberger}, \citenamefont
  {Tunyasuvunakool}, \citenamefont {Bates}, \citenamefont {{\v{Z}}{\'\i}dek},
  \citenamefont {Potapenko} \emph {et~al.}}]{jumper2021highly}%
  \BibitemOpen
  \bibfield  {author} {\bibinfo {author} {\bibfnamefont {J.}~\bibnamefont
  {Jumper}}, \bibinfo {author} {\bibfnamefont {R.}~\bibnamefont {Evans}},
  \bibinfo {author} {\bibfnamefont {A.}~\bibnamefont {Pritzel}}, \bibinfo
  {author} {\bibfnamefont {T.}~\bibnamefont {Green}}, \bibinfo {author}
  {\bibfnamefont {M.}~\bibnamefont {Figurnov}}, \bibinfo {author}
  {\bibfnamefont {O.}~\bibnamefont {Ronneberger}}, \bibinfo {author}
  {\bibfnamefont {K.}~\bibnamefont {Tunyasuvunakool}}, \bibinfo {author}
  {\bibfnamefont {R.}~\bibnamefont {Bates}}, \bibinfo {author} {\bibfnamefont
  {A.}~\bibnamefont {{\v{Z}}{\'\i}dek}}, \bibinfo {author} {\bibfnamefont
  {A.}~\bibnamefont {Potapenko}}, \emph {et~al.},\ }\bibfield  {title}
  {\bibinfo {title} {Highly accurate protein structure prediction with
  alphafold},\ }\href@noop {} {\bibfield  {journal} {\bibinfo  {journal}
  {Nature}\ }\textbf {\bibinfo {volume} {596}},\ \bibinfo {pages} {583}
  (\bibinfo {year} {2021})}\BibitemShut {NoStop}%
\bibitem [{\citenamefont {Cerezo}\ \emph
  {et~al.}(2021{\natexlab{a}})\citenamefont {Cerezo}, \citenamefont
  {Arrasmith}, \citenamefont {Babbush}, \citenamefont {Benjamin}, \citenamefont
  {Endo}, \citenamefont {Fujii}, \citenamefont {McClean}, \citenamefont
  {Mitarai}, \citenamefont {Yuan}, \citenamefont {Cincio} \emph
  {et~al.}}]{cerezo2021variational}%
  \BibitemOpen
  \bibfield  {author} {\bibinfo {author} {\bibfnamefont {M.}~\bibnamefont
  {Cerezo}}, \bibinfo {author} {\bibfnamefont {A.}~\bibnamefont {Arrasmith}},
  \bibinfo {author} {\bibfnamefont {R.}~\bibnamefont {Babbush}}, \bibinfo
  {author} {\bibfnamefont {S.~C.}\ \bibnamefont {Benjamin}}, \bibinfo {author}
  {\bibfnamefont {S.}~\bibnamefont {Endo}}, \bibinfo {author} {\bibfnamefont
  {K.}~\bibnamefont {Fujii}}, \bibinfo {author} {\bibfnamefont {J.~R.}\
  \bibnamefont {McClean}}, \bibinfo {author} {\bibfnamefont {K.}~\bibnamefont
  {Mitarai}}, \bibinfo {author} {\bibfnamefont {X.}~\bibnamefont {Yuan}},
  \bibinfo {author} {\bibfnamefont {L.}~\bibnamefont {Cincio}}, \emph
  {et~al.},\ }\bibfield  {title} {\bibinfo {title} {Variational quantum
  algorithms},\ }\href@noop {} {\bibfield  {journal} {\bibinfo  {journal} {Nat.
  Rev. Phys.}\ }\textbf {\bibinfo {volume} {3}},\ \bibinfo {pages} {625}
  (\bibinfo {year} {2021}{\natexlab{a}})}\BibitemShut {NoStop}%
\bibitem [{\citenamefont {Schuld}\ \emph {et~al.}(2015)\citenamefont {Schuld},
  \citenamefont {Sinayskiy},\ and\ \citenamefont
  {Petruccione}}]{schuld2015introduction}%
  \BibitemOpen
  \bibfield  {author} {\bibinfo {author} {\bibfnamefont {M.}~\bibnamefont
  {Schuld}}, \bibinfo {author} {\bibfnamefont {I.}~\bibnamefont {Sinayskiy}},\
  and\ \bibinfo {author} {\bibfnamefont {F.}~\bibnamefont {Petruccione}},\
  }\bibfield  {title} {\bibinfo {title} {An introduction to quantum machine
  learning},\ }\href@noop {} {\bibfield  {journal} {\bibinfo  {journal}
  {Contemporary Physics}\ }\textbf {\bibinfo {volume} {56}},\ \bibinfo {pages}
  {172} (\bibinfo {year} {2015})}\BibitemShut {NoStop}%
\bibitem [{\citenamefont {McClean}\ \emph {et~al.}(2018)\citenamefont
  {McClean}, \citenamefont {Boixo}, \citenamefont {Smelyanskiy}, \citenamefont
  {Babbush},\ and\ \citenamefont {Neven}}]{mcclean2018barren}%
  \BibitemOpen
  \bibfield  {author} {\bibinfo {author} {\bibfnamefont {J.~R.}\ \bibnamefont
  {McClean}}, \bibinfo {author} {\bibfnamefont {S.}~\bibnamefont {Boixo}},
  \bibinfo {author} {\bibfnamefont {V.~N.}\ \bibnamefont {Smelyanskiy}},
  \bibinfo {author} {\bibfnamefont {R.}~\bibnamefont {Babbush}},\ and\ \bibinfo
  {author} {\bibfnamefont {H.}~\bibnamefont {Neven}},\ }\bibfield  {title}
  {\bibinfo {title} {Barren plateaus in quantum neural network training
  landscapes},\ }\href@noop {} {\bibfield  {journal} {\bibinfo  {journal} {Nat.
  Commun.}\ }\textbf {\bibinfo {volume} {9}},\ \bibinfo {pages} {1} (\bibinfo
  {year} {2018})}\BibitemShut {NoStop}%
\bibitem [{\citenamefont {Holmes}\ \emph {et~al.}(2022)\citenamefont {Holmes},
  \citenamefont {Sharma}, \citenamefont {Cerezo},\ and\ \citenamefont
  {Coles}}]{holmes2022connecting}%
  \BibitemOpen
  \bibfield  {author} {\bibinfo {author} {\bibfnamefont {Z.}~\bibnamefont
  {Holmes}}, \bibinfo {author} {\bibfnamefont {K.}~\bibnamefont {Sharma}},
  \bibinfo {author} {\bibfnamefont {M.}~\bibnamefont {Cerezo}},\ and\ \bibinfo
  {author} {\bibfnamefont {P.~J.}\ \bibnamefont {Coles}},\ }\bibfield  {title}
  {\bibinfo {title} {Connecting ansatz expressibility to gradient magnitudes
  and barren plateaus},\ }\href@noop {} {\bibfield  {journal} {\bibinfo
  {journal} {PRX Quantum}\ }\textbf {\bibinfo {volume} {3}},\ \bibinfo {pages}
  {010313} (\bibinfo {year} {2022})}\BibitemShut {NoStop}%
\bibitem [{\citenamefont {Pesah}\ \emph {et~al.}(2021)\citenamefont {Pesah},
  \citenamefont {Cerezo}, \citenamefont {Wang}, \citenamefont {Volkoff},
  \citenamefont {Sornborger},\ and\ \citenamefont {Coles}}]{pesah2021absence}%
  \BibitemOpen
  \bibfield  {author} {\bibinfo {author} {\bibfnamefont {A.}~\bibnamefont
  {Pesah}}, \bibinfo {author} {\bibfnamefont {M.}~\bibnamefont {Cerezo}},
  \bibinfo {author} {\bibfnamefont {S.}~\bibnamefont {Wang}}, \bibinfo {author}
  {\bibfnamefont {T.}~\bibnamefont {Volkoff}}, \bibinfo {author} {\bibfnamefont
  {A.~T.}\ \bibnamefont {Sornborger}},\ and\ \bibinfo {author} {\bibfnamefont
  {P.~J.}\ \bibnamefont {Coles}},\ }\bibfield  {title} {\bibinfo {title}
  {Absence of barren plateaus in quantum convolutional neural networks},\
  }\href@noop {} {\bibfield  {journal} {\bibinfo  {journal} {Phys. Rev. X}\
  }\textbf {\bibinfo {volume} {11}},\ \bibinfo {pages} {041011} (\bibinfo
  {year} {2021})}\BibitemShut {NoStop}%
\bibitem [{\citenamefont {Larocca}\ \emph {et~al.}(2022)\citenamefont
  {Larocca}, \citenamefont {Czarnik}, \citenamefont {Sharma}, \citenamefont
  {Muraleedharan}, \citenamefont {Coles},\ and\ \citenamefont
  {Cerezo}}]{larocca2022diagnosing}%
  \BibitemOpen
  \bibfield  {author} {\bibinfo {author} {\bibfnamefont {M.}~\bibnamefont
  {Larocca}}, \bibinfo {author} {\bibfnamefont {P.}~\bibnamefont {Czarnik}},
  \bibinfo {author} {\bibfnamefont {K.}~\bibnamefont {Sharma}}, \bibinfo
  {author} {\bibfnamefont {G.}~\bibnamefont {Muraleedharan}}, \bibinfo {author}
  {\bibfnamefont {P.~J.}\ \bibnamefont {Coles}},\ and\ \bibinfo {author}
  {\bibfnamefont {M.}~\bibnamefont {Cerezo}},\ }\bibfield  {title} {\bibinfo
  {title} {Diagnosing barren plateaus with tools from quantum optimal
  control},\ }\href@noop {} {\bibfield  {journal} {\bibinfo  {journal}
  {Quantum}\ }\textbf {\bibinfo {volume} {6}},\ \bibinfo {pages} {824}
  (\bibinfo {year} {2022})}\BibitemShut {NoStop}%
\bibitem [{\citenamefont {Liu}\ \emph {et~al.}(2022)\citenamefont {Liu},
  \citenamefont {Yu}, \citenamefont {Duan},\ and\ \citenamefont
  {Deng}}]{liu2022presence}%
  \BibitemOpen
  \bibfield  {author} {\bibinfo {author} {\bibfnamefont {Z.}~\bibnamefont
  {Liu}}, \bibinfo {author} {\bibfnamefont {L.-W.}\ \bibnamefont {Yu}},
  \bibinfo {author} {\bibfnamefont {L.-M.}\ \bibnamefont {Duan}},\ and\
  \bibinfo {author} {\bibfnamefont {D.-L.}\ \bibnamefont {Deng}},\ }\bibfield
  {title} {\bibinfo {title} {Presence and absence of barren plateaus in
  tensor-network based machine learning},\ }\href@noop {} {\bibfield  {journal}
  {\bibinfo  {journal} {Phys. Rev. Lett.}\ }\textbf {\bibinfo {volume} {129}},\
  \bibinfo {pages} {270501} (\bibinfo {year} {2022})}\BibitemShut {NoStop}%
\bibitem [{\citenamefont {Mart{\'\i}n}\ \emph {et~al.}(2023)\citenamefont
  {Mart{\'\i}n}, \citenamefont {Plekhanov},\ and\ \citenamefont
  {Lubasch}}]{martin2023barren}%
  \BibitemOpen
  \bibfield  {author} {\bibinfo {author} {\bibfnamefont {E.~C.}\ \bibnamefont
  {Mart{\'\i}n}}, \bibinfo {author} {\bibfnamefont {K.}~\bibnamefont
  {Plekhanov}},\ and\ \bibinfo {author} {\bibfnamefont {M.}~\bibnamefont
  {Lubasch}},\ }\bibfield  {title} {\bibinfo {title} {Barren plateaus in
  quantum tensor network optimization},\ }\href@noop {} {\bibfield  {journal}
  {\bibinfo  {journal} {Quantum}\ }\textbf {\bibinfo {volume} {7}},\ \bibinfo
  {pages} {974} (\bibinfo {year} {2023})}\BibitemShut {NoStop}%
\bibitem [{\citenamefont {Barthel}\ and\ \citenamefont
  {Miao}(2023)}]{barthel2023absence}%
  \BibitemOpen
  \bibfield  {author} {\bibinfo {author} {\bibfnamefont {T.}~\bibnamefont
  {Barthel}}\ and\ \bibinfo {author} {\bibfnamefont {Q.}~\bibnamefont {Miao}},\
  }\bibfield  {title} {\bibinfo {title} {Absence of barren plateaus and scaling
  of gradients in the energy optimization of isometric tensor network states},\
  }\href@noop {} {\bibfield  {journal} {\bibinfo  {journal} {arXiv preprint
  arXiv:2304.00161}\ } (\bibinfo {year} {2023})}\BibitemShut {NoStop}%
\bibitem [{\citenamefont {Zhang}\ \emph {et~al.}(2023)\citenamefont {Zhang},
  \citenamefont {Liu},\ and\ \citenamefont {Zhang}}]{zhang2023absence}%
  \BibitemOpen
  \bibfield  {author} {\bibinfo {author} {\bibfnamefont {H.-K.}\ \bibnamefont
  {Zhang}}, \bibinfo {author} {\bibfnamefont {S.}~\bibnamefont {Liu}},\ and\
  \bibinfo {author} {\bibfnamefont {S.-X.}\ \bibnamefont {Zhang}},\ }\bibfield
  {title} {\bibinfo {title} {Absence of barren plateaus in finite local-depth
  circuits with long-range entanglement},\ }\href@noop {} {\bibfield  {journal}
  {\bibinfo  {journal} {arXiv preprint arXiv:2311.01393}\ } (\bibinfo {year}
  {2023})}\BibitemShut {NoStop}%
\bibitem [{\citenamefont {Cerezo}\ \emph {et~al.}(2023)\citenamefont {Cerezo},
  \citenamefont {Larocca}, \citenamefont {Garc{\'\i}a-Mart{\'\i}n},
  \citenamefont {Diaz}, \citenamefont {Braccia}, \citenamefont {Fontana},
  \citenamefont {Rudolph}, \citenamefont {Bermejo}, \citenamefont {Ijaz},
  \citenamefont {Thanasilp} \emph {et~al.}}]{cerezo2023does}%
  \BibitemOpen
  \bibfield  {author} {\bibinfo {author} {\bibfnamefont {M.}~\bibnamefont
  {Cerezo}}, \bibinfo {author} {\bibfnamefont {M.}~\bibnamefont {Larocca}},
  \bibinfo {author} {\bibfnamefont {D.}~\bibnamefont
  {Garc{\'\i}a-Mart{\'\i}n}}, \bibinfo {author} {\bibfnamefont
  {N.}~\bibnamefont {Diaz}}, \bibinfo {author} {\bibfnamefont {P.}~\bibnamefont
  {Braccia}}, \bibinfo {author} {\bibfnamefont {E.}~\bibnamefont {Fontana}},
  \bibinfo {author} {\bibfnamefont {M.~S.}\ \bibnamefont {Rudolph}}, \bibinfo
  {author} {\bibfnamefont {P.}~\bibnamefont {Bermejo}}, \bibinfo {author}
  {\bibfnamefont {A.}~\bibnamefont {Ijaz}}, \bibinfo {author} {\bibfnamefont
  {S.}~\bibnamefont {Thanasilp}}, \emph {et~al.},\ }\bibfield  {title}
  {\bibinfo {title} {Does provable absence of barren plateaus imply classical
  simulability? or, why we need to rethink variational quantum computing},\
  }\href@noop {} {\bibfield  {journal} {\bibinfo  {journal} {arXiv preprint
  arXiv:2312.09121}\ } (\bibinfo {year} {2023})}\BibitemShut {NoStop}%
\bibitem [{\citenamefont {Skolik}\ \emph {et~al.}(2023)\citenamefont {Skolik},
  \citenamefont {Cattelan}, \citenamefont {Yarkoni}, \citenamefont {B{\"a}ck},\
  and\ \citenamefont {Dunjko}}]{skolik2023equivariant}%
  \BibitemOpen
  \bibfield  {author} {\bibinfo {author} {\bibfnamefont {A.}~\bibnamefont
  {Skolik}}, \bibinfo {author} {\bibfnamefont {M.}~\bibnamefont {Cattelan}},
  \bibinfo {author} {\bibfnamefont {S.}~\bibnamefont {Yarkoni}}, \bibinfo
  {author} {\bibfnamefont {T.}~\bibnamefont {B{\"a}ck}},\ and\ \bibinfo
  {author} {\bibfnamefont {V.}~\bibnamefont {Dunjko}},\ }\bibfield  {title}
  {\bibinfo {title} {Equivariant quantum circuits for learning on weighted
  graphs},\ }\href@noop {} {\bibfield  {journal} {\bibinfo  {journal} {npj
  Quantum Information}\ }\textbf {\bibinfo {volume} {9}},\ \bibinfo {pages}
  {47} (\bibinfo {year} {2023})}\BibitemShut {NoStop}%
\bibitem [{\citenamefont {East}\ \emph {et~al.}(2023)\citenamefont {East},
  \citenamefont {Alonso-Linaje},\ and\ \citenamefont {Park}}]{east2023all}%
  \BibitemOpen
  \bibfield  {author} {\bibinfo {author} {\bibfnamefont {R.~D.}\ \bibnamefont
  {East}}, \bibinfo {author} {\bibfnamefont {G.}~\bibnamefont
  {Alonso-Linaje}},\ and\ \bibinfo {author} {\bibfnamefont {C.-Y.}\
  \bibnamefont {Park}},\ }\bibfield  {title} {\bibinfo {title} {All you need is
  spin: Su (2) equivariant variational quantum circuits based on spin
  networks},\ }\href@noop {} {\bibfield  {journal} {\bibinfo  {journal} {arXiv
  preprint arXiv:2309.07250}\ } (\bibinfo {year} {2023})}\BibitemShut {NoStop}%
\bibitem [{\citenamefont {Schatzki}\ \emph {et~al.}(2023)\citenamefont
  {Schatzki}, \citenamefont {Larocca}, \citenamefont {Sauvage},\ and\
  \citenamefont {Cerezo}}]{schatzki2023theoretical}%
  \BibitemOpen
  \bibfield  {author} {\bibinfo {author} {\bibfnamefont {L.}~\bibnamefont
  {Schatzki}}, \bibinfo {author} {\bibfnamefont {M.}~\bibnamefont {Larocca}},
  \bibinfo {author} {\bibfnamefont {F.}~\bibnamefont {Sauvage}},\ and\ \bibinfo
  {author} {\bibfnamefont {M.}~\bibnamefont {Cerezo}},\ }\bibfield  {title}
  {\bibinfo {title} {Theoretical guarantees for permutation-equivariant quantum
  neural networks},\ }\href@noop {} {\bibfield  {journal} {\bibinfo  {journal}
  {npj Quantum Information}\ }\textbf {\bibinfo {volume} {10}},\ \bibinfo
  {pages} {12} (\bibinfo {year} {2023})}\BibitemShut {NoStop}%
\bibitem [{\citenamefont {Glorot}\ and\ \citenamefont
  {Bengio}(2010)}]{glorot2010understanding}%
  \BibitemOpen
  \bibfield  {author} {\bibinfo {author} {\bibfnamefont {X.}~\bibnamefont
  {Glorot}}\ and\ \bibinfo {author} {\bibfnamefont {Y.}~\bibnamefont
  {Bengio}},\ }\bibfield  {title} {\bibinfo {title} {Understanding the
  difficulty of training deep feedforward neural networks},\ }in\ \href
  {https://proceedings.mlr.press/v9/glorot10a.html} {\emph {\bibinfo
  {booktitle} {Proceedings of the thirteenth international conference on
  artificial intelligence and statistics}}}\ (\bibinfo {organization} {JMLR
  Workshop and Conference Proceedings},\ \bibinfo {year} {2010})\ pp.\ \bibinfo
  {pages} {249--256}\BibitemShut {NoStop}%
\bibitem [{\citenamefont {He}\ \emph {et~al.}(2015)\citenamefont {He},
  \citenamefont {Zhang}, \citenamefont {Ren},\ and\ \citenamefont
  {Sun}}]{he2015delving}%
  \BibitemOpen
  \bibfield  {author} {\bibinfo {author} {\bibfnamefont {K.}~\bibnamefont
  {He}}, \bibinfo {author} {\bibfnamefont {X.}~\bibnamefont {Zhang}}, \bibinfo
  {author} {\bibfnamefont {S.}~\bibnamefont {Ren}},\ and\ \bibinfo {author}
  {\bibfnamefont {J.}~\bibnamefont {Sun}},\ }\bibfield  {title} {\bibinfo
  {title} {Delving deep into rectifiers: Surpassing human-level performance on
  imagenet classification},\ }in\ \href@noop {} {\emph {\bibinfo {booktitle}
  {Proceedings of the IEEE international conference on computer vision}}}\
  (\bibinfo {year} {2015})\ pp.\ \bibinfo {pages} {1026--1034}\BibitemShut
  {NoStop}%
\bibitem [{\citenamefont {Grant}\ \emph {et~al.}(2019)\citenamefont {Grant},
  \citenamefont {Wossnig}, \citenamefont {Ostaszewski},\ and\ \citenamefont
  {Benedetti}}]{grant2019initialization}%
  \BibitemOpen
  \bibfield  {author} {\bibinfo {author} {\bibfnamefont {E.}~\bibnamefont
  {Grant}}, \bibinfo {author} {\bibfnamefont {L.}~\bibnamefont {Wossnig}},
  \bibinfo {author} {\bibfnamefont {M.}~\bibnamefont {Ostaszewski}},\ and\
  \bibinfo {author} {\bibfnamefont {M.}~\bibnamefont {Benedetti}},\ }\bibfield
  {title} {\bibinfo {title} {An initialization strategy for addressing barren
  plateaus in parametrized quantum circuits},\ }\href@noop {} {\bibfield
  {journal} {\bibinfo  {journal} {Quantum}\ }\textbf {\bibinfo {volume} {3}},\
  \bibinfo {pages} {214} (\bibinfo {year} {2019})}\BibitemShut {NoStop}%
\bibitem [{\citenamefont {Jain}\ \emph {et~al.}(2022)\citenamefont {Jain},
  \citenamefont {Coyle}, \citenamefont {Kashefi},\ and\ \citenamefont
  {Kumar}}]{jain2021graph}%
  \BibitemOpen
  \bibfield  {author} {\bibinfo {author} {\bibfnamefont {N.}~\bibnamefont
  {Jain}}, \bibinfo {author} {\bibfnamefont {B.}~\bibnamefont {Coyle}},
  \bibinfo {author} {\bibfnamefont {E.}~\bibnamefont {Kashefi}},\ and\ \bibinfo
  {author} {\bibfnamefont {N.}~\bibnamefont {Kumar}},\ }\bibfield  {title}
  {\bibinfo {title} {Graph neural network initialisation of quantum approximate
  optimisation},\ }\href@noop {} {\bibfield  {journal} {\bibinfo  {journal}
  {{Quantum}}\ }\textbf {\bibinfo {volume} {6}},\ \bibinfo {pages} {861}
  (\bibinfo {year} {2022})}\BibitemShut {NoStop}%
\bibitem [{\citenamefont {Mele}\ \emph {et~al.}(2022)\citenamefont {Mele},
  \citenamefont {Mbeng}, \citenamefont {Santoro}, \citenamefont {Collura},\
  and\ \citenamefont {Torta}}]{mele2022avoiding}%
  \BibitemOpen
  \bibfield  {author} {\bibinfo {author} {\bibfnamefont {A.~A.}\ \bibnamefont
  {Mele}}, \bibinfo {author} {\bibfnamefont {G.~B.}\ \bibnamefont {Mbeng}},
  \bibinfo {author} {\bibfnamefont {G.~E.}\ \bibnamefont {Santoro}}, \bibinfo
  {author} {\bibfnamefont {M.}~\bibnamefont {Collura}},\ and\ \bibinfo {author}
  {\bibfnamefont {P.}~\bibnamefont {Torta}},\ }\bibfield  {title} {\bibinfo
  {title} {Avoiding barren plateaus via transferability of smooth solutions in
  a {H}amiltonian variational ansatz},\ }\href
  {https://link.aps.org/doi/10.1103/PhysRevA.106.L060401} {\bibfield  {journal}
  {\bibinfo  {journal} {Phys. Rev. A}\ }\textbf {\bibinfo {volume} {106}},\
  \bibinfo {pages} {L060401} (\bibinfo {year} {2022})}\BibitemShut {NoStop}%
\bibitem [{\citenamefont {Sack}\ \emph {et~al.}(2022)\citenamefont {Sack},
  \citenamefont {Medina}, \citenamefont {Michailidis}, \citenamefont {Kueng},\
  and\ \citenamefont {Serbyn}}]{sack2022avoiding}%
  \BibitemOpen
  \bibfield  {author} {\bibinfo {author} {\bibfnamefont {S.~H.}\ \bibnamefont
  {Sack}}, \bibinfo {author} {\bibfnamefont {R.~A.}\ \bibnamefont {Medina}},
  \bibinfo {author} {\bibfnamefont {A.~A.}\ \bibnamefont {Michailidis}},
  \bibinfo {author} {\bibfnamefont {R.}~\bibnamefont {Kueng}},\ and\ \bibinfo
  {author} {\bibfnamefont {M.}~\bibnamefont {Serbyn}},\ }\bibfield  {title}
  {\bibinfo {title} {Avoiding barren plateaus using classical shadows},\
  }\href@noop {} {\bibfield  {journal} {\bibinfo  {journal} {PRX Quantum}\
  }\textbf {\bibinfo {volume} {3}},\ \bibinfo {pages} {020365} (\bibinfo {year}
  {2022})}\BibitemShut {NoStop}%
\bibitem [{\citenamefont {Zhang}\ \emph {et~al.}(2022)\citenamefont {Zhang},
  \citenamefont {Liu}, \citenamefont {Hsieh},\ and\ \citenamefont
  {Tao}}]{zhang2022gaussian}%
  \BibitemOpen
  \bibfield  {author} {\bibinfo {author} {\bibfnamefont {K.}~\bibnamefont
  {Zhang}}, \bibinfo {author} {\bibfnamefont {L.}~\bibnamefont {Liu}}, \bibinfo
  {author} {\bibfnamefont {M.-H.}\ \bibnamefont {Hsieh}},\ and\ \bibinfo
  {author} {\bibfnamefont {D.}~\bibnamefont {Tao}},\ }\bibfield  {title}
  {\bibinfo {title} {Escaping from the barren plateau via gaussian
  initializations in deep variational quantum circuits},\ }in\ \href@noop {}
  {\emph {\bibinfo {booktitle} {Advances in Neural Information Processing
  Systems}}},\ Vol.~\bibinfo {volume} {35}\ (\bibinfo {year} {2022})\ pp.\
  \bibinfo {pages} {18612--18627}\BibitemShut {NoStop}%
\bibitem [{\citenamefont {Wang}\ \emph {et~al.}(2023)\citenamefont {Wang},
  \citenamefont {Qi}, \citenamefont {Ferrie},\ and\ \citenamefont
  {Dong}}]{wang2023trainability}%
  \BibitemOpen
  \bibfield  {author} {\bibinfo {author} {\bibfnamefont {Y.}~\bibnamefont
  {Wang}}, \bibinfo {author} {\bibfnamefont {B.}~\bibnamefont {Qi}}, \bibinfo
  {author} {\bibfnamefont {C.}~\bibnamefont {Ferrie}},\ and\ \bibinfo {author}
  {\bibfnamefont {D.}~\bibnamefont {Dong}},\ }\bibfield  {title} {\bibinfo
  {title} {Trainability enhancement of parameterized quantum circuits via
  reduced-domain parameter initialization},\ }\href@noop {} {\bibfield
  {journal} {\bibinfo  {journal} {arXiv preprint arXiv:2302.06858}\ } (\bibinfo
  {year} {2023})}\BibitemShut {NoStop}%
\bibitem [{\citenamefont {Rudolph}\ \emph {et~al.}(2023)\citenamefont
  {Rudolph}, \citenamefont {Miller}, \citenamefont {Motlagh}, \citenamefont
  {Chen}, \citenamefont {Acharya},\ and\ \citenamefont
  {Perdomo-Ortiz}}]{rudolph2023synergistic}%
  \BibitemOpen
  \bibfield  {author} {\bibinfo {author} {\bibfnamefont {M.~S.}\ \bibnamefont
  {Rudolph}}, \bibinfo {author} {\bibfnamefont {J.}~\bibnamefont {Miller}},
  \bibinfo {author} {\bibfnamefont {D.}~\bibnamefont {Motlagh}}, \bibinfo
  {author} {\bibfnamefont {J.}~\bibnamefont {Chen}}, \bibinfo {author}
  {\bibfnamefont {A.}~\bibnamefont {Acharya}},\ and\ \bibinfo {author}
  {\bibfnamefont {A.}~\bibnamefont {Perdomo-Ortiz}},\ }\bibfield  {title}
  {\bibinfo {title} {Synergistic pretraining of parametrized quantum circuits
  via tensor networks},\ }\href@noop {} {\bibfield  {journal} {\bibinfo
  {journal} {Nature Communications}\ }\textbf {\bibinfo {volume} {14}},\
  \bibinfo {pages} {8367} (\bibinfo {year} {2023})}\BibitemShut {NoStop}%
\bibitem [{\citenamefont {Park}\ and\ \citenamefont
  {Killoran}(2024)}]{park2024hamiltonian}%
  \BibitemOpen
  \bibfield  {author} {\bibinfo {author} {\bibfnamefont {C.-Y.}\ \bibnamefont
  {Park}}\ and\ \bibinfo {author} {\bibfnamefont {N.}~\bibnamefont
  {Killoran}},\ }\bibfield  {title} {\bibinfo {title} {Hamiltonian variational
  ansatz without barren plateaus},\ }\href
  {https://doi.org/10.22331/q-2024-02-01-1239} {\bibfield  {journal} {\bibinfo
  {journal} {{Quantum}}\ }\textbf {\bibinfo {volume} {8}},\ \bibinfo {pages}
  {1239} (\bibinfo {year} {2024})}\BibitemShut {NoStop}%
\bibitem [{\citenamefont {Peruzzo}\ \emph {et~al.}(2014)\citenamefont
  {Peruzzo}, \citenamefont {McClean}, \citenamefont {Shadbolt}, \citenamefont
  {Yung}, \citenamefont {Zhou}, \citenamefont {Love}, \citenamefont
  {Aspuru-Guzik},\ and\ \citenamefont {O'{B}rien}}]{peruzzo2014variational}%
  \BibitemOpen
  \bibfield  {author} {\bibinfo {author} {\bibfnamefont {A.}~\bibnamefont
  {Peruzzo}}, \bibinfo {author} {\bibfnamefont {J.}~\bibnamefont {McClean}},
  \bibinfo {author} {\bibfnamefont {P.}~\bibnamefont {Shadbolt}}, \bibinfo
  {author} {\bibfnamefont {M.-H.}\ \bibnamefont {Yung}}, \bibinfo {author}
  {\bibfnamefont {X.-Q.}\ \bibnamefont {Zhou}}, \bibinfo {author}
  {\bibfnamefont {P.~J.}\ \bibnamefont {Love}}, \bibinfo {author}
  {\bibfnamefont {A.}~\bibnamefont {Aspuru-Guzik}},\ and\ \bibinfo {author}
  {\bibfnamefont {J.~L.}\ \bibnamefont {O'{B}rien}},\ }\bibfield  {title}
  {\bibinfo {title} {A variational eigenvalue solver on a photonic quantum
  processor},\ }\href@noop {} {\bibfield  {journal} {\bibinfo  {journal} {Nat.
  Commun.}\ }\textbf {\bibinfo {volume} {5}},\ \bibinfo {pages} {1} (\bibinfo
  {year} {2014})}\BibitemShut {NoStop}%
\bibitem [{\citenamefont {Preskill}(2018)}]{preskill2018quantum}%
  \BibitemOpen
  \bibfield  {author} {\bibinfo {author} {\bibfnamefont {J.}~\bibnamefont
  {Preskill}},\ }\bibfield  {title} {\bibinfo {title} {Quantum computing in the
  {NISQ} era and beyond},\ }\href@noop {} {\bibfield  {journal} {\bibinfo
  {journal} {Quantum}\ }\textbf {\bibinfo {volume} {2}},\ \bibinfo {pages} {79}
  (\bibinfo {year} {2018})}\BibitemShut {NoStop}%
\bibitem [{\citenamefont {Fujii}(2015)}]{fujii2015quantum}%
  \BibitemOpen
  \bibfield  {author} {\bibinfo {author} {\bibfnamefont {K.}~\bibnamefont
  {Fujii}},\ }\href@noop {} {\emph {\bibinfo {title} {Quantum Computation with
  Topological Codes: from qubit to topological fault-tolerance}}},\
  Vol.~\bibinfo {volume} {8}\ (\bibinfo  {publisher} {Springer},\ \bibinfo
  {year} {2015})\BibitemShut {NoStop}%
\bibitem [{\citenamefont {Wiersema}\ \emph {et~al.}(2020)\citenamefont
  {Wiersema}, \citenamefont {Zhou}, \citenamefont {de~Sereville}, \citenamefont
  {Carrasquilla}, \citenamefont {Kim},\ and\ \citenamefont
  {Yuen}}]{wiersema2020exploring}%
  \BibitemOpen
  \bibfield  {author} {\bibinfo {author} {\bibfnamefont {R.}~\bibnamefont
  {Wiersema}}, \bibinfo {author} {\bibfnamefont {C.}~\bibnamefont {Zhou}},
  \bibinfo {author} {\bibfnamefont {Y.}~\bibnamefont {de~Sereville}}, \bibinfo
  {author} {\bibfnamefont {J.~F.}\ \bibnamefont {Carrasquilla}}, \bibinfo
  {author} {\bibfnamefont {Y.~B.}\ \bibnamefont {Kim}},\ and\ \bibinfo {author}
  {\bibfnamefont {H.}~\bibnamefont {Yuen}},\ }\bibfield  {title} {\bibinfo
  {title} {Exploring entanglement and optimization within the {H}amiltonian
  variational ansatz},\ }\href@noop {} {\bibfield  {journal} {\bibinfo
  {journal} {PRX Quantum}\ }\textbf {\bibinfo {volume} {1}},\ \bibinfo {pages}
  {020319} (\bibinfo {year} {2020})}\BibitemShut {NoStop}%
\bibitem [{\citenamefont {Cerezo}\ \emph
  {et~al.}(2021{\natexlab{b}})\citenamefont {Cerezo}, \citenamefont {Sone},
  \citenamefont {Volkoff}, \citenamefont {Cincio},\ and\ \citenamefont
  {Coles}}]{cerezo2021cost}%
  \BibitemOpen
  \bibfield  {author} {\bibinfo {author} {\bibfnamefont {M.}~\bibnamefont
  {Cerezo}}, \bibinfo {author} {\bibfnamefont {A.}~\bibnamefont {Sone}},
  \bibinfo {author} {\bibfnamefont {T.}~\bibnamefont {Volkoff}}, \bibinfo
  {author} {\bibfnamefont {L.}~\bibnamefont {Cincio}},\ and\ \bibinfo {author}
  {\bibfnamefont {P.~J.}\ \bibnamefont {Coles}},\ }\bibfield  {title} {\bibinfo
  {title} {Cost function dependent barren plateaus in shallow parametrized
  quantum circuits},\ }\href@noop {} {\bibfield  {journal} {\bibinfo  {journal}
  {Nat. Commun.}\ }\textbf {\bibinfo {volume} {12}},\ \bibinfo {pages} {1}
  (\bibinfo {year} {2021}{\natexlab{b}})}\BibitemShut {NoStop}%
\bibitem [{\citenamefont {Shtanko}\ \emph {et~al.}(2023)\citenamefont
  {Shtanko}, \citenamefont {Wang}, \citenamefont {Zhang}, \citenamefont
  {Harle}, \citenamefont {Seif}, \citenamefont {Movassagh},\ and\ \citenamefont
  {Minev}}]{shtanko2023uncovering}%
  \BibitemOpen
  \bibfield  {author} {\bibinfo {author} {\bibfnamefont {O.}~\bibnamefont
  {Shtanko}}, \bibinfo {author} {\bibfnamefont {D.~S.}\ \bibnamefont {Wang}},
  \bibinfo {author} {\bibfnamefont {H.}~\bibnamefont {Zhang}}, \bibinfo
  {author} {\bibfnamefont {N.}~\bibnamefont {Harle}}, \bibinfo {author}
  {\bibfnamefont {A.}~\bibnamefont {Seif}}, \bibinfo {author} {\bibfnamefont
  {R.}~\bibnamefont {Movassagh}},\ and\ \bibinfo {author} {\bibfnamefont
  {Z.}~\bibnamefont {Minev}},\ }\bibfield  {title} {\bibinfo {title}
  {Uncovering local integrability in quantum many-body dynamics},\ }\href@noop
  {} {\bibfield  {journal} {\bibinfo  {journal} {arXiv preprint
  arXiv:2307.07552}\ } (\bibinfo {year} {2023})}\BibitemShut {NoStop}%
\bibitem [{\citenamefont {Huse}\ \emph {et~al.}(2014)\citenamefont {Huse},
  \citenamefont {Nandkishore},\ and\ \citenamefont
  {Oganesyan}}]{huse2014phenomenology}%
  \BibitemOpen
  \bibfield  {author} {\bibinfo {author} {\bibfnamefont {D.~A.}\ \bibnamefont
  {Huse}}, \bibinfo {author} {\bibfnamefont {R.}~\bibnamefont {Nandkishore}},\
  and\ \bibinfo {author} {\bibfnamefont {V.}~\bibnamefont {Oganesyan}},\
  }\bibfield  {title} {\bibinfo {title} {Phenomenology of fully
  many-body-localized systems},\ }\href@noop {} {\bibfield  {journal} {\bibinfo
   {journal} {Phys. Rev. B}\ }\textbf {\bibinfo {volume} {90}},\ \bibinfo
  {pages} {174202} (\bibinfo {year} {2014})}\BibitemShut {NoStop}%
\bibitem [{\citenamefont {Cirac}\ and\ \citenamefont
  {Zoller}(1995)}]{cirac1995quantum}%
  \BibitemOpen
  \bibfield  {author} {\bibinfo {author} {\bibfnamefont {J.~I.}\ \bibnamefont
  {Cirac}}\ and\ \bibinfo {author} {\bibfnamefont {P.}~\bibnamefont {Zoller}},\
  }\bibfield  {title} {\bibinfo {title} {Quantum computations with cold trapped
  ions},\ }\href@noop {} {\bibfield  {journal} {\bibinfo  {journal} {Phys. Rev.
  Lett.}\ }\textbf {\bibinfo {volume} {74}},\ \bibinfo {pages} {4091} (\bibinfo
  {year} {1995})}\BibitemShut {NoStop}%
\bibitem [{\citenamefont {Jaksch}\ \emph {et~al.}(2000)\citenamefont {Jaksch},
  \citenamefont {Cirac}, \citenamefont {Zoller}, \citenamefont {Rolston},
  \citenamefont {C{\^o}t{\'e}},\ and\ \citenamefont {Lukin}}]{jaksch2000fast}%
  \BibitemOpen
  \bibfield  {author} {\bibinfo {author} {\bibfnamefont {D.}~\bibnamefont
  {Jaksch}}, \bibinfo {author} {\bibfnamefont {J.~I.}\ \bibnamefont {Cirac}},
  \bibinfo {author} {\bibfnamefont {P.}~\bibnamefont {Zoller}}, \bibinfo
  {author} {\bibfnamefont {S.~L.}\ \bibnamefont {Rolston}}, \bibinfo {author}
  {\bibfnamefont {R.}~\bibnamefont {C{\^o}t{\'e}}},\ and\ \bibinfo {author}
  {\bibfnamefont {M.~D.}\ \bibnamefont {Lukin}},\ }\bibfield  {title} {\bibinfo
  {title} {Fast quantum gates for neutral atoms},\ }\href@noop {} {\bibfield
  {journal} {\bibinfo  {journal} {Phys. Rev. Lett.}\ }\textbf {\bibinfo
  {volume} {85}},\ \bibinfo {pages} {2208} (\bibinfo {year}
  {2000})}\BibitemShut {NoStop}%
\bibitem [{\citenamefont {Chow}\ \emph {et~al.}(2011)\citenamefont {Chow},
  \citenamefont {C{\'o}rcoles}, \citenamefont {Gambetta}, \citenamefont
  {Rigetti}, \citenamefont {Johnson}, \citenamefont {Smolin}, \citenamefont
  {Rozen}, \citenamefont {Keefe}, \citenamefont {Rothwell}, \citenamefont
  {Ketchen} \emph {et~al.}}]{chow2011simple}%
  \BibitemOpen
  \bibfield  {author} {\bibinfo {author} {\bibfnamefont {J.~M.}\ \bibnamefont
  {Chow}}, \bibinfo {author} {\bibfnamefont {A.~D.}\ \bibnamefont
  {C{\'o}rcoles}}, \bibinfo {author} {\bibfnamefont {J.~M.}\ \bibnamefont
  {Gambetta}}, \bibinfo {author} {\bibfnamefont {C.}~\bibnamefont {Rigetti}},
  \bibinfo {author} {\bibfnamefont {B.~R.}\ \bibnamefont {Johnson}}, \bibinfo
  {author} {\bibfnamefont {J.~A.}\ \bibnamefont {Smolin}}, \bibinfo {author}
  {\bibfnamefont {J.~R.}\ \bibnamefont {Rozen}}, \bibinfo {author}
  {\bibfnamefont {G.~A.}\ \bibnamefont {Keefe}}, \bibinfo {author}
  {\bibfnamefont {M.~B.}\ \bibnamefont {Rothwell}}, \bibinfo {author}
  {\bibfnamefont {M.~B.}\ \bibnamefont {Ketchen}}, \emph {et~al.},\ }\bibfield
  {title} {\bibinfo {title} {Simple all-microwave entangling gate for
  fixed-frequency superconducting qubits},\ }\href@noop {} {\bibfield
  {journal} {\bibinfo  {journal} {Phys. Rev. Lett.}\ }\textbf {\bibinfo
  {volume} {107}},\ \bibinfo {pages} {080502} (\bibinfo {year}
  {2011})}\BibitemShut {NoStop}%
\bibitem [{\citenamefont {Figgatt}\ \emph {et~al.}(2019)\citenamefont
  {Figgatt}, \citenamefont {Ostrander}, \citenamefont {Linke}, \citenamefont
  {Landsman}, \citenamefont {Zhu}, \citenamefont {Maslov},\ and\ \citenamefont
  {Monroe}}]{figgatt2019parallel}%
  \BibitemOpen
  \bibfield  {author} {\bibinfo {author} {\bibfnamefont {C.}~\bibnamefont
  {Figgatt}}, \bibinfo {author} {\bibfnamefont {A.}~\bibnamefont {Ostrander}},
  \bibinfo {author} {\bibfnamefont {N.~M.}\ \bibnamefont {Linke}}, \bibinfo
  {author} {\bibfnamefont {K.~A.}\ \bibnamefont {Landsman}}, \bibinfo {author}
  {\bibfnamefont {D.}~\bibnamefont {Zhu}}, \bibinfo {author} {\bibfnamefont
  {D.}~\bibnamefont {Maslov}},\ and\ \bibinfo {author} {\bibfnamefont
  {C.}~\bibnamefont {Monroe}},\ }\bibfield  {title} {\bibinfo {title} {Parallel
  entangling operations on a universal ion-trap quantum computer},\ }\href@noop
  {} {\bibfield  {journal} {\bibinfo  {journal} {Nature}\ }\textbf {\bibinfo
  {volume} {572}},\ \bibinfo {pages} {368} (\bibinfo {year}
  {2019})}\BibitemShut {NoStop}%
\bibitem [{\citenamefont {Kim}\ \emph {et~al.}(2023)\citenamefont {Kim},
  \citenamefont {Eddins}, \citenamefont {Anand}, \citenamefont {Wei},
  \citenamefont {Van Den~Berg}, \citenamefont {Rosenblatt}, \citenamefont
  {Nayfeh}, \citenamefont {Wu}, \citenamefont {Zaletel}, \citenamefont {Temme}
  \emph {et~al.}}]{kim2023evidence}%
  \BibitemOpen
  \bibfield  {author} {\bibinfo {author} {\bibfnamefont {Y.}~\bibnamefont
  {Kim}}, \bibinfo {author} {\bibfnamefont {A.}~\bibnamefont {Eddins}},
  \bibinfo {author} {\bibfnamefont {S.}~\bibnamefont {Anand}}, \bibinfo
  {author} {\bibfnamefont {K.~X.}\ \bibnamefont {Wei}}, \bibinfo {author}
  {\bibfnamefont {E.}~\bibnamefont {Van Den~Berg}}, \bibinfo {author}
  {\bibfnamefont {S.}~\bibnamefont {Rosenblatt}}, \bibinfo {author}
  {\bibfnamefont {H.}~\bibnamefont {Nayfeh}}, \bibinfo {author} {\bibfnamefont
  {Y.}~\bibnamefont {Wu}}, \bibinfo {author} {\bibfnamefont {M.}~\bibnamefont
  {Zaletel}}, \bibinfo {author} {\bibfnamefont {K.}~\bibnamefont {Temme}},
  \emph {et~al.},\ }\bibfield  {title} {\bibinfo {title} {Evidence for the
  utility of quantum computing before fault tolerance},\ }\href@noop {}
  {\bibfield  {journal} {\bibinfo  {journal} {Nature}\ }\textbf {\bibinfo
  {volume} {618}},\ \bibinfo {pages} {500} (\bibinfo {year}
  {2023})}\BibitemShut {NoStop}%
\bibitem [{\citenamefont {Evered}\ \emph {et~al.}(2023)\citenamefont {Evered},
  \citenamefont {Bluvstein}, \citenamefont {Kalinowski}, \citenamefont {Ebadi},
  \citenamefont {Manovitz}, \citenamefont {Zhou}, \citenamefont {Li},
  \citenamefont {Geim}, \citenamefont {Wang}, \citenamefont {Maskara},
  \citenamefont {Levine}, \citenamefont {Semeghini}, \citenamefont {Greiner},
  \citenamefont {Vuletić},\ and\ \citenamefont {Lukin}}]{evered2023high}%
  \BibitemOpen
  \bibfield  {author} {\bibinfo {author} {\bibfnamefont {S.~J.}\ \bibnamefont
  {Evered}}, \bibinfo {author} {\bibfnamefont {D.}~\bibnamefont {Bluvstein}},
  \bibinfo {author} {\bibfnamefont {M.}~\bibnamefont {Kalinowski}}, \bibinfo
  {author} {\bibfnamefont {S.}~\bibnamefont {Ebadi}}, \bibinfo {author}
  {\bibfnamefont {T.}~\bibnamefont {Manovitz}}, \bibinfo {author}
  {\bibfnamefont {H.}~\bibnamefont {Zhou}}, \bibinfo {author} {\bibfnamefont
  {S.~H.}\ \bibnamefont {Li}}, \bibinfo {author} {\bibfnamefont {A.~A.}\
  \bibnamefont {Geim}}, \bibinfo {author} {\bibfnamefont {T.~T.}\ \bibnamefont
  {Wang}}, \bibinfo {author} {\bibfnamefont {N.}~\bibnamefont {Maskara}},
  \bibinfo {author} {\bibfnamefont {H.}~\bibnamefont {Levine}}, \bibinfo
  {author} {\bibfnamefont {G.}~\bibnamefont {Semeghini}}, \bibinfo {author}
  {\bibfnamefont {M.}~\bibnamefont {Greiner}}, \bibinfo {author} {\bibfnamefont
  {V.}~\bibnamefont {Vuletić}},\ and\ \bibinfo {author} {\bibfnamefont
  {M.~D.}\ \bibnamefont {Lukin}},\ }\bibfield  {title} {\bibinfo {title}
  {High-fidelity parallel entangling gates on a neutral-atom quantum
  computer},\ }\href {https://doi.org/10.1038/s41586-023-06481-y} {\bibfield
  {journal} {\bibinfo  {journal} {Nature}\ }\textbf {\bibinfo {volume} {622}},\
  \bibinfo {pages} {268–272} (\bibinfo {year} {2023})}\BibitemShut {NoStop}%
\bibitem [{hea()}]{heavy-hex-lattice}%
  \BibitemOpen
  \href@noop {} {\bibinfo {title} {The {IBM} {Q}uantum heavy hex lattice}},\
  \bibinfo {howpublished}
  {\url{https://research.ibm.com/blog/heavy-hex-lattice}}\BibitemShut {NoStop}%
\bibitem [{\citenamefont {Wessels}\ and\ \citenamefont
  {Barnard}(1992)}]{wessels1992avoiding}%
  \BibitemOpen
  \bibfield  {author} {\bibinfo {author} {\bibfnamefont {L.~F.}\ \bibnamefont
  {Wessels}}\ and\ \bibinfo {author} {\bibfnamefont {E.}~\bibnamefont
  {Barnard}},\ }\bibfield  {title} {\bibinfo {title} {Avoiding false local
  minima by proper initialization of connections},\ }\href@noop {} {\bibfield
  {journal} {\bibinfo  {journal} {IEEE transactions on neural networks}\
  }\textbf {\bibinfo {volume} {3}},\ \bibinfo {pages} {899} (\bibinfo {year}
  {1992})}\BibitemShut {NoStop}%
\bibitem [{\citenamefont {Serbyn}\ \emph {et~al.}(2014)\citenamefont {Serbyn},
  \citenamefont {Papi{\'c}},\ and\ \citenamefont {Abanin}}]{serbyn2014quantum}%
  \BibitemOpen
  \bibfield  {author} {\bibinfo {author} {\bibfnamefont {M.}~\bibnamefont
  {Serbyn}}, \bibinfo {author} {\bibfnamefont {Z.}~\bibnamefont {Papi{\'c}}},\
  and\ \bibinfo {author} {\bibfnamefont {D.~A.}\ \bibnamefont {Abanin}},\
  }\bibfield  {title} {\bibinfo {title} {Quantum quenches in the many-body
  localized phase},\ }\href@noop {} {\bibfield  {journal} {\bibinfo  {journal}
  {Phys. Rev. B}\ }\textbf {\bibinfo {volume} {90}},\ \bibinfo {pages} {174302}
  (\bibinfo {year} {2014})}\BibitemShut {NoStop}%
\bibitem [{\citenamefont {Thiery}\ \emph {et~al.}(2018)\citenamefont {Thiery},
  \citenamefont {Huveneers}, \citenamefont {M{\"u}ller},\ and\ \citenamefont
  {De~Roeck}}]{thiery2018many}%
  \BibitemOpen
  \bibfield  {author} {\bibinfo {author} {\bibfnamefont {T.}~\bibnamefont
  {Thiery}}, \bibinfo {author} {\bibfnamefont {F.}~\bibnamefont {Huveneers}},
  \bibinfo {author} {\bibfnamefont {M.}~\bibnamefont {M{\"u}ller}},\ and\
  \bibinfo {author} {\bibfnamefont {W.}~\bibnamefont {De~Roeck}},\ }\bibfield
  {title} {\bibinfo {title} {Many-body delocalization as a quantum avalanche},\
  }\href@noop {} {\bibfield  {journal} {\bibinfo  {journal} {Phys. Rev. Lett.}\
  }\textbf {\bibinfo {volume} {121}},\ \bibinfo {pages} {140601} (\bibinfo
  {year} {2018})}\BibitemShut {NoStop}%
\bibitem [{\citenamefont {Wahl}\ \emph {et~al.}(2019)\citenamefont {Wahl},
  \citenamefont {Pal},\ and\ \citenamefont {Simon}}]{wahl2019signatures}%
  \BibitemOpen
  \bibfield  {author} {\bibinfo {author} {\bibfnamefont {T.~B.}\ \bibnamefont
  {Wahl}}, \bibinfo {author} {\bibfnamefont {A.}~\bibnamefont {Pal}},\ and\
  \bibinfo {author} {\bibfnamefont {S.~H.}\ \bibnamefont {Simon}},\ }\bibfield
  {title} {\bibinfo {title} {Signatures of the many-body localized regime in
  two dimensions},\ }\href@noop {} {\bibfield  {journal} {\bibinfo  {journal}
  {Nat. Phys.}\ }\textbf {\bibinfo {volume} {15}},\ \bibinfo {pages} {164}
  (\bibinfo {year} {2019})}\BibitemShut {NoStop}%
\bibitem [{Note1()}]{Note1}%
  \BibitemOpen
  \bibinfo {note} {The value of $\gamma $ obtained from the proof of
  Theorem~\ref {thm:large_gradients_constraints} can be smaller than $\pi $.
  However, since many inequalities we have used for the proof are not tight, we
  regard Theorem~\ref {thm:large_gradients_constraints} as a rough estimation
  of the order of such a parameter condition. The exact scaling of the
  parameter condition can be further investigated numerically. See also
  Appendix~\ref {app:additional_numerical_results}.}\BibitemShut {Stop}%
\bibitem [{\citenamefont {Kingma}\ and\ \citenamefont
  {Ba}(2015)}]{kingma2014adam}%
  \BibitemOpen
  \bibfield  {author} {\bibinfo {author} {\bibfnamefont {D.~P.}\ \bibnamefont
  {Kingma}}\ and\ \bibinfo {author} {\bibfnamefont {J.}~\bibnamefont {Ba}},\
  }\bibfield  {title} {\bibinfo {title} {Adam: {A} method for stochastic
  optimization},\ }in\ \href {https://doi.org/10.48550/arXiv.1412.6980} {\emph
  {\bibinfo {booktitle} {3rd International Conference on Learning
  Representations, {ICLR} 2015, San Diego, CA, USA, May 7-9, 2015, Conference
  Track Proceedings}}}\ (\bibinfo {year} {2015})\BibitemShut {NoStop}%
\bibitem [{\citenamefont {Wecker}\ \emph {et~al.}(2015)\citenamefont {Wecker},
  \citenamefont {Hastings},\ and\ \citenamefont {Troyer}}]{wecker2015progress}%
  \BibitemOpen
  \bibfield  {author} {\bibinfo {author} {\bibfnamefont {D.}~\bibnamefont
  {Wecker}}, \bibinfo {author} {\bibfnamefont {M.~B.}\ \bibnamefont
  {Hastings}},\ and\ \bibinfo {author} {\bibfnamefont {M.}~\bibnamefont
  {Troyer}},\ }\bibfield  {title} {\bibinfo {title} {Progress towards practical
  quantum variational algorithms},\ }\href@noop {} {\bibfield  {journal}
  {\bibinfo  {journal} {Phys. Rev. A}\ }\textbf {\bibinfo {volume} {92}},\
  \bibinfo {pages} {042303} (\bibinfo {year} {2015})}\BibitemShut {NoStop}%
\bibitem [{\citenamefont {Hadfield}\ \emph {et~al.}(2019)\citenamefont
  {Hadfield}, \citenamefont {Wang}, \citenamefont {O’Gorman}, \citenamefont
  {Rieffel}, \citenamefont {Venturelli},\ and\ \citenamefont
  {Biswas}}]{hadfield2019quantum}%
  \BibitemOpen
  \bibfield  {author} {\bibinfo {author} {\bibfnamefont {S.}~\bibnamefont
  {Hadfield}}, \bibinfo {author} {\bibfnamefont {Z.}~\bibnamefont {Wang}},
  \bibinfo {author} {\bibfnamefont {B.}~\bibnamefont {O’Gorman}}, \bibinfo
  {author} {\bibfnamefont {E.~G.}\ \bibnamefont {Rieffel}}, \bibinfo {author}
  {\bibfnamefont {D.}~\bibnamefont {Venturelli}},\ and\ \bibinfo {author}
  {\bibfnamefont {R.}~\bibnamefont {Biswas}},\ }\bibfield  {title} {\bibinfo
  {title} {From the quantum approximate optimization algorithm to a quantum
  alternating operator ansatz},\ }\href@noop {} {\bibfield  {journal} {\bibinfo
   {journal} {Algorithms}\ }\textbf {\bibinfo {volume} {12}},\ \bibinfo {pages}
  {34} (\bibinfo {year} {2019})}\BibitemShut {NoStop}%
\bibitem [{\citenamefont {Park}(2024)}]{park2021efficient}%
  \BibitemOpen
  \bibfield  {author} {\bibinfo {author} {\bibfnamefont {C.-Y.}\ \bibnamefont
  {Park}},\ }\bibfield  {title} {\bibinfo {title} {Efficient ground state
  preparation in variational quantum eigensolver with symmetry breaking
  layers},\ }\href@noop {} {\bibfield  {journal} {\bibinfo  {journal} {APL
  Quantum}\ }\textbf {\bibinfo {volume} {1}},\ \bibinfo {pages} {016101}
  (\bibinfo {year} {2024})}\BibitemShut {NoStop}%
\bibitem [{Note2()}]{Note2}%
  \BibitemOpen
  \bibinfo {note} {In contrast, the HEA within the first parameter condition is
  classically simulable up to an inverse polynomial additive error using
  algorithms developed in Refs.~\cite {bravyi2021classical,coble2022quasi}.
  However, we expect that one can extend Theorem~\ref
  {thm:large_gradients_constraints} to an arbitrary graph (instead of a
  $D$-dimensional lattice), which makes those algorithms cannot be
  applied.}\BibitemShut {Stop}%
\bibitem [{\citenamefont {Marrero}\ \emph {et~al.}(2021)\citenamefont
  {Marrero}, \citenamefont {Kieferov{\'a}},\ and\ \citenamefont
  {Wiebe}}]{marrero2021entanglement}%
  \BibitemOpen
  \bibfield  {author} {\bibinfo {author} {\bibfnamefont {C.~O.}\ \bibnamefont
  {Marrero}}, \bibinfo {author} {\bibfnamefont {M.}~\bibnamefont
  {Kieferov{\'a}}},\ and\ \bibinfo {author} {\bibfnamefont {N.}~\bibnamefont
  {Wiebe}},\ }\bibfield  {title} {\bibinfo {title} {Entanglement-induced barren
  plateaus},\ }\href@noop {} {\bibfield  {journal} {\bibinfo  {journal} {PRX
  Quantum}\ }\textbf {\bibinfo {volume} {2}},\ \bibinfo {pages} {040316}
  (\bibinfo {year} {2021})}\BibitemShut {NoStop}%
\bibitem [{\citenamefont {Leone}\ \emph {et~al.}(2022)\citenamefont {Leone},
  \citenamefont {Oliviero}, \citenamefont {Cincio},\ and\ \citenamefont
  {Cerezo}}]{leone2022practical}%
  \BibitemOpen
  \bibfield  {author} {\bibinfo {author} {\bibfnamefont {L.}~\bibnamefont
  {Leone}}, \bibinfo {author} {\bibfnamefont {S.~F.}\ \bibnamefont {Oliviero}},
  \bibinfo {author} {\bibfnamefont {L.}~\bibnamefont {Cincio}},\ and\ \bibinfo
  {author} {\bibfnamefont {M.}~\bibnamefont {Cerezo}},\ }\bibfield  {title}
  {\bibinfo {title} {On the practical usefulness of the hardware efficient
  ansatz},\ }\href@noop {} {\bibfield  {journal} {\bibinfo  {journal} {arXiv
  preprint arXiv:2211.01477}\ } (\bibinfo {year} {2022})}\BibitemShut {NoStop}%
\bibitem [{\citenamefont {Bardarson}\ \emph {et~al.}(2012)\citenamefont
  {Bardarson}, \citenamefont {Pollmann},\ and\ \citenamefont
  {Moore}}]{bardarson2012unbounded}%
  \BibitemOpen
  \bibfield  {author} {\bibinfo {author} {\bibfnamefont {J.~H.}\ \bibnamefont
  {Bardarson}}, \bibinfo {author} {\bibfnamefont {F.}~\bibnamefont
  {Pollmann}},\ and\ \bibinfo {author} {\bibfnamefont {J.~E.}\ \bibnamefont
  {Moore}},\ }\bibfield  {title} {\bibinfo {title} {Unbounded growth of
  entanglement in models of many-body localization},\ }\href@noop {} {\bibfield
   {journal} {\bibinfo  {journal} {Phys. Rev. Lett.}\ }\textbf {\bibinfo
  {volume} {109}},\ \bibinfo {pages} {017202} (\bibinfo {year}
  {2012})}\BibitemShut {NoStop}%
\bibitem [{\citenamefont {Shi}\ and\ \citenamefont
  {Shang}(2024)}]{shi2024avoiding}%
  \BibitemOpen
  \bibfield  {author} {\bibinfo {author} {\bibfnamefont {X.}~\bibnamefont
  {Shi}}\ and\ \bibinfo {author} {\bibfnamefont {Y.}~\bibnamefont {Shang}},\
  }\bibfield  {title} {\bibinfo {title} {Avoiding barren plateaus via gaussian
  mixture model},\ }\href@noop {} {\bibfield  {journal} {\bibinfo  {journal}
  {arXiv preprint arXiv:2402.13501}\ } (\bibinfo {year} {2024})}\BibitemShut
  {NoStop}%
\bibitem [{\citenamefont {Bergholm}\ \emph {et~al.}(2018)\citenamefont
  {Bergholm}, \citenamefont {Izaac}, \citenamefont {Schuld}, \citenamefont
  {Gogolin}, \citenamefont {Ahmed}, \citenamefont {Ajith}, \citenamefont
  {Alam}, \citenamefont {Alonso-Linaje} \emph
  {et~al.}}]{bergholm2018pennylane}%
  \BibitemOpen
  \bibfield  {author} {\bibinfo {author} {\bibfnamefont {V.}~\bibnamefont
  {Bergholm}}, \bibinfo {author} {\bibfnamefont {J.}~\bibnamefont {Izaac}},
  \bibinfo {author} {\bibfnamefont {M.}~\bibnamefont {Schuld}}, \bibinfo
  {author} {\bibfnamefont {C.}~\bibnamefont {Gogolin}}, \bibinfo {author}
  {\bibfnamefont {S.}~\bibnamefont {Ahmed}}, \bibinfo {author} {\bibfnamefont
  {V.}~\bibnamefont {Ajith}}, \bibinfo {author} {\bibfnamefont {M.~S.}\
  \bibnamefont {Alam}}, \bibinfo {author} {\bibfnamefont {G.}~\bibnamefont
  {Alonso-Linaje}}, \emph {et~al.},\ }\href@noop {} {\bibinfo {title}
  {Pennylane: Automatic differentiation of hybrid quantum-classical
  computations}} (\bibinfo {year} {2018}),\ \Eprint
  {https://arxiv.org/abs/1811.04968} {arXiv:1811.04968 [quant-ph]} \BibitemShut
  {NoStop}%
\bibitem [{\citenamefont {Asadi}\ \emph {et~al.}(2024)\citenamefont {Asadi},
  \citenamefont {Dusko}, \citenamefont {Park}, \citenamefont {Michaud-Rioux},
  \citenamefont {Schoch}, \citenamefont {Shu}, \citenamefont {Vincent},\ and\
  \citenamefont {O'Riordan}}]{Lightning}%
  \BibitemOpen
  \bibfield  {author} {\bibinfo {author} {\bibfnamefont {A.}~\bibnamefont
  {Asadi}}, \bibinfo {author} {\bibfnamefont {A.}~\bibnamefont {Dusko}},
  \bibinfo {author} {\bibfnamefont {C.-Y.}\ \bibnamefont {Park}}, \bibinfo
  {author} {\bibfnamefont {V.}~\bibnamefont {Michaud-Rioux}}, \bibinfo {author}
  {\bibfnamefont {I.}~\bibnamefont {Schoch}}, \bibinfo {author} {\bibfnamefont
  {S.}~\bibnamefont {Shu}}, \bibinfo {author} {\bibfnamefont {T.}~\bibnamefont
  {Vincent}},\ and\ \bibinfo {author} {\bibfnamefont {L.~J.}\ \bibnamefont
  {O'Riordan}},\ }\href@noop {} {\bibinfo {title} {Hybrid quantum programming
  with pennylane lightning on {HPC} platforms}} (\bibinfo {year} {2024}),\
  \Eprint {https://arxiv.org/abs/2403.02512} {arXiv:2403.02512 [quant-ph]}
  \BibitemShut {NoStop}%
\bibitem [{\citenamefont {Bravyi}\ \emph {et~al.}(2021)\citenamefont {Bravyi},
  \citenamefont {Gosset},\ and\ \citenamefont
  {Movassagh}}]{bravyi2021classical}%
  \BibitemOpen
  \bibfield  {author} {\bibinfo {author} {\bibfnamefont {S.}~\bibnamefont
  {Bravyi}}, \bibinfo {author} {\bibfnamefont {D.}~\bibnamefont {Gosset}},\
  and\ \bibinfo {author} {\bibfnamefont {R.}~\bibnamefont {Movassagh}},\
  }\bibfield  {title} {\bibinfo {title} {Classical algorithms for quantum mean
  values},\ }\href@noop {} {\bibfield  {journal} {\bibinfo  {journal} {Nat.
  Phys.}\ }\textbf {\bibinfo {volume} {17}},\ \bibinfo {pages} {337} (\bibinfo
  {year} {2021})}\BibitemShut {NoStop}%
\bibitem [{\citenamefont {Coble}\ and\ \citenamefont
  {Coudron}(2022)}]{coble2022quasi}%
  \BibitemOpen
  \bibfield  {author} {\bibinfo {author} {\bibfnamefont {N.~J.}\ \bibnamefont
  {Coble}}\ and\ \bibinfo {author} {\bibfnamefont {M.}~\bibnamefont
  {Coudron}},\ }\bibfield  {title} {\bibinfo {title} {Quasi-polynomial time
  approximation of output probabilities of geometrically-local, shallow quantum
  circuits},\ }in\ \href@noop {} {\emph {\bibinfo {booktitle} {2021 IEEE 62nd
  Annual Symposium on Foundations of Computer Science (FOCS)}}}\ (\bibinfo
  {organization} {IEEE},\ \bibinfo {year} {2022})\ pp.\ \bibinfo {pages}
  {598--609}\BibitemShut {NoStop}%
\bibitem [{\citenamefont {Kuwahara}\ \emph {et~al.}(2016)\citenamefont
  {Kuwahara}, \citenamefont {Mori},\ and\ \citenamefont
  {Saito}}]{kuwahara2016floquet}%
  \BibitemOpen
  \bibfield  {author} {\bibinfo {author} {\bibfnamefont {T.}~\bibnamefont
  {Kuwahara}}, \bibinfo {author} {\bibfnamefont {T.}~\bibnamefont {Mori}},\
  and\ \bibinfo {author} {\bibfnamefont {K.}~\bibnamefont {Saito}},\ }\bibfield
   {title} {\bibinfo {title} {{F}loquet--{M}agnus theory and generic transient
  dynamics in periodically driven many-body quantum systems},\ }\href@noop {}
  {\bibfield  {journal} {\bibinfo  {journal} {Annals of Physics}\ }\textbf
  {\bibinfo {volume} {367}},\ \bibinfo {pages} {96} (\bibinfo {year}
  {2016})}\BibitemShut {NoStop}%
\bibitem [{\citenamefont {Eisert}\ \emph {et~al.}(2015)\citenamefont {Eisert},
  \citenamefont {Friesdorf},\ and\ \citenamefont
  {Gogolin}}]{eisert2015quantum}%
  \BibitemOpen
  \bibfield  {author} {\bibinfo {author} {\bibfnamefont {J.}~\bibnamefont
  {Eisert}}, \bibinfo {author} {\bibfnamefont {M.}~\bibnamefont {Friesdorf}},\
  and\ \bibinfo {author} {\bibfnamefont {C.}~\bibnamefont {Gogolin}},\
  }\bibfield  {title} {\bibinfo {title} {Quantum many-body systems out of
  equilibrium},\ }\href@noop {} {\bibfield  {journal} {\bibinfo  {journal}
  {Nat. Phys.}\ }\textbf {\bibinfo {volume} {11}},\ \bibinfo {pages} {124}
  (\bibinfo {year} {2015})}\BibitemShut {NoStop}%
\bibitem [{\citenamefont {Gogolin}\ and\ \citenamefont
  {Eisert}(2016)}]{gogolin2016equilibration}%
  \BibitemOpen
  \bibfield  {author} {\bibinfo {author} {\bibfnamefont {C.}~\bibnamefont
  {Gogolin}}\ and\ \bibinfo {author} {\bibfnamefont {J.}~\bibnamefont
  {Eisert}},\ }\bibfield  {title} {\bibinfo {title} {Equilibration,
  thermalisation, and the emergence of statistical mechanics in closed quantum
  systems},\ }\href@noop {} {\bibfield  {journal} {\bibinfo  {journal} {Rep.
  Prog. Phys}\ }\textbf {\bibinfo {volume} {79}},\ \bibinfo {pages} {056001}
  (\bibinfo {year} {2016})}\BibitemShut {NoStop}%
\bibitem [{\citenamefont {Nandkishore}\ and\ \citenamefont
  {Huse}(2015)}]{nandkishore2015many}%
  \BibitemOpen
  \bibfield  {author} {\bibinfo {author} {\bibfnamefont {R.}~\bibnamefont
  {Nandkishore}}\ and\ \bibinfo {author} {\bibfnamefont {D.~A.}\ \bibnamefont
  {Huse}},\ }\bibfield  {title} {\bibinfo {title} {Many-body localization and
  thermalization in quantum statistical mechanics},\ }\href@noop {} {\bibfield
  {journal} {\bibinfo  {journal} {Annu. Rev. Condens. Matter Phys.}\ }\textbf
  {\bibinfo {volume} {6}},\ \bibinfo {pages} {15} (\bibinfo {year}
  {2015})}\BibitemShut {NoStop}%
\bibitem [{\citenamefont {D'Alessio}\ \emph {et~al.}(2016)\citenamefont
  {D'Alessio}, \citenamefont {Kafri}, \citenamefont {Polkovnikov},\ and\
  \citenamefont {Rigol}}]{d2016quantum}%
  \BibitemOpen
  \bibfield  {author} {\bibinfo {author} {\bibfnamefont {L.}~\bibnamefont
  {D'Alessio}}, \bibinfo {author} {\bibfnamefont {Y.}~\bibnamefont {Kafri}},
  \bibinfo {author} {\bibfnamefont {A.}~\bibnamefont {Polkovnikov}},\ and\
  \bibinfo {author} {\bibfnamefont {M.}~\bibnamefont {Rigol}},\ }\bibfield
  {title} {\bibinfo {title} {From quantum chaos and eigenstate thermalization
  to statistical mechanics and thermodynamics},\ }\href@noop {} {\bibfield
  {journal} {\bibinfo  {journal} {Adv. Phys.}\ }\textbf {\bibinfo {volume}
  {65}},\ \bibinfo {pages} {239} (\bibinfo {year} {2016})}\BibitemShut
  {NoStop}%
\bibitem [{\citenamefont {Altshuler}\ \emph {et~al.}(1980)\citenamefont
  {Altshuler}, \citenamefont {Aronov},\ and\ \citenamefont
  {Lee}}]{altshuler1980interaction}%
  \BibitemOpen
  \bibfield  {author} {\bibinfo {author} {\bibfnamefont {B.~L.}\ \bibnamefont
  {Altshuler}}, \bibinfo {author} {\bibfnamefont {A.~G.}\ \bibnamefont
  {Aronov}},\ and\ \bibinfo {author} {\bibfnamefont {P.}~\bibnamefont {Lee}},\
  }\bibfield  {title} {\bibinfo {title} {Interaction effects in disordered
  fermi systems in two dimensions},\ }\href@noop {} {\bibfield  {journal}
  {\bibinfo  {journal} {Phys. Rev. Lett.}\ }\textbf {\bibinfo {volume} {44}},\
  \bibinfo {pages} {1288} (\bibinfo {year} {1980})}\BibitemShut {NoStop}%
\bibitem [{\citenamefont {Alet}\ and\ \citenamefont
  {Laflorencie}(2018)}]{alet2018many}%
  \BibitemOpen
  \bibfield  {author} {\bibinfo {author} {\bibfnamefont {F.}~\bibnamefont
  {Alet}}\ and\ \bibinfo {author} {\bibfnamefont {N.}~\bibnamefont
  {Laflorencie}},\ }\bibfield  {title} {\bibinfo {title} {Many-body
  localization: An introduction and selected topics},\ }\href@noop {}
  {\bibfield  {journal} {\bibinfo  {journal} {Comptes Rendus Physique}\
  }\textbf {\bibinfo {volume} {19}},\ \bibinfo {pages} {498} (\bibinfo {year}
  {2018})}\BibitemShut {NoStop}%
\bibitem [{\citenamefont {{\v{Z}}nidari{\v{c}}}\ \emph
  {et~al.}(2008)\citenamefont {{\v{Z}}nidari{\v{c}}}, \citenamefont {Prosen},\
  and\ \citenamefont {Prelov{\v{s}}ek}}]{vznidarivc2008many}%
  \BibitemOpen
  \bibfield  {author} {\bibinfo {author} {\bibfnamefont {M.}~\bibnamefont
  {{\v{Z}}nidari{\v{c}}}}, \bibinfo {author} {\bibfnamefont {T.}~\bibnamefont
  {Prosen}},\ and\ \bibinfo {author} {\bibfnamefont {P.}~\bibnamefont
  {Prelov{\v{s}}ek}},\ }\bibfield  {title} {\bibinfo {title} {Many-body
  localization in the {H}eisenberg {XXZ} magnet in a random field},\
  }\href@noop {} {\bibfield  {journal} {\bibinfo  {journal} {Phys. Rev. B}\
  }\textbf {\bibinfo {volume} {77}},\ \bibinfo {pages} {064426} (\bibinfo
  {year} {2008})}\BibitemShut {NoStop}%
\bibitem [{\citenamefont {Kim}\ \emph {et~al.}(2014)\citenamefont {Kim},
  \citenamefont {Chandran},\ and\ \citenamefont {Abanin}}]{kim2014local}%
  \BibitemOpen
  \bibfield  {author} {\bibinfo {author} {\bibfnamefont {I.~H.}\ \bibnamefont
  {Kim}}, \bibinfo {author} {\bibfnamefont {A.}~\bibnamefont {Chandran}},\ and\
  \bibinfo {author} {\bibfnamefont {D.~A.}\ \bibnamefont {Abanin}},\ }\bibfield
   {title} {\bibinfo {title} {Local integrals of motion and the logarithmic
  lightcone in many-body localized systems},\ }\href@noop {} {\bibfield
  {journal} {\bibinfo  {journal} {arXiv preprint arXiv:1412.3073}\ } (\bibinfo
  {year} {2014})}\BibitemShut {NoStop}%
\bibitem [{\citenamefont {Ponte}\ \emph {et~al.}(2015)\citenamefont {Ponte},
  \citenamefont {Papi{\'c}}, \citenamefont {Huveneers},\ and\ \citenamefont
  {Abanin}}]{ponte2015many}%
  \BibitemOpen
  \bibfield  {author} {\bibinfo {author} {\bibfnamefont {P.}~\bibnamefont
  {Ponte}}, \bibinfo {author} {\bibfnamefont {Z.}~\bibnamefont {Papi{\'c}}},
  \bibinfo {author} {\bibfnamefont {F.}~\bibnamefont {Huveneers}},\ and\
  \bibinfo {author} {\bibfnamefont {D.~A.}\ \bibnamefont {Abanin}},\ }\bibfield
   {title} {\bibinfo {title} {Many-body localization in periodically driven
  systems},\ }\href@noop {} {\bibfield  {journal} {\bibinfo  {journal} {Phys.
  Rev. Lett.}\ }\textbf {\bibinfo {volume} {114}},\ \bibinfo {pages} {140401}
  (\bibinfo {year} {2015})}\BibitemShut {NoStop}%
\bibitem [{\citenamefont {Zhang}\ \emph {et~al.}(2016)\citenamefont {Zhang},
  \citenamefont {Khemani},\ and\ \citenamefont {Huse}}]{zhang2016floquet}%
  \BibitemOpen
  \bibfield  {author} {\bibinfo {author} {\bibfnamefont {L.}~\bibnamefont
  {Zhang}}, \bibinfo {author} {\bibfnamefont {V.}~\bibnamefont {Khemani}},\
  and\ \bibinfo {author} {\bibfnamefont {D.~A.}\ \bibnamefont {Huse}},\
  }\bibfield  {title} {\bibinfo {title} {A floquet model for the many-body
  localization transition},\ }\href@noop {} {\bibfield  {journal} {\bibinfo
  {journal} {Phys. Rev. B}\ }\textbf {\bibinfo {volume} {94}},\ \bibinfo
  {pages} {224202} (\bibinfo {year} {2016})}\BibitemShut {NoStop}%
\bibitem [{\citenamefont {Kj{\"a}ll}\ \emph {et~al.}(2014)\citenamefont
  {Kj{\"a}ll}, \citenamefont {Bardarson},\ and\ \citenamefont
  {Pollmann}}]{kjall2014many}%
  \BibitemOpen
  \bibfield  {author} {\bibinfo {author} {\bibfnamefont {J.~A.}\ \bibnamefont
  {Kj{\"a}ll}}, \bibinfo {author} {\bibfnamefont {J.~H.}\ \bibnamefont
  {Bardarson}},\ and\ \bibinfo {author} {\bibfnamefont {F.}~\bibnamefont
  {Pollmann}},\ }\bibfield  {title} {\bibinfo {title} {Many-body localization
  in a disordered quantum ising chain},\ }\href@noop {} {\bibfield  {journal}
  {\bibinfo  {journal} {Phys. Rev. Lett.}\ }\textbf {\bibinfo {volume} {113}},\
  \bibinfo {pages} {107204} (\bibinfo {year} {2014})}\BibitemShut {NoStop}%
\bibitem [{\citenamefont {Cubitt}\ and\ \citenamefont
  {Montanaro}(2016)}]{cubitt2016complexity}%
  \BibitemOpen
  \bibfield  {author} {\bibinfo {author} {\bibfnamefont {T.}~\bibnamefont
  {Cubitt}}\ and\ \bibinfo {author} {\bibfnamefont {A.}~\bibnamefont
  {Montanaro}},\ }\bibfield  {title} {\bibinfo {title} {Complexity
  classification of local hamiltonian problems},\ }\href@noop {} {\bibfield
  {journal} {\bibinfo  {journal} {SIAM Journal on Computing}\ }\textbf
  {\bibinfo {volume} {45}},\ \bibinfo {pages} {268} (\bibinfo {year}
  {2016})}\BibitemShut {NoStop}%
\bibitem [{\citenamefont {Schuld}\ and\ \citenamefont
  {Petruccione}(2018)}]{schuld2018supervised}%
  \BibitemOpen
  \bibfield  {author} {\bibinfo {author} {\bibfnamefont {M.}~\bibnamefont
  {Schuld}}\ and\ \bibinfo {author} {\bibfnamefont {F.}~\bibnamefont
  {Petruccione}},\ }\href@noop {} {\emph {\bibinfo {title} {Supervised learning
  with quantum computers}}},\ Vol.~\bibinfo {volume} {17}\ (\bibinfo
  {publisher} {Springer},\ \bibinfo {year} {2018})\BibitemShut {NoStop}%
\end{thebibliography}%

\onecolumngrid
\appendix

\setcounter{equation}{0}%
\renewcommand{\theequation}{\thesection.\arabic{equation}}%
\setcounter{figure}{0}%
\renewcommand{\thefigure}{\thesection.\arabic{figure}}%
\setcounter{lemma}{0}%
\renewcommand{\thelemma}{\thesection.\arabic{lemma}}%
\setcounter{theorem}{0}%
\renewcommand{\thetheorem}{\thesection.\arabic{theorem}}%
\setcounter{corollary}{0}%
\renewcommand{\thecorollary}{\thesection.\arabic{corollary}}%

\section{Parameter constraint for lower bounding the gradient magnitudes by a constant}
\label{app:small_param_constant_grad}

This section presents a proof of Theorem~1 in the main text.
Under the assumption that the hardware-efficient ansatz (HEA) has a gradient component whose magnitude is constant when all parameters are zero,
Theorem~1 states that the circuit still has a gradient component with a constant magnitude when parameters satisfy a certain condition.

\subsection{Constant gradient magnitudes for the Hamiltonian variational ansatz}
In this subsection, we generalize the main results of Ref.~\cite{park2024hamiltonian}, which proved that a parameter condition such that the Hamiltonian variational ansatz (HVA) has large gradients exists.
The main difference is that we consider a circuit whose gates are generated by local operators instead of local Hamiltonians as in Ref.~\cite{park2024hamiltonian}.
Still, if we restrict operators in each layer to commute mutually, we can group them together and make a local Hamiltonian.
Using this process, we interpret the resulting circuit as the HVA and prove the same bound following Ref.~\cite{park2024hamiltonian}.
In addition, our proof here also works for a global observable, in contrast to Ref.~\cite{park2024hamiltonian}, which only considered a local observable.

Let us consider a parameterized quantum circuit for a system with $N$ qubits, given by
\begin{align}
    \ket{\psi(\pmb{\theta})} = \lprod_{i=1}^D \bigl[ \prod_{j=1}^{M_i} e^{-i G^{(i,j)} \theta_{i,j}} \bigr] \ket{\psi_0}, \label{eq:generalized_paramerized_circuit}
\end{align}
where generators $G^{(i,j)}$ for each $i$ are mutually commuting, i.e., $[G^{(i,j)},G^{(i,j')}]=0$ for all $i,j,j'$.
We also assume that each $G^{(i,j)}$ is a local operator, acting on at most $\Theta(1)$ sites and $M_i = \mathcal{O}(N)$.
Here, two notations $\prod$ and $\lprod$ are defined as 
\begin{gather}
    \prod_{i=1}^k U_i := U_1 \cdots U_k, \qquad \lprod_{i=1}^k U_i = U_k\cdots U_1.
\end{gather}
The circuit given in Eq.~\ref{eq:generalized_paramerized_circuit} can be considered a generalized version of the HVA.

We now consider the cost function is given by
\begin{align}
    C = \braket{\psi(\pmb{\theta}) | O | \psi(\pmb{\theta})}.
\end{align}
A component of the gradient for $C$ is obtained as
\begin{align*}
    \partial_{n,m}C &:= \frac{\partial C}{\partial \theta_{n,m}} \\
    & = \Bigl\langle \psi_0 \Bigl \vert \prod_{i=1}^n \bigl[ \prod_{j=1}^{M_i} e^{i G^{(i,j)} \theta_{i,j}} \bigr] (iG^{(n,m)}) \prod_{i=n}^D \bigl[ \prod_{j=1}^{M_i} e^{i G^{(i,j)} \theta_{i,j}} \bigr] O \lprod_{i=1}^D \bigl[ \prod_{j=1}^{M_i} e^{-i G^{(i,j)} \theta_{i,j}} \bigr] \vert \psi_0 \Bigr \rangle \\
    &\quad + \Bigl\langle \psi_0 \Bigl \vert \prod_{i=1}^D \bigl[ \prod_{j=1}^{M_i} e^{i G^{(i,j)} \theta_{i,j}} \bigr] O \lprod_{i=n}^D \bigl[ \prod_{j=1}^{M_i} e^{-i G^{(i,j)} \theta_{i,j}} \bigr] (-iG^{(n,m)}) \lprod_{i=1}^n \bigl[ \prod_{j=1}^{M_i} e^{-i G^{(i,j)} \theta_{i,j}} \bigr] \Bigr \vert \psi_0 \Bigr \rangle \\
    &= i \braket{\psi_0 | U_B^\dagger  [G_{n,m}, U_A^\dagger O U_A] U_B | \psi_0},
\end{align*}
where
\begin{gather}
    U_A = \lprod_{i=n}^D \bigl[ \prod_{j=1}^{M_i} e^{-i G^{(i,j)} \theta_{i,j}} \bigr], \qquad U_B = \lprod_{i=1}^n \bigl[ \prod_{j=1}^{M_i} e^{-i G^{(i,j)} \theta_{i,j}} \bigr] \label{eq:def_u_a_b}.
\end{gather}

Under this setup, we have the following lemma.
\begin{lemma}\label{lm:grad_speed_limit}
    For $\rho_0 = \ket{\psi_0}\bra{\psi_0}$, let us assume that $|\Tr[\rho_0 [G^{(n,m)},O]]| > 0$, and there exist Hamiltonians $H_A$, $H_B$ such that $U_A = e^{-i H_A t_A}$ and $U_B = e^{-i H_B t_B}$ for some $t_A,t_B \geq 0$. Then,
    \begin{align}
        |\partial_{n,m}C| \geq |\Tr[\rho_0 [G^{(n,m)},O]]|/2
    \end{align}
    for $t_A+t_B \leq t_c  := |\Tr[\rho_0 [G^{(n,m)},O]]|/(4KQ)$, where $K=\max\{\Vert H_B \Vert, \Vert [H_A, O] \Vert\}$ and $Q=\max\{\Vert [G^{(n,m)},O] \Vert, \Vert G^{(n,m)} \Vert \}$.
    Here, $\Vert \cdot \Vert$ is the operator norm.
\end{lemma}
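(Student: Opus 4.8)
The plan is to regard $\partial_{n,m}C$ as a function of the two evolution times $t_A,t_B\ge 0$ and to show it cannot drift far from its value at $t_A=t_B=0$. Substituting $U_A=U_B=\Id$ into the expression $\partial_{n,m}C = i\braket{\psi_0|U_B^\dagger[G^{(n,m)}, U_A^\dagger O U_A]U_B|\psi_0}$ derived above gives
\[
\partial_{n,m}C\big|_{t_A=t_B=0} = i\,\Tr[\rho_0\,[G^{(n,m)},O]],
\]
whose magnitude is exactly the quantity $|\Tr[\rho_0[G^{(n,m)},O]]|$ appearing in the claim. Hence it suffices to prove the Lipschitz-type estimate
\[
\bigl|\,\partial_{n,m}C - i\,\Tr[\rho_0[G^{(n,m)},O]]\,\bigr| \le 2(t_A+t_B)\,KQ,
\]
because then for $t_A+t_B\le t_c=|\Tr[\rho_0[G^{(n,m)},O]]|/(4KQ)$ the right-hand side is at most $|\Tr[\rho_0[G^{(n,m)},O]]|/2$, and the reverse triangle inequality immediately yields $|\partial_{n,m}C|\ge |\Tr[\rho_0[G^{(n,m)},O]]|/2$, which is the assertion of the lemma.

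To establish the estimate I would introduce the intermediate quantity $g:=i\braket{\psi_0|U_B^\dagger[G^{(n,m)},O]U_B|\psi_0}$, obtained by stripping $U_A$ out of the dressed observable while retaining the $U_B$ conjugation, and split the difference into a "$U_A$ part" $\partial_{n,m}C-g$ and a "$U_B$ part" $g-i\,\Tr[\rho_0[G^{(n,m)},O]]$. The specific choice of $g$ is the crucial point: it arranges that the $U_B$ conjugation acts on the bare $[G^{(n,m)},O]$ rather than on the dressed $[G^{(n,m)},U_A^\dagger O U_A]$, and that the $U_A$ dependence enters only through $[G^{(n,m)},\,U_A^\dagger O U_A - O\,]$ sandwiched by the unitary $U_B$. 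This is precisely what forces the two operator norms that appear to collapse onto $\|[G^{(n,m)},O]\|$ and $\|G^{(n,m)}\|$, matching the definition of $Q$.

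Both parts are then controlled by a single elementary tool: writing any Heisenberg rotation as $e^{iHt}Me^{-iHt}-M=\int_0^t e^{iHs}\,i[H,M]\,e^{-iHs}\,ds$ and taking operator norms gives $\|e^{iHt}Me^{-iHt}-M\|\le t\,\|[H,M]\|\le 2t\,\|H\|\,\|M\|$. Applying this with $H=H_B$ and $M=[G^{(n,m)},O]$ bounds the $U_B$ part by $2t_B\|H_B\|\,\|[G^{(n,m)},O]\|\le 2t_B KQ$; applying it with $H=H_A$ and $M=O$ and then invoking $\|[G^{(n,m)},\,\cdot\,]\|\le 2\|G^{(n,m)}\|\,\|\cdot\,\|$ bounds the $U_A$ part by $2t_A\|G^{(n,m)}\|\,\|[H_A,O]\|\le 2t_A QK$, where the surrounding $U_B$ drops out since conjugation by a unitary preserves the operator norm. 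Summing the two contributions gives the required $2(t_A+t_B)KQ$. The only genuine obstacle is the bookkeeping of the previous paragraph: a naive simultaneous perturbation of $U_A$ and $U_B$ would produce the norm $\|[G^{(n,m)},U_A^\dagger O U_A]\|$, which is \emph{not} controlled by $Q$, so the two deformations must be separated in exactly this order so that the relevant norms reduce to $\|[G^{(n,m)},O]\|$, $\|G^{(n,m)}\|$, $\|H_B\|$, and $\|[H_A,O]\|$ — the four quantities that define $K$ and $Q$.
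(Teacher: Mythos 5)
Your proposal is correct and follows essentially the same route as the paper: your intermediate quantity $g$ is exactly the paper's $A(0,t_B)$, so your split into a ``$U_A$ part'' and a ``$U_B$ part'' reproduces the paper's path $(0,0)\to(0,t_B)\to(t_A,t_B)$, and your four norm bounds match the paper's bounds on $|\partial A/\partial t_1|$ and $|\partial A/\partial t_2|$ term by term. The only cosmetic difference is that you integrate the Duhamel identity at the operator level, $\Vert e^{iHt}Me^{-iHt}-M\Vert \leq t\Vert[H,M]\Vert$, whereas the paper differentiates the scalar function $A(t_1,t_2)$ and then integrates; these are equivalent.
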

\begin{proof}
Let 
\begin{align}
    A(t_1, t_2) = i\Tr [ e^{-iH_B t_2} \rho_0 e^{i H_B t_2} [G^{(n,m)}, e^{iH_A t_1} O e^{-iH_A t_1}]].
\end{align}
One can see that $A(0, 0) = i\Tr[\rho_0 [G^{(n,m)},O]]$ and $A(t_A, t_B) = \partial_{n,m} C$.

Then, 
\begin{align}
    &|A(t_A, t_B) - A(0,0)| \leq \int_{0}^{t_A} dt_1 \Bigl| \frac{\partial A(t_1, t_B)}{\partial t_1}  \Bigr| + \int_{0}^{t_B} dt_2 \Bigl| \frac{\partial A(0, t_2)}{\partial t_2}  \Bigr|.
\end{align}
We further have
\begin{align*}
    \Bigl| \frac{d A(0,t_2)}{\partial t_2} \Bigr| &= \Bigl|\Tr\bigl\{ [H_B, \rho_0(t_2)] [G^{(n,m)}, O] \bigr\} \Bigr| \\
    &\leq 2 \Vert H_B \Vert \Vert [G^{(n,m)}, O] \Vert \leq 2KQ, \numberthis
\end{align*}
and 
\begin{align*}
    \Bigl| \frac{d A(t_1,t_B)}{\partial t_1} \Bigr| &= \Bigl|\Tr\bigl\{ \rho_0(t_B) [G^{(n,m)}, [H_A, e^{i H_A t_1} O e^{-i H_A t_1}]] \bigr\} \Bigr|\\
    &\leq 2 \Vert G^{(n,m)} \Vert \Vert [H_A, O] \Vert \leq 2KQ, \numberthis
\end{align*}
where $\rho_0(t) = e^{-iH_B t} \rho_0 e^{i H_B t}$.

Integrating both sides, we have
\begin{align}
    |A(t_R, t_L) - A(0, 0)| \leq 2 KQ(t_R + t_L). \label{eq:partial_deriv_bound}
\end{align}
By entering $t_R+t_L \leq t_c = |A(0, 0)|/(4KQ)$, we obtain the desired inequality.
\end{proof}
Note that we used different definitions of $K$ and $Q$ from Ref.~\cite{park2024hamiltonian} to incorporate that $G^{(n,m)}$ is a local operator (instead of a sum of local operators considered in Ref.~\cite{park2024hamiltonian}).

Although exact values of $K$ and $Q$ depend on the definitions of $H_{A}$, $H_B$, and $O$, the scaling of $K$ and $Q$ can be obtained under the reasonable assumptions that $H_A, H_B$ are local Hamiltonians (sums of operators acting on at most $\mathcal{O}(1)$ nearby sites in a given lattice) and $G^{(n,m)}$ is a local operator.
In this case, one can readily see that $K = \Theta(N)$, $Q=\Theta(1)$ both for $O$ given by (1) a Pauli string and (2) a local Hamiltonian.

Next, we find those Hamiltonians $H_A$ and $H_B$ for the parameterized circuit given by Eq.~\eqref{eq:generalized_paramerized_circuit}.
For this purpose, we rewrite each layer of the circuit. We introduce a Hamiltonians $H^{(i)}$ defined as
\begin{align}
    H^{(i)} := \sum_{j=1}^{M_i} G^{(i,j)} \frac{\theta_{i,j}}{\theta^{(i)}_{\rm max}}, \label{eq:ham_layer_i}
\end{align}
where $\theta^{(i)}_{\rm max} = \max_{j} \theta_{i,j}$. Using this Hamiltonian, our circuit can be rewritten as
\begin{align}
    U = \lprod_{i=1}^D e^{-i H^{(i)} \theta^{(i)}_{\rm max}}. \label{eq:generalized_paramerized_circuit_ham}
\end{align}

We also define two quantities, $H_{\rm max}$ and $J$, for the truncated Floquet-Magnus~\cite{kuwahara2016floquet} expansion.
First, $H_{\rm max}$ upper bounds the norm of the Hamiltonian:
\begin{align}
    H_{\rm max} = \max_{i} \Vert H^{(i)} \Vert
\end{align}
Second, $J$ upper bounds the local interaction strength. Let $\mathrm{supp}(O)$ be the set of sites that $O$ acts on. For example, $\mathrm{supp}(X_1X_2X_N) = \{1,2,N\}$.
Then, $J$ is defined by
\begin{align}
    J=\max_{i} \max_{a \in \Lambda} \sum_{j: \mathrm{supp}(G^{(i,j)}) \ni a} \left\Vert G^{(i,j)} \frac{\theta_{i,j}}{\theta^{(i)}_{\rm max}} \right\Vert,
\end{align}
where $\Lambda = [N] := \{1, \cdots , N\}$ is the set of all sites.

Finally, we assume that $G^{(i,j)}$ acts at most $k$ sites, i.e., $\max_{i,j} |\mathrm{supp}(G^{(i,j)})| \leq k$.
Then, the truncated Floquet-Magnus expansion~\cite{kuwahara2016floquet} gives the following result:
\begin{lemma}[Proposition~3 in Ref.~\cite{park2024hamiltonian}]\label{lm:truncated_fm_expansion}
    Let $U_A, U_B$ be the unitary operators defined in Eq.~\eqref{eq:def_u_a_b}.
    For the parameters $H_{\rm max}$ and $J$ defined above, we can find a $(r+1)k$-local Hamiltonian $H_A^{(r)}$ such that 
    \begin{align}
        \Bigl\Vert U_A - e^{-iH_A^{(r)} t_A}\Bigr\Vert \leq 6 H_{\rm max} 2^{-r_0} t_A + \frac{2 H_{\rm max}(2 k J)^{r+1}}{(r+1)^2} (r+1)! t_A^{r+2}, \label{eq:truncated_fm_error}
    \end{align}
    with $t_A = \sum_{i=n}^D \theta^{(i)}_{\rm max}$ for all $r \leq r_0 := \lfloor 1/(32 k J t_A) \rfloor$.
    Also, the same inequality holds for $U_B$ with $H_B$ and $t_B = \sum_{i=1}^n\theta^{(i)}_{\rm max}$.
\end{lemma}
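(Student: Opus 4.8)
The plan is to recognize $U_A$ as a Floquet propagator and then import the rigorous truncated Floquet--Magnus bound of Ref.~\cite{kuwahara2016floquet}. First I would use the intra-layer commutativity $[G^{(i,j)},G^{(i,j')}]=0$ to collapse each block into a single exponential, writing $U_A = \lprod_{i=n}^D e^{-iH^{(i)}\theta^{(i)}_{\rm max}}$ with $H^{(i)}$ as in Eq.~\eqref{eq:ham_layer_i}. Reading the successive blocks as consecutive time slices, this product is precisely the time-ordered evolution $\mathcal{T}\exp[-i\int_0^{t_A} H(s)\,ds]$ under the piecewise-constant Hamiltonian $H(s)=H^{(i)}$ on the $i$-th interval of length $\theta^{(i)}_{\rm max}$, with total elapsed time $t_A=\sum_{i=n}^D\theta^{(i)}_{\rm max}$. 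The normalization by $\theta^{(i)}_{\rm max}$ in Eq.~\eqref{eq:ham_layer_i} is exactly what redistributes the per-block parameter weights into the time measure so that the integrated generator reproduces $U_A$.

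With this identification I would define $H_A^{(r)}=\sum_{m=0}^{r}\Omega_m$ to be the truncated Floquet--Magnus series, where each $\Omega_m$ is the standard nested-commutator/nested-integral term built from $H(s)$. Two facts then need checking. For locality, each $\Omega_m$ is a sum of $m$-fold nested commutators of the generators; since every $G^{(i,j)}$ is $k$-local and a commutator of a $p$-local with an overlapping $q$-local operator is at most $(p+q)$-local (and vanishes otherwise), the $m$-th term is $(m+1)k$-local, so truncating at order $r$ yields the claimed $(r+1)k$-locality of $H_A^{(r)}$. For the norm growth, the key point is that it is the \emph{local} interaction strength $J$, not the extensive norm $H_{\rm max}$, that controls the combinatorics: the connected-cluster counting underlying the Floquet--Magnus bound gives $\Vert\Omega_m\Vert$ scaling like $H_{\rm max}(2kJ)^{m}\,m!$ up to polynomial factors, which is the origin of the $(2kJ)^{r+1}(r+1)!$ factor in Eq.~\eqref{eq:truncated_fm_error}.

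The error bound itself I would take directly from the main theorem of Ref.~\cite{kuwahara2016floquet}, which splits $\Vert U_A - e^{-iH_A^{(r)}t_A}\Vert$ into (i) an intrinsic prethermalization floor $\sim H_{\rm max}2^{-r_0}t_A$ set by the optimal truncation order $r_0\sim 1/(kJt_A)$, capturing the unavoidable residual heating of the Floquet dynamics, and (ii) the Taylor-type remainder of the series truncated at order $r\le r_0$, which is the second term $\frac{2H_{\rm max}(2kJ)^{r+1}}{(r+1)^2}(r+1)!\,t_A^{r+2}$. Matching the precise numerical constants, in particular why the validity window is $r\le r_0=\lfloor 1/(32kJt_A)\rfloor$, requires tracking the factors of two and the radius-of-convergence estimate from that paper. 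Finally I would observe that $U_B$ in Eq.~\eqref{eq:def_u_a_b} has exactly the same block structure, so the identical argument with $t_B=\sum_{i=1}^{n}\theta^{(i)}_{\rm max}$ gives the companion bound with $H_B$.

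I expect the main obstacle to be bookkeeping rather than anything conceptual: one must verify that the definitions of $H_{\rm max}$, $J$, $k$, and the total time $t_A$ as given here line up exactly with the hypotheses of the Kuwahara--Mori--Saito bound. The delicate point is that the combinatorial growth of the Magnus terms is governed by the local strength $J$, while the overall error prefactor scales with the global norm $H_{\rm max}=\Theta(N)$; keeping these two quantities separate throughout, and confirming that the $\theta^{(i)}_{\rm max}$ normalization does not inflate $J$, is what ultimately keeps the exponent $N$-independent and makes the resulting bound usable in Lemma~\ref{lm:grad_speed_limit}.
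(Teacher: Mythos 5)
Your proposal matches the paper's own argument: the paper likewise collapses each commuting layer into a single exponential $e^{-iH^{(i)}\theta^{(i)}_{\rm max}}$ (Eq.~\eqref{eq:generalized_paramerized_circuit_ham}) and then states that the lemma is a direct application of the main result of Ref.~\cite{kuwahara2016floquet} (via Proposition~3 of Ref.~\cite{park2024hamiltonian}), with $H_{\rm max}$, $J$, and $k$ playing exactly the roles you assign them. Your additional remarks on the $(r+1)k$-locality of the Magnus terms and on keeping the local strength $J$ separate from the extensive norm $H_{\rm max}$ are correct elaborations of what the paper leaves implicit in the citation.
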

This is a direct application of the main result of Ref.~\cite{kuwahara2016floquet} to Eq.~\eqref{eq:generalized_paramerized_circuit_ham}.

One can prove the following theorem by combining Lemmas~\ref{lm:grad_speed_limit} and \ref{lm:truncated_fm_expansion}.
\begin{theorem}\label{thm:smal_param_constant_grad_bound_app}
    For a circuit defined in Eq.~\eqref{eq:generalized_paramerized_circuit}, assume that $O$ is a Pauli-string or a $k$-local Hamiltonian, $g:=|\Tr[\rho_0 [G^{(n,m)}, O]]| > 0$, and $J=\mathcal{O}(1)$.
    Then, there exists $\gamma > 0$ such that 
    \begin{align}
        |\partial_{n,m}C| \geq g/4
    \end{align}
    is satisfied for sufficiently large $N$ if $\sum_{i=1}^D \theta^{(i)}_{\rm max} \leq \gamma / N$.
\end{theorem}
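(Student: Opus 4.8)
The plan is to bootstrap from the two lemmas as follows. Lemma~\ref{lm:grad_speed_limit} already delivers a bound of the form $|\partial_{n,m}C|\geq g/2$, but only under the idealized hypothesis that $U_A$ and $U_B$ are \emph{exact} single-Hamiltonian exponentials. The subcircuits in Eq.~\eqref{eq:def_u_a_b} are instead Floquet products of many layers, so the role of Lemma~\ref{lm:truncated_fm_expansion} is to replace them by genuine exponentials $\widetilde U_A = e^{-i H_A^{(r)} t_A}$ and $\widetilde U_B = e^{-i H_B^{(r)} t_B}$ up to a controllable error, at which point Lemma~\ref{lm:grad_speed_limit} applies to the approximate objects and the residual error is absorbed into the gap between $g/2$ and the claimed $g/4$.

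First I would define the approximate gradient $\widetilde{\partial_{n,m}C} = i\,\Tr[\widetilde U_B \rho_0 \widetilde U_B^\dagger\,[G^{(n,m)}, \widetilde U_A^\dagger O \widetilde U_A]]$, obtained from the exact expression $\partial_{n,m}C = i\,\Tr[U_B\rho_0 U_B^\dagger[G^{(n,m)}, U_A^\dagger O U_A]]$ by the substitution $U_{A,B}\mapsto\widetilde U_{A,B}$; by construction $\widetilde{\partial_{n,m}C}$ is exactly the quantity $A(t_A,t_B)$ appearing in the proof of Lemma~\ref{lm:grad_speed_limit}. Applying that lemma to $\widetilde{\partial_{n,m}C}$ yields $|\widetilde{\partial_{n,m}C}|\geq g/2$ whenever $t_A+t_B \leq t_c = g/(4KQ)$, with $K=\max\{\Vert H_B^{(r)}\Vert,\Vert[H_A^{(r)},O]\Vert\}$ and $Q=\max\{\Vert[G^{(n,m)},O]\Vert,\Vert G^{(n,m)}\Vert\}$. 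Under the stated assumptions ($H_A^{(r)},H_B^{(r)}$ are $\mathcal{O}(1)$-local with extensive norm, $G^{(n,m)}$ local, $O$ a Pauli string or $k$-local Hamiltonian with $J=\mathcal{O}(1)$) one verifies $K=\Theta(N)$ and $Q=\Theta(1)$, so $t_c=\Theta(g/N)$; hence choosing $\gamma$ small enough that $\sum_i \theta^{(i)}_{\max}\leq \gamma/N$ forces $t_A+t_B\leq t_c$.

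Next I would control the perturbation $|\partial_{n,m}C-\widetilde{\partial_{n,m}C}|$ via the unitary-conjugation estimate $\Vert U_A^\dagger O U_A - \widetilde U_A^\dagger O \widetilde U_A\Vert \leq 2\Vert O\Vert\,\epsilon_A$ (with $\epsilon_A=\Vert U_A-\widetilde U_A\Vert$, and similarly $\epsilon_B$), which after taking the commutator with $G^{(n,m)}$ and pairing against the trace-one state $\widetilde U_B\rho_0\widetilde U_B^\dagger$ gives $|\partial_{n,m}C-\widetilde{\partial_{n,m}C}|\leq c_0\Vert G^{(n,m)}\Vert\,\Vert O\Vert\,(\epsilon_A+\epsilon_B)$ for an absolute constant $c_0$. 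The errors $\epsilon_{A},\epsilon_B$ are read off from Eq.~\eqref{eq:truncated_fm_error}: fixing a \emph{constant} truncation order $r$ (e.g. $r=1$), the first term $\sim H_{\max}2^{-r_0}t_A$ is exponentially small in $N$ because $t_A=\Theta(\gamma/N)$ forces $r_0=\Theta(N)$, while the second is $\Theta(\gamma^{r+2}/N^{r+1})$ since $k,J=\mathcal{O}(1)$ and $H_{\max}=\Theta(N)$; the condition $r\leq r_0$ then holds automatically for large $N$.

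The final step is bookkeeping with $\Vert O\Vert=\mathcal{O}(N)$ (tight only for a $k$-local Hamiltonian; $\Vert O\Vert=1$ for a Pauli string): the product $\Vert O\Vert(\epsilon_A+\epsilon_B)=\Theta(\gamma^{r+2}/N^{r})$ vanishes as $N\to\infty$ for any fixed $r\geq1$. Thus, after first fixing $\gamma$ to meet the $t_c$ constraint and then taking $N$ large enough to make the perturbation at most $g/4$, the triangle inequality gives $|\partial_{n,m}C|\geq|\widetilde{\partial_{n,m}C}|-|\partial_{n,m}C-\widetilde{\partial_{n,m}C}|\geq g/2-g/4=g/4$. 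The main obstacle I anticipate is precisely this error propagation in the global case, where $\Vert O\Vert$ grows with $N$: one must confirm that the Floquet-Magnus error decays fast enough in $N$ to beat this factor, which is exactly what forces the choice of a fixed $r\geq1$ together with the $t_A=\Theta(\gamma/N)$ scaling, and what requires establishing the $K=\Theta(N)$, $Q=\Theta(1)$ scaling uniformly over both admissible families of observables.
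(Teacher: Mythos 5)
Your proposal is correct and follows essentially the same route as the paper's own proof: approximate $U_A,U_B$ by the truncated Floquet--Magnus exponentials (Lemma~\ref{lm:truncated_fm_expansion} with fixed $r=1$), apply Lemma~\ref{lm:grad_speed_limit} to the approximants with $K=\Theta(N)$, $Q=\Theta(1)$ to get $g/2$, and absorb the $\mathcal{O}(\epsilon\Vert G^{(n,m)}\Vert\Vert O\Vert)$ substitution error (the paper's auxiliary lemma, your unitary-conjugation estimate) into the gap between $g/2$ and $g/4$ for large $N$. Your treatment is, if anything, slightly more explicit than the paper's on why the $2^{-r_0}$ term is exponentially small and on the bookkeeping for $\Vert O\Vert=\mathcal{O}(N)$ in the $k$-local case.
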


\begin{proof}
First, from Lemma~\ref{lm:truncated_fm_expansion}, we can set $r=1$ if $t_A \leq (32 k J)^{-1}$.
Then, the error from the truncated Floquet-Magus expansion, given by the RHS of Eq.~\eqref{eq:truncated_fm_error}, can be lower bounded by $\mathcal{O}(H_{\rm max} t_A^3)$.
In addition, from the definition of $H^{(i)}$ (see Eq.~\eqref{eq:ham_layer_i}), we obtain $K=\mathcal{O}(N)$ for Lemma~\ref{lm:grad_speed_limit}.
Likewise, since $G^{(i,j)}$ are local operators, we have $Q=\mathcal{O}(1)$.
With the lemma given below, we can find a constant $\gamma_1 > 0$ such that the following error bound is satisfied for $t_{A,B} \geq 0$ and $t_A+t_B \leq \gamma_1/N$:
\begin{align}
    \partial_{n,m} C = g/2 + \mathcal{O}(\epsilon \Vert G^{(n,m)} \Vert \Vert O \Vert ),
\end{align}
where $g:=|\Tr[\rho_0 [G^{(n,m)}, O]]|$ and $\epsilon$ is the RHS of Eq.~\eqref{eq:truncated_fm_error}.
Given that we can also find $\gamma_2 >0$ such that $\epsilon = \mathcal{O}(1/N^2)$ for $t_A + t_B \leq \gamma_2/N$, the desired result can be proven for sufficiently large $N$ with $\gamma = \min\{\gamma_1, \gamma_2\}$.
\end{proof}
See also Ref.~\cite{park2024hamiltonian} for more detailed error analysis (without big-$\mathcal{O}$ notations).

\begin{lemma}
    Suppose that $\rho$ is a density matrix satisfying $\rho \geq 0$ and $\Tr[\rho]=1$ and $A$ is an  Hermitian operator, and $U_1$, $U_2$, $\tilde{U}_1$, $\tilde{U}_2$ are unitary operators. We further assume that $|U_1 - \tilde{U}_1| \leq \epsilon$ and $|U_2 - \tilde{U}_2| \leq \epsilon$. Then, the following inequality holds:
    \begin{align}
    \bigl| i\Tr[U_1 \rho U_1^\dagger [A, U_2^\dagger O U_2]] - i\Tr[\tilde{U}_1 \rho \tilde{U}_1^\dagger [A, \tilde{U}_2^\dagger O \tilde{U}_2]] \bigr| \leq 8 \epsilon \Vert A \Vert \Vert O\Vert. \label{eq:grad_diff_bound}
    \end{align}
\end{lemma}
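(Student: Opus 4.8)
The plan is to prove the bound by a hybrid (telescoping) argument that swaps the two pairs of unitaries one at a time, combined with the H\"older-type trace inequality $|\Tr[MN]| \leq \Vert M \Vert_1 \Vert N \Vert$, where $\Vert \cdot \Vert_1$ denotes the trace norm and $\Vert \cdot \Vert$ the operator norm. The key observation guiding the whole estimate is that the perturbation in the ``state slot'' $U_1 \rho U_1^\dagger$ should be controlled in \emph{trace} norm, so that $\Vert \rho \Vert_1 = 1$ can be exploited, whereas the perturbation in the ``operator slot'' $U_2^\dagger O U_2$ should be controlled in \emph{operator} norm.

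First I would establish two auxiliary conjugation bounds. For the state, writing $U_1 \rho U_1^\dagger - \tilde{U}_1 \rho \tilde{U}_1^\dagger = (U_1-\tilde{U}_1)\rho U_1^\dagger + \tilde{U}_1 \rho (U_1^\dagger - \tilde{U}_1^\dagger)$ and applying $\Vert ABC\Vert_1 \leq \Vert A \Vert\,\Vert B\Vert_1\,\Vert C\Vert$ together with $\Vert \rho\Vert_1 = 1$, $\Vert U_1\Vert = \Vert \tilde U_1\Vert = 1$, and the hypothesis $\Vert U_1-\tilde U_1\Vert\le\epsilon$ gives $\Vert U_1 \rho U_1^\dagger - \tilde{U}_1 \rho \tilde{U}_1^\dagger\Vert_1 \leq 2\epsilon$. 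An identical decomposition in the operator slot yields $\Vert U_2^\dagger O U_2 - \tilde{U}_2^\dagger O \tilde{U}_2 \Vert \leq 2\epsilon\,\Vert O \Vert$ in operator norm.

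Next I would introduce the shorthand $\sigma = U_1\rho U_1^\dagger$, $\tilde\sigma = \tilde U_1 \rho \tilde U_1^\dagger$, $B = [A, U_2^\dagger O U_2]$, and $\tilde B = [A, \tilde U_2^\dagger O \tilde U_2]$, and split the target difference as $i\Tr[(\sigma-\tilde\sigma)B] + i\Tr[\tilde\sigma(B-\tilde B)]$. For the first term I bound $\Vert B \Vert \leq 2\Vert A\Vert\,\Vert O\Vert$ using $\Vert [A, M]\Vert \leq 2\Vert A\Vert\,\Vert M\Vert$ and the unitary invariance $\Vert U_2^\dagger O U_2\Vert = \Vert O\Vert$, and combine it with $\Vert \sigma - \tilde\sigma\Vert_1 \leq 2\epsilon$ to obtain $\leq 4\epsilon\,\Vert A\Vert\,\Vert O\Vert$. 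For the second term I note $B - \tilde B = [A,\, U_2^\dagger O U_2 - \tilde U_2^\dagger O \tilde U_2]$, so the commutator bound gives $\Vert B - \tilde B\Vert \leq 2\Vert A\Vert\cdot 2\epsilon\,\Vert O\Vert = 4\epsilon\,\Vert A\Vert\,\Vert O\Vert$, which together with $\Vert\tilde\sigma\Vert_1 = 1$ yields $\leq 4\epsilon\,\Vert A\Vert\,\Vert O\Vert$. Summing the two contributions produces the claimed $8\epsilon\,\Vert A\Vert\,\Vert O\Vert$.

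The argument is entirely mechanical, so I do not anticipate a genuine obstacle beyond careful constant bookkeeping. The only point requiring slight care is the deliberate choice of norm in each slot, measuring the state perturbation in trace norm and the operator perturbation in operator norm; this is precisely what lets every factor of $\Vert\rho\Vert_1 = 1$ drop out cleanly and keeps the final constant equal to $8$ rather than blowing it up.
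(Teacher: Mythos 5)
Your proof is correct: every step checks out (the algebraic splittings, the H\"older bounds, the commutator bound $\Vert [A,M]\Vert \leq 2\Vert A\Vert\,\Vert M\Vert$, and unitary invariance), and the constants add to $4+4=8$ exactly as claimed. The route differs from the paper's in how the telescope is organized. You split at the level of the trace, $\Tr[\sigma B]-\Tr[\tilde\sigma\tilde B]=\Tr[(\sigma-\tilde\sigma)B]+\Tr[\tilde\sigma(B-\tilde B)]$ with $\sigma=U_1\rho U_1^\dagger$, and control the state perturbation in trace norm ($\Vert\sigma-\tilde\sigma\Vert_1\leq 2\epsilon$) and the observable perturbation in operator norm ($\Vert B-\tilde B\Vert\leq 4\epsilon\Vert A\Vert\,\Vert O\Vert$). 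The paper instead first uses cyclicity to write the difference as $\Tr[\rho\,(U_1^\dagger B U_1-\tilde U_1^\dagger\tilde B\tilde U_1)]$, bounds it by $\Vert U_1^\dagger B U_1-\tilde U_1^\dagger\tilde B\tilde U_1\Vert$ using only $|\Tr[\rho M]|\leq\Vert M\Vert$, and then runs a three-term telescope purely in operator norm, $\Vert B\Vert\,\Vert U_1-\tilde U_1\Vert+\Vert B-\tilde B\Vert+\Vert\tilde B\Vert\,\Vert U_1^\dagger-\tilde U_1^\dagger\Vert$, with the same bound on $\Vert B-\tilde B\Vert$. Both arguments allocate a budget of $4\epsilon\Vert A\Vert\,\Vert O\Vert$ to the $U_1$ perturbation and $4\epsilon\Vert A\Vert\,\Vert O\Vert$ to the $U_2$ perturbation, so they land on the identical constant. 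What your version buys is modularity and symmetry: the two auxiliary conjugation-perturbation bounds are stated in the norms where they are most natural and are reusable elsewhere. What the paper's version buys is economy of machinery: it never needs the trace-norm H\"older inequality for operator differences, since the only property of $\rho$ it invokes is $|\Tr[\rho M]|\leq\Vert M\Vert$, keeping the entire estimate inside a single operator-norm computation.
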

\begin{proof}
We first obtain the following inequality:
\begin{align*}
    \bigl| \Tr[U \rho U^\dagger B] - \Tr [\tilde{U} \rho \tilde{U}^\dagger \tilde{B}] \bigr| &= \bigl| \Tr[\rho (U^\dagger B U - \tilde{U}^\dagger \tilde{B}\tilde{U})] \bigr|\\
    &\leq \Vert U^\dagger B U - \tilde{U}^\dagger \tilde{B}\tilde{U} \Vert = \Vert U^\dagger B U - U^\dagger B \tilde{U} + U^\dagger B \tilde{U} - \tilde{U}^\dagger \tilde{B}\tilde{U} \Vert \\
    &\leq \Vert B \Vert \Vert U - \tilde{U} \Vert + \Vert U^\dagger B - \tilde{U}^\dagger \tilde{B} \Vert \\
    &\leq \Vert B \Vert \Vert U - \tilde{U} \Vert + \Vert B - \tilde{B} \Vert + \Vert \tilde{B} \Vert \Vert U^\dagger - \tilde{U}^\dagger \Vert.
\end{align*}
Entering $U=U_1$, $\tilde{U}=\tilde{U}_1$, $B=i[A, U_2^\dagger O U_2]$, and $B=i[A, \tilde{U}_2^\dagger O \tilde{U}_2]$ to the above inequality yields
\begin{align}
    \bigl| i\Tr[U_1 \rho U_1^\dagger [A, U_2^\dagger O U_2]] - i\Tr[\tilde{U}_1 \rho \tilde{U}_1^\dagger [A, \tilde{U}_2^\dagger O \tilde{U}_2]] \bigr| \leq 4 \epsilon \Vert A \Vert \Vert O \Vert + \Vert B - \tilde{B} \Vert.
\end{align}
In addition, 
\begin{align}
    \Vert B - \tilde{B} \Vert = \Vert [A, (U_2^\dagger O U_2 - \tilde{U}_2^\dagger O \tilde{U}_2)] \Vert \leq 4 \Vert A \Vert \Vert O \Vert \Vert U_2 - \tilde{U}_2 \Vert.
\end{align}
Combining all these inequalities, we obtain the desired result.
\end{proof}

We conclude this subsection with a remark on the scaling of $J$.
From the definition of $H^{(i)}$ given in Eq.~\eqref{eq:ham_layer_i}, $J=\Theta(1)$ is naturally obtained without additional assumption when each pair of $G^{(i,j)}$ and $G^{(i,j')}$ overlaps a finite number of sites, i.e., $|\mathrm{supp}(G^{(i,j)}) \cap \mathrm{supp}(G^{(i,j')})| = O(1)$ regardless of $i,j,j'$.
This condition is naturally satisfied for circuits with geometrically local connectivity (see the next subsection).

\subsection{Converting the hardware efficient ansatz to a circuit with parameterized entangling gates}
The main limitation of the results in the previous subsection is that a circuit must be given by Eq.~\eqref{eq:generalized_paramerized_circuit}, all gates of which are parameterized.
Thus, to apply Theorem~\ref{thm:smal_param_constant_grad_bound_app} to the hardware efficient ansatz (HEA), defined in the main text, we need to remove non-parameterized entangling gates from the HEA.
Recall the definition of HEA given in the main text, given by $U(\pmb{\theta})=\lprod_{i=1}^p V(\pmb{\theta}_{i,:})$ where
\begin{align}
    V(\pmb{\theta}_{i,:}) = \prod_{\braket{j, j'} \in E} \mathrm{CZ}_{j,j'}\prod_{j=1}^N e^{-i Z_j \theta_{i,j+N}/2} \prod_{j=1}^N e^{-i X_j \theta_{i,j}/2}.
\end{align}

In the main text, we introduced a circuit identity given by 
\begin{align}
    \prod_{\braket{j, j'} \in E} \mathrm{CZ}_{j,j'} \prod_{j=1}^N e^{-i X_j \theta_{i,j}/2} = \prod_{j=1}^N e^{-i \Lambda_j \theta_{i,j}/2} \prod_{\braket{j, j'} \in E} \mathrm{CZ}_{j,j'}
\end{align}
where $\Lambda_j = X_j \prod_{l \in \mathcal{N}(j)}$ and $\mathcal{N}(j) = \{j': \braket{j,j'} \in E\}$ is the neighbors of $j$ in a given interaction graph.
Our identity follows from
\begin{align}
    \prod_{\braket{j, j'} \in E} \mathrm{CZ}_{j,j'} X_k \prod_{\braket{j, j'} \in E} \mathrm{CZ}_{j,j'} = X_k \prod_{l \in \mathcal{N}(k)} Z_l.
\end{align}

We now use the above identity to remove the CZ gates from the circuit.
Let us first consider the case where $p$ is even.
In this case, we move the CZ gates for each $2i$-th block to the front.
Precisely, we have
\begin{align*}
    &V(\pmb{\theta}_{2i,:})V(\pmb{\theta}_{2i-1,:}) \\
    &= \prod_{\braket{j, j'} \in E} \mathrm{CZ}_{j,j'}\prod_{j=1}^N e^{-i Z_j \theta_{2i,j+N}/2} \prod_{j=1}^N e^{-i X_j \theta_{2i,j}/2}\prod_{\braket{j, j'} \in E} \mathrm{CZ}_{j,j'}\prod_{j=1}^N e^{-i Z_j \theta_{2i-1,j+N}/2} \prod_{j=1}^N e^{-i X_j \theta_{2i-1,j}/2}\\
    &= \prod_{j=1}^N e^{-i Z_j \theta_{2i,j+N}/2} \prod_{\braket{j, j'} \in E} \mathrm{CZ}_{j,j'} \prod_{j=1}^N e^{-i X_j \theta_{2i,j}/2}\prod_{\braket{j, j'} \in E} \mathrm{CZ}_{j,j'}\prod_{j=1}^N e^{-i Z_j \theta_{2i-1,j+N}/2} \prod_{j=1}^N e^{-i X_j \theta_{2i-1,j}/2}\\
    &= \prod_{j=1}^N e^{-i Z_j \theta_{2i,j+N}/2}  \prod_{j=1}^N e^{-i \Lambda_j \theta_{2i,j}/2} \prod_{j=1}^N e^{-i Z_j \theta_{2i-1,j+N}/2} \prod_{j=1}^N e^{-i X_j \theta_{2i-1,j}/2}.
\end{align*}
Thus, the HEA with $p$ blocks can be converted to the circuit given by Eq.~\eqref{eq:generalized_paramerized_circuit} with $D=2p$.
In addition, each generator acts on at most $k=1+\max_{i \in [n]}|\mathcal{N}(i)|$ sites where $[n]=\{1,\cdots,N\}$ is a set of all qubits.
When the HEA is defined on a finite-dimensional lattice, $k$ is also constant (independent of $N$).
Thus, Theorem~\ref{thm:smal_param_constant_grad_bound_app} can be directly applied to the resulting circuit.

On the other hand, when $p$ is odd, we move the CZ gates for each $2i-1$-th block to the front.
In this case, the resulting circuit will be
\begin{align*}
    U(\pmb{\theta}) &= \lprod_{i=1}^{\lfloor \frac{p}{2}\rfloor} \Bigl[ \prod_{j=1}^N e^{-i Z_j \theta_{2i+1,j+N}/2}  \prod_{j=1}^N e^{-i \Lambda_j \theta_{2i+1,j}/2} \prod_{j=1}^N e^{-i Z_j \theta_{2i,j+N}/2} \prod_{j=1}^N e^{-i X_j \theta_{2i,j}/2} \Bigr] \\
    &\quad \times \prod_{j=1}^N e^{-i Z_j \theta_{1,j+N}/2}  \prod_{j=1}^N e^{-i \Lambda_j \theta_{1,j}/2} \prod_{\braket{j, j'} \in E} \mathrm{CZ}_{j,j'}.
\end{align*}

We now define a modified initial state $\rho'=\prod_{\braket{j, j'} \in E} \mathrm{CZ}_{j,j'} \ket{\psi_0}\bra{\psi_0} \prod_{\braket{j, j'} \in E} \mathrm{CZ}_{j,j'}$ and the circuit $U'(\pmb{\theta})$ without the first CZ layer.
Note that the number of layers in $U'(\pmb{\theta})$ is $D=2p$.
Then, $C = \braket{\psi_0 | U(\pmb{\theta})^\dagger O U(\pmb{\theta}) | \psi_0} = \Tr[ O U'(\pmb{\theta}) \rho' U'(\pmb{\theta})^\dagger]$ and we can apply Theorem~\ref{thm:smal_param_constant_grad_bound_app} to $U'(\pmb{\theta})$.
Moreover, as $U(\pmb{\theta}=0)=\prod_{\braket{j, j'} \in E} \mathrm{CZ}_{j,j'}$, we obtain
\begin{align}
    g := |\Tr[\rho' [G^{(n,m)},O]]| = |\Tr[U(0)^\dagger \rho U(0) [G^{(n,m)},O]]| = |\partial_{n,m}C |_{\pmb{\theta}=0}.
\end{align}

In summary, we proved the following theorem.
\begin{theorem}[Restatement of Theorem~1 in the main text] \label{thm:hea_small_param_large_grad_app}
    Let $C(\pmb{\theta}) = \braket{\psi(\pmb{\theta})|O|\psi(\pmb{\theta})}$ where $\ket{\psi(\pmb{\theta})}=U(\pmb{\theta})\ket{\psi_0}$ be the cost function.
    Assume that $O$ is either a Pauli-string or $k$-local Hamiltonian, and there exist $n,m$ such that $|\partial_{n,m} C |_{\pmb{\theta}=0} = \Omega(1)$.
    Then, there exists a constant $\gamma > 0$ such that $|\partial_{n,m} C| = \Omega(1)$ when $0 \leq \theta_{i,j} \leq \gamma / (pN)$ is satisfied for all $i$ and $j$.
\end{theorem}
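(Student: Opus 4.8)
The plan is to reduce the claim entirely to Theorem~\ref{thm:smal_param_constant_grad_bound_app}, which already gives a constant lower bound $|\partial_{n,m}C|\ge g/4$ for any fully parameterized circuit of the form in Eq.~\eqref{eq:generalized_paramerized_circuit} provided the per-layer parameter budget obeys $\sum_{i}\theta^{(i)}_{\rm max}=\mathcal{O}(1/N)$. The only thing standing between the HEA and that theorem is the layer of non-parameterized $\mathrm{CZ}$ gates inside each block, so the first task is to rewrite $U(\pmb{\theta})$ as a product of exclusively parameterized layers while tracking how the parameter budget transforms.

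First I would apply the circuit identity of Fig.~\ref{fig:circ_identity}, namely $\prod_{\braket{j,j'}\in E}\mathrm{CZ}_{j,j'}\,X_k\,\prod_{\braket{j,j'}\in E}\mathrm{CZ}_{j,j'} = X_k\prod_{l\in\mathcal{N}(k)}Z_l$, to conjugate the RX layer of each block through its CZ layer. For even $p$ this pairs blocks $2i$ and $2i-1$: pushing the CZ layer of block $2i$ to the front cancels the CZ layer of block $2i-1$, leaving $D=2p$ parameterized layers whose generators are either $Z_j$ or $\Lambda_j=X_j\prod_{l\in\mathcal{N}(j)}Z_l$. For odd $p$ the same cancellation on the odd blocks leaves one residual CZ layer acting on $\ket{\psi_0}$, which I would absorb into a modified initial density matrix $\rho'=\prod\mathrm{CZ}\,\ket{\psi_0}\bra{\psi_0}\prod\mathrm{CZ}$ and run the argument on the CZ-free remainder $U'(\pmb{\theta})$, again a $D=2p$ circuit.

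Next I would verify every hypothesis of Theorem~\ref{thm:smal_param_constant_grad_bound_app} on the converted circuit. The crucial structural checks are mutual commutativity and constant locality of the generators within each layer. Commutativity of the $\Lambda_j$ holds because adjacent $\Lambda_j,\Lambda_{j'}$ anticommute on both site $j$ and site $j'$ (hence commute overall), while non-adjacent ones share only $Z$-type supports; locality holds because on a finite-dimensional lattice each $\Lambda_j$ acts on $k=1+\max_j|\mathcal{N}(j)|=\Theta(1)$ sites, and the same geometric locality forces $J=\mathcal{O}(1)$. The input hypothesis $g:=|\Tr[\rho_0[G^{(n,m)},O]]|$ is identified with $|\partial_{n,m}C|_{\pmb{\theta}=0}=\Omega(1)$, since at $\pmb{\theta}=0$ (or with $\rho'$ in the odd case) the gradient collapses to exactly this commutator expectation.

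Finally I would translate the parameter budget: the constraint $0\le\theta_{i,j}\le\gamma/(pN)$ bounds every converted layer by $\theta^{(i)}_{\rm max}\le\gamma/(pN)$, so $\sum_{i=1}^{2p}\theta^{(i)}_{\rm max}\le 2\gamma/N=\mathcal{O}(1/N)$, which is precisely the envelope required by Theorem~\ref{thm:smal_param_constant_grad_bound_app} after rescaling the constant. Invoking that theorem then delivers $|\partial_{n,m}C|\ge g/4=\Omega(1)$. I expect the main obstacle to be not any single hard estimate but the bookkeeping of the odd-$p$ case together with confirming that the circuit identity preserves \emph{both} the mutual commutativity and the constant locality of the generators; were the $\Lambda_j$ not to remain mutually commuting, the grouping into the Hamiltonian layers $H^{(i)}$ underlying Theorem~\ref{thm:smal_param_constant_grad_bound_app} would break down and the Floquet--Magnus reduction could not be applied.
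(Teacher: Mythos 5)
Your proposal is correct and follows essentially the same route as the paper: use the CZ-conjugation identity to fold each CZ layer into $\Lambda_j$-generated parameterized layers (with pairwise cancellation for even $p$ and absorption of the residual CZ layer into $\rho'$ for odd $p$), then apply Theorem~\ref{thm:smal_param_constant_grad_bound_app} to the resulting $D=2p$-layer circuit with the budget $\sum_i \theta^{(i)}_{\rm max} \leq 2\gamma/N$ and $g$ identified with $|\partial_{n,m}C|_{\pmb{\theta}=0}$. Your explicit verification that the $\Lambda_j$ mutually commute (via double anticommutation on adjacent sites, or equivalently because Clifford conjugation preserves the commutativity of the RX layer) is a detail the paper leaves implicit, but it is the same argument, not a different one.
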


\begin{remark}
    Our technique that converts the HEA to Eq.~\eqref{eq:generalized_paramerized_circuit} works for arbitrary Clifford entangling gates.
\end{remark}

\section{Floquet many-body localization in the hardware-efficient ansatz}
\label{app:floquet-mbl}

In this section, we study conditions when the HEA is in the many-body localized phase.

\subsection{Brief introduction to many-body localization}
Chaotic quantum many-body systems thermalize in the sense that the time average of an observable is the same as its thermal ensemble average (see Refs.~\cite{eisert2015quantum,gogolin2016equilibration} and Refs.~\cite{nandkishore2015many,d2016quantum} for reviews from quantum information and statistical mechanics viewpoints, respectively).
Chaotic systems are often characterized by the level statistics following the Gaussian orthogonal (when the system is time-reversal symmetric) or Gaussian unitary (otherwise) ensemble.
While typical quantum many-body systems are chaotic, there are some counterexamples.
Integrable and Anderson localized systems are the two most well-known traditional non-chaotic systems.
In an integrable system, conserved quantities can be computed analytically, and the energy levels show the Poisson statistics.
As the system has an extensive number of independent conserved quantities, integrable systems do not thermalize to the Gibbs ensemble.
Anderson localized systems are another example with an extensive number of conserved quantities.
Thermalization in these systems is prevented by disorders, and local excitation does not spread over a system.

Recently, many-body localized (MBL) systems have been widely studied as examples of a stable non-chaotic phase.
These systems were first introduced in 1980 by Altshuler et al.~\cite{altshuler1980interaction} as a perturbation to an Anderson localized system, but have gained lots of interest in recent decades as the advances in numerical techniques could reveal interesting properties of the system (see, e.g., Ref.~\cite{alet2018many} for a review).
Similar to the Anderson localization, local disorders are the main ingredients that prevent MBL systems from thermalization.
However, multiple excitations in an MBL system interfere with each other and induce dephasing.
Such interference leads to the logarithmic growth of entanglement~\cite{vznidarivc2008many}, which is a unique property of MBL systems.
In contrast, the entanglement of an Anderson localized system does not grow at all, and that of an integrable system grows linearly.


Information theoretically, MBL systems are characterized by a logarithmic lightcone~\cite{kim2014local}.
For local operators $O_A$ and $O_B$ acting on subsystems $A$ and $B$, respectively, the MBL system satisfies
\begin{align}
    \mathbb{E}_\mu \Vert [O_A, O_B(t)] \Vert \leq c t e^{-\mathrm{dist}(A,B)/\xi},
\end{align}
where the average is taken over the distribution of the disorders, $c$ is a constant, $\mathrm{dist}(A,B)$ is the distance between $A$ and $B$ for a given lattice, and $\xi$ is the localization length.
This contrasts general local many-body Hamiltonians having a linear lightcone, satisfying
\begin{align}
    \Vert [O_A, O_B(t)] \Vert \leq c \min(|A|, |B|) e^{-a(\mathrm{dist}(A,B)-vt)},
\end{align}
where $|A|$ and $|B|$ are the size of the subsystems, $a$ is a constant, and $v$ is the Lieb-Robinson velocity. 
The Lieb-Robinson velocity quantifies the speed of information propagation in a given system.
Many important properties of MBL systems, such as the absence of transport and the slow growth of entanglement, can be explained using the logarithmic lightcone~\cite{kim2014local}.

A phenomenological model~\cite{huse2014phenomenology} provides an alternative view to understand MBL systems.
In this model, an MBL system is described using quasi-local conserved quantities $\{\tau^{i}_z\}_{i=1}^N$.
Precisely, we expect that the Hamiltonian is written in terms of these operators as
\begin{align}
    H = \sum_{i} J_i \tau_{i}^z + \sum_{i \neq j} J_{ij} \tau_{i}^z \tau_{j}^z +\sum_{\text{all distinct }i,j,k } J_{ijk} \tau_{i}^z \tau_{j}^z \tau_{k}^z +\cdots, \label{eq:app_mbl_phenom}
\end{align}
where the strengths of the many-body interactions $\{J_{ij},J_{ijk},\cdots\}$ decay exponentially with the distance between the sites that the interaction acts on.
Formally, we can write that $J_S \propto e^{-d/\xi}$ where $S \subset [n]$ is a subset of $[n] = \{1,\cdots,N\}$, $d = \max_{i,j \in S}\mathrm{dist}(i,j)$ is the maximum distance between sites in $S$, and $\xi$ is the localization length.
In addition, each $\tau^{i}_z$ has an overlap with $Z_i$ by a constant, i.e., $\tau^{i}_z = a Z_i + \cdots$ where $a$ is a constant independent to $N$.
In other words, there exists a unitary operator $W$ that transforms $\tau^i_z$ to $Z_i$, i.e.,
\begin{align}
    W^\dagger \tau^i_z W = Z_i,
\end{align}
and $W$ is described by a local short-depth circuit.
As a consequence, $W^\dagger H W$ is diagonal in the $Z$-basis, and $W \ket{x}$ becomes the eigenstate of $H$ for any product state $\ket{x}$ in the computational basis.

The MBL phase can also be found in a periodically-driven system~\cite{ponte2015many,zhang2016floquet}.
In this case, all eigenstates of the Floquet operator $U(T)=\mathcal{T}[e^{-i \int_0^T dt H(t)}]$ follow the area law, and $U(nT):=U(T)^n$ shows a logarithmic lightcone.
We also expect that there is an effective Hamiltonian $H_{\rm eff}$ such that $U(T) = e^{-i H_{\rm eff} T}$ and can be written in the form of Eq.~\eqref{eq:app_mbl_phenom}.

\begin{figure*}
    \centering
    \includegraphics[width=0.95\linewidth]{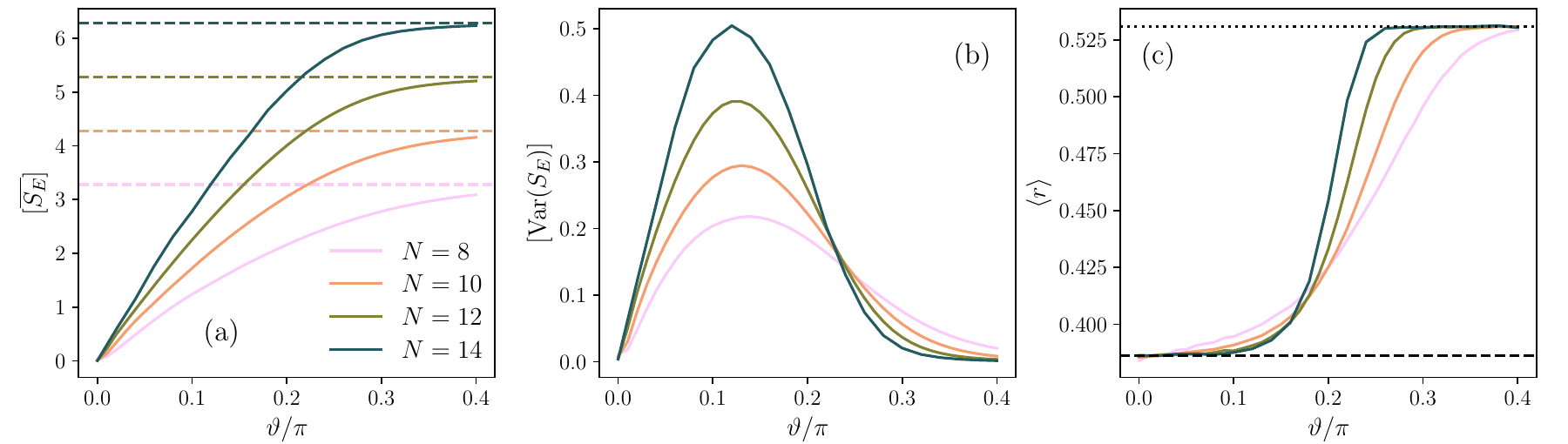}
    \caption{Many-body localization of a unitary operator $\tilde{V}(\theta)$. (a) Half-chain entanglement entropy for eigenstates of $\tilde{V}(\theta)$ as a function of $\theta/\pi$. Results are averaged over all eigenstates and disorder realizations. Dashed horizontal lines indicate the Page entropy, which is expected for Haar random states.
    (b) Variance of the eigenstate entanglement entropy averaged over disorder realizations. For each random instance of $\tilde{V}(\theta)$, we compute $\overline{S_E^2} - \overline{S_E}^2$, and the results are averaged over all instances.
    (c) The averaged adjacent gap ratios. For ordered quasi-energy levels $\{E_i\}$ for each random instance of $\tilde{V}(\theta)$, gaps $\Delta_i = E_{i+1}-E_i$ are obtained. Then, the ratios $r_i=\min\{\Delta_{i+1}/\Delta_i,\Delta_{i}/\Delta_{i+1}\}$ are averaged over $i$ and all random instances.
    Horizontal lines indicate the expected averaged values of $r$ for the Possion (dashed) and the Gaussian orthogonal ensemble (dotted).
    All presented results are obtained from $2^{12}$ random instances for $N \in [8, 10]$, $2^{10}$ for $N=12$, and $2^7$ for $N=14$.}
    \label{fig:mbl_transition}
\end{figure*}

\subsection{Many-body localized hardware-efficient ansatz}
We interpret each block of the 1D HEA as a Floquet operator and study the phases of this operator.
Precisely, we investigate the phases of a unitary operator given by
\begin{align}
    V(\vartheta) = \prod_{j=0}^{N-1} \mathrm{CZ}_{j,j+1} \prod_{j=1}^N e^{-i Z_j \phi_{j}/2} \prod_{j=1}^N e^{-i X_j \vartheta/2},
\end{align}
where each $\phi_j$ is randomly sampled from the uniform distribution between $-\pi$ and $\pi$.
This is the same as each block of the HEA, $V(\pmb{\theta}_{i,:})$, considered in the main text besides that we assign $\theta_{i,j} = \vartheta$ for all $1 \leq j \leq N$.

Note that a recent study by Shtanko et al.~\cite{shtanko2023uncovering} already has shown that the HEA can have the MBL phase within a certain parameter condition for the one-dimensional (theoretically) and the heavy-hexagonal (experimentally) lattices.
However, as the circuit considered in Ref.~\cite{shtanko2023uncovering} is slightly different from ours, we investigate the MBL phase of our circuit model in this subsection.

When $\vartheta=0$, $V$ is diagonal in the $Z$-basis, and all eigenstates of $V$ are product states. This is a characteristic of a fully localized system.
If $\vartheta$ is non-zero but small, all eigenstates are still very close to product states, which is a signature of the MBL phase.
As we increase $\vartheta$, the off-diagonal terms of $V$ also increase, and at a certain value of $\vartheta=\vartheta_c$, the system will become chaotic.
We study such a transition numerically.

For numerical study, we use a shifted version of $V(\vartheta)$, which is given by
\begin{align}
    \tilde{V}(\vartheta) = \prod_{j=1}^N e^{-i X_j \vartheta/4} \prod_{j=0}^{N-1} \mathrm{CZ}_{j,j+1} \prod_{j=1}^N e^{-i Z_j \phi_{j}/2} \prod_{j=1}^N e^{-i X_j \vartheta/4}.
\end{align}
As all eigenstates of $\tilde{V}(\vartheta)$ are real-valued since $\tilde{V}(\vartheta)^T = \tilde{V}(\vartheta)$, numerical diagonalization of $\tilde{V}(\vartheta)$ is more feasible than the original unitary operator $V(\vartheta)$.

To obtain the phase diagram, we utilize the diagnostics developed in Refs.~\cite{kjall2014many,ponte2015many,zhang2016floquet}.
For each random instance of $\tilde{V}(\vartheta)$, we compute the eigenstates and corresponding quasi-energies, defined by
\begin{align}
    \tilde{V}(\vartheta) = \sum_{i=1}^{2^N} e^{-iE_i} \ket{E_i} \bra{E_i},
\end{align}
where each $-\pi \leq E_i \leq \pi$ is a quasi-energy and $\ket{E_i}$ is an eigenstate.
For each eigenstate of $\ket{E_i}$, we compute the half-chain entanglement entropy.
We divide the chain into two subsystems $A=[1,\cdots,N/2]$ and $B=[N/2,\cdots,N]$.
Then, for $\rho_A = \Tr_B[\ket{E_i} \bra{E_i}]$, the entanglement entropy $S_{E_i} = -\Tr[\rho_A \log_2 \rho_A]$ is computed numerically.

We use the following notations for our numerical results.
An overbar indicates the average over eigenstates of each instance of $\tilde{V}(\theta)$, and a bracket is used for the average over disorder realizations.
For example, $\overline{S_E} = 2^{-N} \sum_{i=1}^{2^N} S_A(\ket{E_i})$ where $\ket{E_i}$ is an eigenstate of $V(\theta)$ with quasi-energy $E_i$.

Our first diagnostic is the entanglement of entropy itself.
When a system is chaotic, the entanglement entropy of each eigenstate is close to that of Haar random states, given by the Page entropy $S_{\rm Page} = N/2 - \log_2(e)/2$.
We plot the entanglement entropy averaged over all eigenstates and the disorder realizations, $[\overline{S_E}]$, in Fig.~\ref{fig:mbl_transition}(a).
We observe that the entanglement entropy gets closer to the Page entropy as $\vartheta$ increases, which shows the existence of the chaotic phase.
However, the entanglement entropy does not tell much about the transition point.

To study the transition point, we plot the variance of entanglement entropy, $\mathrm{Var}(S_E) = \overline{S_E^2}-\overline{S_E}^2$, averaged over the disorder realizations in Fig.~\ref{fig:mbl_transition}(b).
The variance indicates the transition point~\cite{kjall2014many}. This is because, near the MBL transition, some eigenstates follow the area law, but others follow the volume law.
For $N=14$, the plot shows the maximum variance is obtained when $\vartheta/\pi \approx 0.13$.

We next study the adjacent gap ratios.
For ordered quasi-energies $\{E_i\}$, we compute its gap $\Delta_i = E_{i+1}-E_i$ and their ratios $r_i=\min\{\Delta_{i+1}/\Delta_i,\Delta_{i}/\Delta_{i+1}\}$.
We then compute $\braket{r}$, the average of $r_i$ for all $i$ and the disorder realizations.
When the system is in a localized phase, we expect $\braket{r} \approx 0.39$, which is the value expected for the Poisson distribution.
On the other hand, $\braket{r} \approx 0.53$ in a chaotic phase, which is from the Gaussian orthogonal ensemble (GOE).
The averaged ratios, $\braket{r}$, for $N\in[8, 10, 12, 14]$ are plotted as functions of $\vartheta/\pi$ in Fig.~\ref{fig:mbl_transition}(c).
We observe that $\braket{r}$ is close to that of the Poisson distribution when $\vartheta$ is small and becomes that of GOE when $\vartheta/\pi \gtrsim 0.25$.
In addition, the plots of $\braket{r}$ for $N\in[10,12,14]$ cross near $\vartheta/\pi \approx 0.16$.

In summary, the system is in the MBL and the chaotic phases when $\vartheta < \vartheta_c$ and $\vartheta > \vartheta_c$, respectively.
The phase transition point is $0.13 \lesssim \vartheta_c / \pi \lesssim 0.16$.

\subsection{Product of Floquet-MBL systems is an MBL system}
In the previous subsection, we studied the phase of a single Floquet operator $V(\vartheta)$.
However, the circuit we used in the main text is a product of $V(\vartheta)$ with different values of $\vartheta$, given by $U=V(\vartheta_p)\cdots V(\vartheta_1)$.
It is less obvious whether such $U$ must be in the MBL phase (recall that a product unitary operators generated by time-independent Hamiltonians does not necessarily conserve the energy).
In this subsection, we argue that $U$ is also in the MBL phase under the following conjecture:

\begin{conjecture}\label{conj:log_lightncone_mbl}
    A logarithmic lightcone is a sufficient condition for the MBL phase.
\end{conjecture}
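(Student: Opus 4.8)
The plan is to derive the full MBL phenomenology from the logarithmic lightcone by constructing, from the dynamics alone, a complete set of quasi-local conserved quantities, since the existence of such l-bits is precisely what underlies the phenomenological form in Eq.~\eqref{eq:app_mbl_phenom}. Working with the Floquet operator $U$, the natural candidates are the dephased single-site operators
\[
    \bar{Z}_i = \lim_{N_t\to\infty}\frac{1}{N_t}\sum_{n=0}^{N_t-1} U^{-n} Z_i\, U^{n},
\]
which commute with $U$ by construction. First I would show that each $\bar{Z}_i$ is quasi-local: the weight of the Heisenberg-evolved operator $Z_i(n) = U^{-n} Z_i U^n$ on any region $B$ with $\mathrm{dist}(i,B)\ge d$ is controlled by $\|[O_B, Z_i(n)]\|$, which the logarithmic lightcone bounds by $c\,n\,e^{-d/\xi}$, so that, up to the subtlety discussed below, the averaged operator $\bar{Z}_i$ should inherit exponentially decaying tails in $d$.

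Granting quasi-locality, I would then verify that the family $\{\bar{Z}_i\}_i$ is mutually commuting (immediate, as all commute with $U$, after a Gram--Schmidt dressing to make them commute with one another), retains $\Theta(1)$ overlap with the bare $Z_i$ so that a finite-depth quasi-local $W$ realizes $W^\dagger \tau^z_i W = Z_i$, and is complete in the sense of generating a maximal commuting algebra. Expanding the effective generator $H_{\mathrm{eff}}$ of $U=e^{-iH_{\mathrm{eff}}T}$ in this basis then reproduces Eq.~\eqref{eq:app_mbl_phenom}, with the exponentially decaying couplings $J_S$ inherited from the tails of the $\bar{Z}_i$; this is the l-bit characterization of MBL. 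As an independent check, the same lightcone bound limits the rate of entanglement generation and should yield area-law eigenstates and Poisson level statistics, matching the diagnostics of Fig.~\ref{fig:mbl_transition}.

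The hard part---and the reason this is stated as a conjecture rather than a theorem---is making the quasi-locality step rigorous. Because the bound $c\,n\,e^{-d/\xi}$ grows linearly in the number of periods $n$, a termwise estimate of the time average gives $\tfrac{1}{N_t}\sum_{n} c\,n\,e^{-d/\xi}\sim \tfrac{1}{2}c\,N_t\,e^{-d/\xi}$, which does not converge to a genuinely localized operator: formally the lightcone reaches every site as $N_t\to\infty$. Closing this gap requires exploiting that the bound is saturated only once the lightcone reaches distance $d$, i.e.\ for $n\gtrsim e^{d/\xi}$, and arguing that these late-time contributions dephase and cancel under the average rather than accumulating---equivalently, upgrading the commutator bound to an operator-weight bound that is summable in $n$. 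I expect this cancellation cannot be controlled from the bare lightcone alone and needs extra structural input (such as an absence-of-resonances condition), which is why we invoke the statement as a conjecture and instead use it downstream: once accepted, it suffices to argue that the product $U=V(\vartheta_p)\cdots V(\vartheta_1)$ still has a logarithmic lightcone to conclude that it too lies in the MBL phase.
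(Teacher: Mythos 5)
The first thing to note is that the paper does \emph{not} prove this statement: it is stated as Conjecture~\ref{conj:log_lightncone_mbl} precisely because, as the paper itself says, there is no mathematically rigorous proof yet, and a proof is explicitly deferred to future work. So there is no paper proof to compare your attempt against; the only question is whether your argument closes an open problem, and it does not.

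Your sketch follows the standard l-bit construction (infinite time average of the Heisenberg-evolved $Z_i$), and you correctly identify the fatal obstruction yourself: the lightcone hypothesis only gives $\Vert [O_B, Z_i(n)] \Vert \leq c\, n\, e^{-d/\xi}$, so the termwise estimate of the average, $\frac{1}{N_t}\sum_{n=0}^{N_t-1} c\, n\, e^{-d/\xi} \sim \frac{1}{2} c\, N_t\, e^{-d/\xi}$, diverges as $N_t \to \infty$, and quasi-locality of $\bar{Z}_i$ cannot be extracted from the bound alone. The dephasing or cancellation you would need is exactly the ``extra structural input'' (absence of resonances) that the hypothesis does not supply; since this step is conceded rather than proven, the proposal is a plan, not a proof --- which is consistent with the statement's status as a conjecture. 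Two further steps you treat as routine are also genuine gaps: (i) mutual commutativity of the $\bar{Z}_i$ is not immediate --- each commutes with $U$, but the commutant of $U$ is non-abelian, and any ``Gram--Schmidt dressing'' that makes them commute pairwise must be shown to preserve quasi-locality, which is essentially as hard as the original problem; (ii) completeness (that the dressed operators generate a maximal abelian algebra with $2^N$ functionally independent labels) and the $\Theta(1)$ overlap with the bare $Z_i$ again require resonance-free structure beyond the lightcone bound. In short, your write-up is a reasonable account of why the implication is believed and of where any proof must fight, but it establishes nothing beyond the conjecture itself --- and the paper, deliberately, does no better.
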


Since each $V(\vartheta_i)$ is in the MBL phase, we can find an effective Hamiltonian given by Eq.~\eqref{eq:app_mbl_phenom}.
In other words, we assume that there are MBL Hamiltonians $H^{(i)}_{\rm MBL}$ such that $V(\vartheta_i) = \exp[-i H^{(i)}_{\rm MBL} T]$ for some $T$.
Then, $U$ can be written as $U=\mathcal{T}[e^{-i\int_0^{pT} dt H(t)}]$ where $H(t)$ is defined by
\begin{align}
    H(t) = \begin{cases}
        H^{(1)}_{\rm MBL}& \text{ for } 0 \leq t < T \\
        H^{(2)}_{\rm MBL}& \text{ for } T \leq t < 2T \\
        &\cdots\\
        H^{(p)}_{\rm MBL}& \text{ for } (p-1)T \leq t < pT 
    \end{cases}.
\end{align}
As each $H^{(i)}_{\rm MBL}$ has a logarithmic lightcone, so does $H(t)$.

Therefore, under Conjecture~\ref{conj:log_lightncone_mbl}, we conclude that $U$ is in the MBL phase.
Conjecture~\ref{conj:log_lightncone_mbl} is widely accepted as a logarithmic lightcone explains many important properties of the MBL systems.
However, since there is no mathematically rigorous proof yet, a proof of Conjecture~\ref{conj:log_lightncone_mbl} might be studied in future work.

\subsection{The MBL phase of the hardware efficient ansatz with mutually commuting entangling gates}

So far, we have considered the MBL phase of the HEA composed of single-qubit RX, RZ, and CZ entangling gates.
In this subsection, we show that the HEA with commuting entangling gates can also have the MBL phase when full single-qubit rotation gates are allowed in the ansatz.

We consider a circuit $U(\pmb{\alpha}, \pmb{\beta}, \pmb{\gamma}) = \lprod_{i=1}^p V(\pmb{\alpha}_{i,:},\pmb{\beta}_{i,:},\pmb{\gamma}_{i,:})$, where each $V(\pmb{\alpha}_{i,:},\pmb{\beta}_{i,:},\pmb{\gamma}_{i,:})$ is given by
\begin{align}
    V(\pmb{\theta}_{i,:}) = \prod_{\braket{j,j'} \in E} W_{j,j'} \prod_{j=1}^N R(\alpha_{i,j},\beta_{i,j},\gamma_{i,j})
\end{align}
where $W_{j,j'}$ is an entangling gate and $R(\alpha_{i,j},\beta_{i,j},\gamma_{i,j})$ fully generate $\mathrm{SU}(2)$. Without loss of generality, we can assume that $\alpha_{i,j},\beta_{i,j},\gamma_{i,j}$ are the Euler angles.

We now assume that $W_{j,j'}$ are mutually commuting, i.e., $[W_{j,j'},W_{\tilde{j},\tilde{j}'}]=0$ for all $j,j',\tilde{j},\tilde{j}'$.
From Ref.~\cite{cubitt2016complexity}, we can find a local unitary operator $T \in \mathrm{SU}(2)$ such that $ W_{j,j'}  = T^{\otimes 2} D_{j,j'} (T^\dagger)^{\otimes 2}$ where $D_{j,j'}$ is a diagonal entangling gate.
We then obtain
\begin{align}
    V(\pmb{\theta}_{i,:}) = \prod_{\braket{j,j'}\in E}W_{j,j'} \prod_{j=1}^N R(\alpha_{i,j},\beta_{i,j},\gamma_{i,j}) = T^{\otimes N} \prod_{\braket{i,j}}D_{i,j}  (T^\dagger)^{\otimes N}  \prod_{j=1}^N R(\alpha_{i,j},\beta_{i,j},\gamma_{i,j}).
\end{align}
Further, by finding $\alpha_{i,j}',\beta_{i,j}',\gamma_{i,j}'$ such that $R(\alpha_{i,j}',\beta_{i,j}',\gamma_{i,j}') = T^\dagger R(\alpha_{i,j},\beta_{i,j},\gamma_{i,j}) T$, we can write
\begin{align}
    U = T^{\otimes N} \lprod_{i=1}^p V(\pmb{\alpha}_{i,:}',\pmb{\beta}_{i,:}',\pmb{\gamma}_{i,:}') (T^\dagger)^{\otimes N}.
\end{align}
Since the dynamic phase of the unitary operator does not depend on the local basis transformation, we can ignore the initial and final $T$ layers. 
Moreover, from the Euler decomposition, $R(\alpha_{i,j}',\beta_{i,j}',\gamma_{i,j}') = R_Z(\alpha_{i,j}')R_X(\beta_{i,j}')R_Z(\gamma_{i,j}')$.
By moving each $R_Z(\gamma_{i,j}')$ to the $i-1$-th block and combining it to $R_Z(\alpha_{i-1,j}')$, the phase of the circuit can be determined by the phase of the following unitary operator:
\begin{align}
    \widetilde{V} = \prod_{j=1}^N e^{-iZ_j \gamma_j/2} \prod_{\braket{j,j} \in E} D_{j j'} \prod_{j=1}^N e^{-i Z_j \alpha_j/2} \prod_{j=1}^N e^{-i X_j \beta_j/2},
\end{align}
where we renamed $\alpha',\beta',\gamma'$ to $\alpha,\beta,\gamma$ for convenience.

Since $D$ is a diagonal two-qubit gate, we can find $a,b,c,d \in \mathbb{R}$ such that 
\begin{align}
    D = \exp\Bigl[ i \bigl\{ a \Id \otimes \Id + b Z \otimes \Id + c \Id \otimes Z + d Z \otimes Z \bigr\} \Bigr].
\end{align}
By choosing $\gamma_j = 2 |E_{\rm in}(j)|b + |E_{\rm out}(j)|c$, where $E_{\rm in/out}(j)$ is the set of edges bounding to/from $i$, we can rewrite
\begin{align}
    \widetilde{V} = \prod_{\braket{j,j} \in E} e^{i d Z_j Z_{j'}} \prod_{j=1}^N e^{-i Z_j \alpha_j/2} \prod_{j=1}^N e^{-i X_j \beta_j/2}
\end{align}
up to a global phase.
This is nothing but the disordered Kicked Ising model studied in Refs.~\cite{ponte2015many,zhang2016floquet}.
Therefore, we can find parameters $\{\alpha_j\}$ and $\{\beta_j\}$ such that $\widetilde{V}$ is in the MBL phase, and so does $V(\pmb{\theta}_{i,:})$.

\section{Derivation of the gradient scaling in the MBL system} \label{app:grad_scaling_mbl}

In this section, we derive the scaling of gradients when the HEA is in the MBL system.
Our main tool is the phenomenological model introduced in Sec.~\ref{app:floquet-mbl}.

\subsection{Deriving the expressions of gradients for a single Pauli-Y and a multi-body observable}
In the main text, we considered the gradient component for $\theta_{p,1}$.
The expression involves the commutator $[X_1,\widetilde{Y_1}]$ where $\widetilde{Y_1} = V(\pmb{\theta}_{p,:})^\dagger Y_1 V(\pmb{\theta}_{p,:})$.
Then, $[X_1,\widetilde{Y_1}]$ is expanded as the sum of Pauli strings.
In this subsection, we derive this result.

As in the main text, we assign $\theta_{p,j} = \vartheta_p$ for $1 \leq j \leq N$.
From the definition of $\widetilde{Y_1}$, we have
\begin{align*}
    \widetilde{Y_1} &:= V(\pmb{\theta}_{p,:})^\dagger Y_1 V(\pmb{\theta}_{p,:}) = \prod_{j=1}^N e^{i X_j \vartheta_p/2} \prod_{j=1}^N e^{i Z_j \theta_{p,j+N}/2} \prod_{\braket{j, j'} \in E} \mathrm{CZ}_{j,j'} Y_1 \prod_{\braket{j, j'} \in E} \mathrm{CZ}_{j,j'} \prod_{j=1}^N e^{-i Z_j \theta_{p,j+N}/2} \prod_{j=1}^N e^{-iX_j \vartheta_p/2} \\
    &= e^{i X_1 \vartheta_p/2} e^{i Z_1 \theta_{p,N+1}/2} Y_1 e^{-i Z_1 \theta_{p,N+1}/2} e^{-i X_1 \vartheta_p/2} \prod_{j \in \mathcal{N}(i)} e^{i X_j \vartheta_p/2} Z_j e^{-i X_j \vartheta_p/2}, \numberthis \label{eq:tilde_y_def}
\end{align*}
where we used $\mathrm{CZ}_{j,j'} Y_1 \prod_{\braket{j, j'} \in E} \mathrm{CZ}_{j,j'} = Y_1 \prod_{j \in \mathcal{N} (i)} Z_j$, which is from the property of the CZ gate.

From the property of the rotation, we additionally have
\begin{align}
    e^{i X_1 \vartheta_p/2} e^{i Z_1 \theta_{p,N+1}/2} Y_1 e^{-i Z_1 \theta_{p,N+1}/2} e^{-i X_1 \vartheta_p/2} &=  e^{i X_1 \vartheta_p/2}\bigl[\cos(\theta_{p,N+1})Y_1 + \sin(\theta_{p,N+1})X_1\bigr] e^{-i X_1 \vartheta_p/2} \nonumber \\
    &= \cos(\theta_{p,N+1}) \bigl[ \cos(\vartheta_p) Y_1 - \sin(\vartheta_p) Z_1 \bigr] + \sin(\theta_{p,N+1})X_1
\end{align}
and
\begin{align}
    e^{i X_j \vartheta_p/2} Z_j e^{-i X_j \vartheta_p/2} = \cos(\vartheta_{p}) Z_j + \sin(\vartheta_p) Y_j.
\end{align}

Inserting these expressions to Eq.~\eqref{eq:tilde_y_def} yields
\begin{align}
    \widetilde{Y_1} = \Bigl\{ \cos(\theta_{p,N+1}) \bigl[ \cos(\vartheta_p) Y_1 - \sin(\vartheta_p) Z_1 \bigr] + \sin(\theta_{p,N+1})X_1 \Bigr\} \prod_{j \in \mathcal{N}(1)} \bigl[\cos(\vartheta_{p}) Z_j + \sin(\vartheta_p) Y_j \bigr]. \label{eq:tilde_y_pstr}
\end{align}

As a consequence, we obtain
\begin{align}
    [X_1, \widetilde{Y_1}] = 2i \cos(\theta_{p,N+1}) \bigl[  \cos(\vartheta_{p}) Z_1 +  \sin(\vartheta_{p}) Y_1  \bigr]  \prod_{j \in \mathcal{N}(1)} \bigl[\cos(\vartheta_{p}) Z_j + \sin(\vartheta_p) Y_j \bigr].
\end{align}

Now, let us consider $O=Y_1 \prod_{j=1}^N Z_j$. Following a similar step, we obtain
\begin{align}
    \widetilde{O} &= V(\pmb{\theta}_{p,:})^\dagger \Bigl[ Y_1 \prod_{j=1}^N Z_j \Bigr] V(\pmb{\theta}_{p,:}) \nonumber \\ 
    &= e^{i X_1 \vartheta_p/2} e^{i Z_1 \theta_{p,N+1}/2} Y_1 e^{-i Z_1 \theta_{p,N+1}/2} e^{-i X_1 \vartheta_p/2} \prod_{j \in S} e^{i X_j \vartheta_p/2} Z_j e^{-i X_j \vartheta_p/2}, \numberthis \label{eq:tilde_yprodz_def}
\end{align}
where $S=\{2,\cdots\,N\} \setminus \mathcal{N}(1)$. Therefore, the commutator becomes
\begin{align}
    [X_1, \widetilde{O}] = 2i \cos(\theta_{p,N+1}) \bigl[  \cos(\vartheta_{p}) Z_1 +  \sin(\vartheta_{p}) Y_1  \bigr]  \prod_{j \in S} \bigl[\cos(\vartheta_{p}) Z_j + \sin(\vartheta_p) Y_j \bigr]. \label{eq:comm_x_yzz}
\end{align}

\begin{figure}
    \centering
    \includegraphics[width=0.95\linewidth]{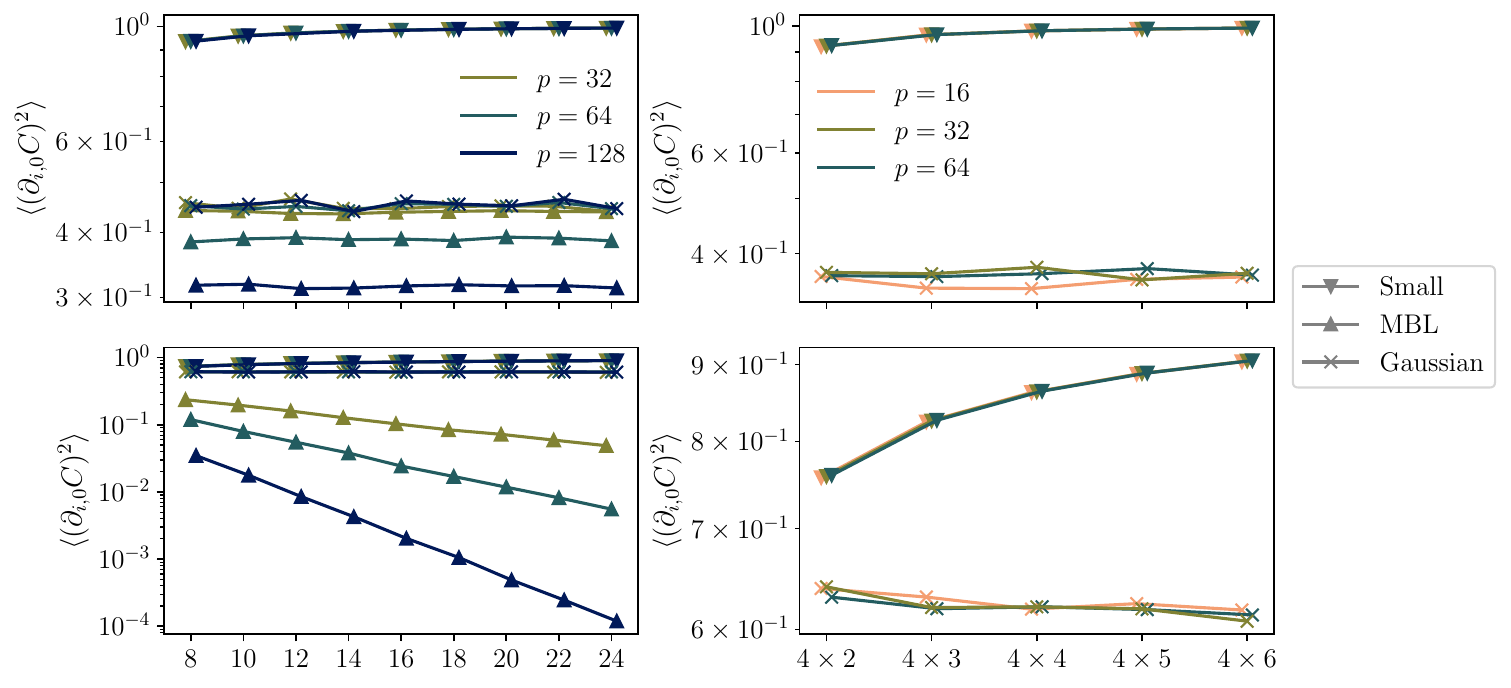}
    \caption{Scaling of gradients for the 1D (left column) and 2D HEAs (right colume) with observables $O=Y_1$ (first row) and $O=Y_1\prod_{j=2}^N Z_j$ (second row). The number of blocks $p\in [32,64,128]$, $p \in [16, 32,64]$ are used for the 1D and 2D HEA, respectively.
    The weight of the observable is given by $S=1$ for $O=Y_1$ and $S=N$ for $O=Y_1\prod_{j=2}^N Z_j$.
    }
    \label{fig:compare-grads-gaussian}
\end{figure}

\subsection{Long-time limit of gradients}
In this subsection, we derive the long-time limit of the gradient component for $\theta_{p,1}$, which is given by
\begin{align}
    \partial_{p,1}C = \frac{i}{2}\braket{0^N|\prod_{i=1}^{N-1} V(\pmb{\theta}_{i,:})^\dagger [X_1, \widetilde{Y_1}] \lprod_{i=1}^{N-1} V(\pmb{\theta}_{i,:}) |0^N},
\end{align}
where $\prod_{i=1}^k U_i = U_1 \cdots U_k$ and $\lprod_{i=1}^k U_i = U_k\cdots U_1$.

Now let us assume that there is a phenomenological model for $U=V(\pmb{\theta}_{N-1,:})\cdots V(\pmb{\theta}_{1,:})$, i.e., there exists a Hamiltonian, $H_{\rm MBL}$, in the form of Eq.~\eqref{eq:app_mbl_phenom} such that $U=e^{-iH_{\rm MBL} (p-1)T}$.
From the definition of the local integrals of motion, $\tau^{i}_z$ has a finite overlap with $Z_i$.
Namely, there is a constant that lower bounds $A_i=\Tr[\tau^i_z Z_i]/2^N$.
In addition, from Eq.~\ref{eq:tilde_y_pstr}, we obtain
\begin{align}
    \mathbb{E}_{\pmb{\theta}}[V(\pmb{\theta})^\dagger Y_iV(\pmb{\theta})] = 0.
\end{align}
Therefore, it is natural to assume that $Y_j$ overlaps little with any products of $\{\tau^i_z\}$.

Let $S$ be a subset of $[N]$ and $Z_S = \prod_{i \in S} Z_i$. Then, the multi-point correlation function after time $t$ is given by
\begin{align}
    \braket{Z_S(t)} = \braket{0^N|e^{iH_{\rm MBL}t} Z_S e^{-iH_{\rm MBL}t}| 0^N}.
\end{align}
Inserting $Z_i = A_i \tau^i_z + O(e^{-\xi^{-1}})$, we obtain
\begin{align}
    \braket{Z_S(t)}&= \braket{0^N|e^{iH_{\rm MBL}t} \prod_{i \in S} Z_i e^{-iH_{\rm MBL}t}| 0^N} \\
    &\xrightarrow{t \gg 1} \braket{0^N|\prod_{i \in S} A_i \tau^i_z | 0^N} \\
    &\approx \braket{0^N|\prod_{i \in S} A_i^2 Z_i | 0^N} =\prod_{i \in S} A_i^2
\end{align}
in a deep localized phase ($\xi \ll 1$). When averaged over the disorder realization, $A^2:=\mathbb{E}_{\pmb{\theta}}[A_i^2]$ becomes independent of $i$.

Similarly, we obtain that the multi-point correlation functions containing $Y_j$ vanish when averaged over the disorders, i.e., correlation functions such as $\braket{Z_1Y_2(t)}$ become $0$ when averaged over the disorders.
Thus, it is also natural to assume that those correlation functions are small for sufficiently large $t$.
In fact, it is known that such a quantity decays polynomial with $t$ until it saturates for each disorder realization~\cite{serbyn2014quantum}, where the saturated values decay with $N$.

In summary, we obtain
\begin{align*}
    \partial_{p,1}C &= -\Bigl\langle 0^N \Bigl\vert e^{iH_{\rm MBL}(p-1)T}\cos(\theta_{p,N+1}) \bigl[  \cos(\vartheta_{p}) Z_1 +  \sin(\vartheta_{p}) Y_1  \bigr]  \\
    & \qquad \times \prod_{j \in \mathcal{N}(1)} \bigl[\cos(\vartheta_{p}) Z_j + \sin(\vartheta_p) Y_j \bigr]e^{-iH_{\rm MBL}(p-1)T} \Bigr\vert 0^N \Bigr\rangle \\
    &\approx -\cos(\theta_{p,N+1}) \cos(\vartheta_{p})^{1+|\mathcal{N}(1)|} \Bigl\langle \prod_{i \in \{1\}\cup \mathcal{N}(1)}Z_i (t)\Bigr\rangle \\
    &\xrightarrow{t \gg 1} -\cos(\theta_{p,N+1}) [\cos(\vartheta_{p})A^2]^{1+|\mathcal{N}(1)|}, \numberthis
\end{align*}
where $\mathcal{N}(1)$ is the set of all sites which is connected to $1$.
The final expression is what we used in the main text.

Following the same arguments, we can compute the gradient for $O=Y_1\prod_{j=2}^N Z_j$. From Eq.~\eqref{eq:comm_x_yzz}, we obtain
\begin{align}
    \partial_{p,1}C &\xrightarrow{t \gg 1} -\cos(\theta_{p,N+1}) [\cos(\vartheta_{p})A^2]^{N-|\mathcal{N}(1)|}.
\end{align}

\section{Numerical results for the 2D hardware efficient ansatz and Gaussian initialization} \label{app:additional_numerical_results}
In this section, we numerically compare our initialization methods with the Gaussian method introduced in Ref.~\cite{zhang2022gaussian}.
In addition to the parameter setups \textbf{Small} and \textbf{MBL}, used in the main text, we add the following. \textbf{Gaussian}: All parameters are sampled from the Gaussian distribution $\mathcal{N}(0, [(Sp)^{-1/2}]^2)$ where $S$ is the weight of the observable.

We plot our results in Fig.~\ref{fig:compare-grads-gaussian}.
The same data as in the main text is used for the 1D HEA with the Small and MBL schemes.
The MBL results are only shown for the 1D HEA as the MBL transition point is unknown for the 2D HEA.
Interestingly, both for 1D and 2D HEAs, we observe that the Gaussian initialization also gives $\Theta(1)$ gradient magnitudes. 
This is not explained by our theorem nor the main result in Ref.~\cite{zhang2022gaussian}.

In general, we expect that there exists a transition point $\tau_c(N)$ such that the averaged gradient magnitudes is $\Theta(1)$ for all $\sum_{ij}|\theta_{ij}| \leq \tau_c(N)$.
Theorem~\ref{thm:hea_small_param_large_grad_app} tells that $\tau_c(N) = \Omega(1/N)$, and our numerical results here suggest that $\tau_c(N) =\Omega(N^{-1/2})$.

A more in-depth numerical study is necessary to locate the exact transition point, including the data collapse analysis.
As such a study is beyond the scope of this work, this question will be addressed in further study.

\section{Machine learning application} \label{app:machine_learning}

In the main text, we have studied the performance of the HEA for solving quantum many-body spin models.
In this section, we solve a classification problem, a typical supervised learning task~\cite{schuld2018supervised}, using the HEA.
We discuss how the performance of the model can be improved using our initialization schemes.

\subsection{Dataset}
We use the MNIST PCA dataset, which is generated by processing the original MNIST dataset using the principal component analysis (PCA).
From the original dataset, we extract pixel data for digits 3 and 5. The data is processed using the PCA with the output dimension of $d=20$.
Training and test sets are obtained by extracting 250 random output vectors.
We also assign each label $\pm 1$ if the digit was $3$ and $5$, respectively.

\subsection{Encoding}
We encode $d$ dimensional data to a circuit using the angle encoding applied to qubits $\{N-d+1, \cdots, N\}$, followed by an entangling layer.
Precisely, for each data point $\pmb{\phi} = \{\phi_{i}\}_{i=1}^{d}$, we first normalized them
\begin{align}
    \widetilde{\phi}_i = \frac{\phi_i}{\Vert \pmb{\phi} \Vert_2},
\end{align}
where $\Vert \pmb{\phi} \Vert_2^2 = \sum_{i=1}^d \phi_i^2$ is the L2 norm.

We then encode the normalized data point using the following encoding gate:
\begin{align}
    E(\pmb{\phi}) = \prod_{i=1}^{N-1} \mathrm{CZ}_{i, i+1} \prod_{i=N-d}^{N} e^{-i\widetilde{\phi}_i X_{i}/2}.
\end{align}

\begin{figure}
    \centering
    \includegraphics[width=0.8\linewidth]{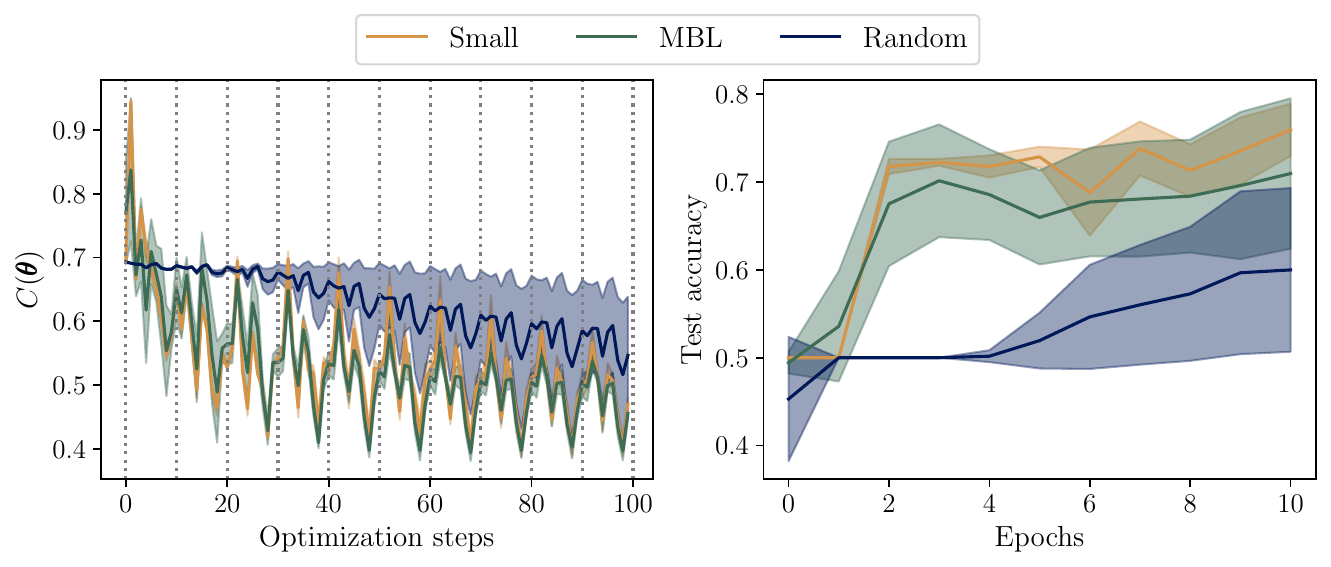}
    \caption{For the MNIST PCA dataset with $d=20$, we train a quantum machine learning model based on the HEA. The parameters of the circuits are initialized following three distributions: Small, MBL, and Random.
    The loss function for each optimization step (Left) and the test accuracy as a function of epochs (Right) are shown. The size of the training set is $250$, and the minibatch size $B=25$ is used, i.e., each epoch has 10 optimization steps. 
    The dotted lines on the left figure indicate each epoch.
    After each epoch, we compute the test accuracy using a test set size of $250$. The shaded regions indicate $[m-\sigma, m+\sigma]$ where $m$ and $\sigma$ are the mean value and the standard deviation obtained from $16$ independent instances, respectively.}
    \label{fig:qml_mnist_pca}
\end{figure}

\subsection{Cost function and gradient}
We use the HEA for the last part of the circuit and $\ket{0}^{\otimes N}$ as the initial state. Namely, the output state of the circuit is given by
\begin{align}
    \ket{\psi(\pmb{\theta}, \pmb{\phi})} = \lprod_{i=1}^d V(\pmb{\theta}_{i,:})E(\pmb{\phi}) \ket{0}^{\otimes N},
\end{align}
where
\begin{align}
    V(\pmb{\theta}_{i,:}) = \prod_{j=1}^{N-1} CZ_{j,j+1}\prod_{j=1}^N e^{-i Z_j \theta_{i,j+N}/2} \prod_{j=1}^N e^{-i X_j \theta_{i,j}/2}.
\end{align}

As we solve a binary classification problem, we interpret the expectation value as the probability as follows:
\begin{align}
    p(\pm 1;\pmb{\theta}, \pmb{\phi}) = \Bigl\langle \psi(\pmb{\theta}, \pmb{\phi}) \Bigl \vert \frac{1\pm Y_1}{2} \Bigr \vert \psi(\pmb{\theta}, \pmb{\phi}) \Bigr \rangle.
\end{align}

For each training iteration, we choose a minibatch of size $B$ from the training set. We denote $\{\pmb{\phi}_{k}\}_{k=1}^B$ by the data vectors and the corresponding labels $\{y_k\}_{k=1}^B$ Then, we use the binary cross entropy loss as the cost function, defined as
\begin{align}
    C(\pmb{\theta}) = - \frac{1}{B} \sum_{k=1}^B \Bigl[ p(y_k = +1) \log p(+ 1;\pmb{\theta}, \pmb{\phi}_k) + p(y_k = -1) \log p(- 1;\pmb{\theta}, \pmb{\phi}_k) \Bigr],
\end{align}
where $p(y_k = +1) = (1+y_k)/2$, which is $1$ if $y_k=1$, and $0$ if $y_k = -1$. The probability $p(y_k=-1)$ is defined similarly.

When the dimension of the number of qubits is larger than the input data, i.e., $N > d$, one can check that the encoding gate does not affect the gradient for the RX gates acting on the first qubit when $\pmb{\theta}=0$. For example, 
\begin{align}
    \partial_{1,1} p(\pm 1;\pmb{\theta}, \pmb{\phi})|_{\pmb{\theta}=0} = \frac{\partial_{1,1} \langle \psi(\pmb{\theta}, \pmb{\phi})| Y_1 |\psi(\pmb{\theta}, \pmb{\phi}) \rangle }{2} \Bigr|_{\pmb{\theta}=0} = \frac{i\braket{0^N|E(\pmb{\phi})^\dagger[X_1, Y_1]E(\pmb{\phi})|0^N}}{2} = -\frac{1}{2},
\end{align}
where the last inequality follows from the fact that $E(\pmb{\phi})$ only acts on qubits $\{N-d+1,\cdots,N\}$.
In addition, by considering $\ket{\psi_0} = E(\pmb{\phi})\ket{0^N}$ as the initial state, Theorem~\ref{thm:hea_small_param_large_grad_app} can be applied to this setup.

When the circuit parameters are completely random, the circuit forms a 1-design, which implies $p(+ 1;\pmb{\theta}, \pmb{\phi}) \approx 1/2$.
Thus, the initial gradient of $C(\pmb{\theta})$ is exponentially small when parameters are initialized following Random, while it has non-zero components for Small and MBL.

\subsection{Numerical results}
We train our quantum machine learning model based on the HEA with $p=128$ by optimizing $C(\pmb{\theta})$ using Adam. 
We use a minibatch size of $B=25$ (thus, total $10$ iterations for each epoch) and a learning rate of $\eta = 0.01$. Default values $\beta_1=0.9$, $\beta_2=0.999$, and $\epsilon=10^{-8}$ are used for other hyperparameters.

We plot the results from our numerical simulation in Fig.~\ref{fig:qml_mnist_pca}. Exactly computed gradients are used to optimize our loss function $C(\pmb{\theta})$.
We observe that the model performs the best with the Small initialization, marginally followed by the MBL. The result from Random is the worst, showing that the loss function decays very slowly, and the test accuracy does not increase over the first few training epochs.
These results are consistent with that for solving many-body Hamiltonians, which we studied in the main text.

\end{document}